\newtheorem{theorem}{Theorem}
\newtheorem{definition}{Definition}
\newtheorem{proposition}{Proposition}
\newtheorem{lemma}{Lemma}
\newtheorem{assumption}{Assumption}
\DeclareMathOperator*{\argmin}{arg\,min}
\begin{document}

\title{Bayes-Nash Generative Privacy Protection Against Membership Inference Attacks}

\author{Tao Zhang, Rajagopal Venkatesaramani, Rajat K. De, Bradley A. Malin, Yevgeniy Vorobeychik
\thanks{T. Zhang and Y. Vorobeychik are with Washington University in St. Louis, St. Louis, MO, USA.  
Email: tz636@nyu.edu; yvorobeychik@wustl.edu

R. Venkatesaramani is with Northeastern University, Boston, MA, USA.  
Email: r.venkatesaramani@northeastern.edu

R. K. De is with the Indian Statistical Institute, Kolkata, India.  
Email: rajat@isical.ac.in

B. A. Malin is with Vanderbilt University, Nashville, TN, USA.  
Email: b.malin@vumc.org}
}

\markboth{Journal of \LaTeX\ Class Files,~Vol.~14, No.~8, August~2021}%
{Shell \MakeLowercase{\textit{et al.}}: A Sample Article Using IEEEtran.cls for IEEE Journals}


\maketitle

\begin{abstract}

Membership inference attacks (MIAs) pose significant privacy risks by determining whether individual data is in a dataset. While differential privacy (DP) mitigates these risks, it has limitations including limited resolution in expressing privacy-utility tradeoffs and intractable sensitivity calculations for tight guarantees. We propose a game-theoretic framework modeling privacy protection as a Bayesian game between defender and attacker, where privacy loss corresponds to the attacker's membership inference ability. To address strategic complexity, we represent the defender's mixed strategy as a neural network generator mapping private datasets to public representations (e.g., noisy statistics) and the attacker's strategy as a discriminator making membership claims. This \textit{general-sum Generative Adversarial Network} trains iteratively through alternating updates, yielding \textit{Bayes-Nash Generative Privacy (BNGP)} strategies. BNGP avoids worst-case privacy proofs such as sensitivity calculations, supports correlated mechanism compositions, handles heterogeneous attacker preferences. Empirical studies on sensitive dataset summary statistics show our approach significantly outperforms state-of-the-art methods by generating stronger attacks and achieving better privacy-utility tradeoffs.

\end{abstract}

\begin{IEEEkeywords}
Article submission, IEEE, IEEEtran, journal, \LaTeX, paper, template, typesetting.
\end{IEEEkeywords}

\section{Introduction}

Data-driven decision-making in many fields, including healthcare, education, and finance, often relies on sensitive data, such as medical records and financial transactions.
This gives rise to privacy risks to the individuals comprising these datasets, and, in turn, both financial (due to regulatory oversight) and reputation risks to the organizations hosting and working with such datasets.
As a consequence, a large literature has emerged that aims to develop technical approaches for enabling the use and sharing of sensitive data while preserving privacy~\cite{malin2005evaluation,wan2022sociotechnical,wang2010privacy,xie2021generalized}, with differential privacy (DP) among the leading approaches to this end~\cite{alghamdi2022cactus,dwork2006differential,dwork2006calibrating,dwork2010boosting,stock2022defending}.

A particularly notable class of privacy risks that has received considerable attention in the literature involves \emph{membership inference attacks (MIAs)}, in which an attacker can effectively determine whether particular individuals are included in a given dataset~\cite{bu2021haplotype,hagestedt2020membership,niu2024survey,shokri2017membership,wu2024better}.
To the extent that membership itself is sensitive (for example, for clinical data collected from individuals with a particular sensitive disease or who have undergone a particular treatment procedure), exposure to MIA is a significant privacy concern.
MIA attacks have been especially salient in studies involving summary data releases, such as releasing histograms or summary statistics of numerical features~\cite{li2018privacy,macarthur2021workshop}, particularly when the data are high-dimensional, such as the sharing of genomic data~\cite{homer2008resolving,macarthur2021workshop,sankararaman2009genomic,venkatesaramani2023defending,yilmaz2022genomic}.
On the one hand, sharing sensitive data broadly in summary form can be of considerable value in science (such as the sharing of summary genomic data in clinical science~\cite{sankararaman2009genomic,venkatesaramani2023enabling}).
On the other hand, doing so naively is known to lead to a significant privacy risk~\cite{homer2008resolving,shringarpure2015privacy}.
While numerous approaches for mitigating such privacy risks have been proposed~\cite{liu2024mitigating,nasr2018machine}, including the application of DP~\cite{truex2019effects,ha2023differential}, these typically are either limited to particular attack models, such as likelihood-based inference (e.g.,~\cite{venkatesaramani2023defending,venkatesaramani2023enabling}), result in significant degradation of data utility, or provide only weak privacy guarantees.
For example, typical techniques that apply DP in summary data sharing provide for a relatively coarse tradeoff between data utility and privacy, so that these either lose meaningful privacy guarantees in high dimensions (e.g., setting $\epsilon$ high in $\epsilon$-DP~\cite{erlingsson2014rappor,venkatesaramani2023defending,yilmaz2022genomic}), or degrade data quality rendering it unusable.

An important fundamental challenge in applying DP-based approaches in high-dimensional settings is the computational intractability involved in obtaining sufficiently tight privacy bounds.
For example, sensitivity is an important tool in obtaining practical DP mechanisms, but is NP-hard to compute in general~\cite{xiao2008output}, while computing tight privacy bounds under composition of multiple mechanisms is \#P-complete \cite{murtagh2015complexity}.
These challenges are exacerbated in complex tasks requiring repeated dataset accesses, where optimal composition bounds for correlated mechanisms remain an open problem. Consequently, designing effective privacy-preserving mechanisms for summary data sharing in high dimensions that balance privacy and utility (such as for sharing genomic data) necessitates navigating non-trivial trade-offs between computational feasibility, theoretical tightness, and data utility, and practical performance.

To address these limitations, we introduce a non-parametric framework, termed \textit{Bayes-Nash generative privacy} (BNGP), for sharing summary statistics of high-dimensional data that achieves a pragmatic tradeoff between utility and privacy.
BNGP integrates game-theoretic and information-theoretic methods to robustly quantify privacy risks by directly formulating an optimization problem that captures the privacy-utility trade-off through a novel \textit{Bayes generative privacy risk} (BGP risk) metric.
A key advantage of BNGP over common DP approaches is that it avoids reliance on worst-case privacy proofs such as sensitivity calculations and combinatorial composition analysis.

More precisely, we model the task of balancing the privacy and utility of sharing summary data against MIAs as a game between a \textit{defender} and an \textit{attacker}.
In this game, the defender aims to optimize the privacy-utility tradeoff (with associated parameters that explicitly represent this tradeoff).
One way to formalize an attacker's utility is explicitly trading off expected TPR and FPR, where expectation is with respect to a (subjective) prior distribution over dataset membership.
We can represent this utility using a Bayes-weighted membership advantage (BWMA), a generalization of the previously proposed \emph{membership advantage (MA)} metric~\cite{yeom2018privacy}.
Proposition \ref{prop:weighted_membership_advantage} shows that BWMA is fully characterized by a simple loss function.
We assume that the defender knows the prior distribution over the datasets, whereas the attacker's prior may be subjective (and possibly incorrect).

A key limitation of such attack models, common in game-theoretic security approaches, is assuming the defender knows both the attacker's weights for trading off TPR and FPR and their subjective prior.
To address this, we propose a novel \textit{BGP risk}, defined as the minimum expected cross-entropy achievable by any attacker, independent of preferences between true and false positives.
We use negative BGP risk to quantify privacy loss and define \textit{BNGP} as the \textit{Bayes-Nash equilibrium} where the defender maps datasets to probability distributions over shared statistics, while the attacker maps observed statistics to membership inference claims.
Theorem \ref{thm:BNGP} shows that the defender's BNGP strategy (henceforth, \textit{BNGP(D)}) minimizes worst-case privacy loss independent of attacker preferences, and Section~\ref{sec:subjective_prior} demonstrates robustness to any subjective prior.
The BNGP framework satisfies post-processing (Proposition \ref{prop:post_processing}) and composition properties (Theorem \ref{prop:composition_risk}), enabling privacy optimization for complex, correlated mechanisms beyond differential privacy's independent mechanism assumptions.

At face value, even the problem of representing the space of defender and attacker strategies in this framework is intractable, let alone one of computing an equilibrium.
We address this by representing defender strategies as a parametric differentiable (e.g., neural network) generator, while the attacker's strategy is represented as a probabilistic classifier mapping target individuals and observed summary statistics to a membership inference decision.
We train these jointly through alternating gradient updates. This approach resembles generative adversarial networks (GANs) but differs crucially in being general-sum rather than zero-sum, leading us to term it a \textit{general-sum GAN} with the defender as generator. 
Our analysis shows that approximation error decreases at a quantifiable rate as the attacker's neural network capacity increases, guaranteeing that sufficient network capacity enables learned attacker behavior to accurately approximate true worst-case risk, ensuring our approach's robustness.

We demonstrate the efficacy of the proposed approach in the context of sharing summary statistics on high-dimensional datasets with binary attributes. 
In particular, we compared \emph{BNGP(D)} with state-of-the-art baselines including $\epsilon$-DP, as well as the mixed-strategy MIA obtained in the process (\emph{BNGP(A)}) to associated state-of-the-art baselines.
Our experiments show that (a) \emph{BNGP(A)} is indeed the strongest MIA approach in this setting in practice, outperforming all baselines in MIA efficacy, and (b) \emph{BNGP(D)} outperforms baseline defenses in terms of privacy-utility tradeoff (in particular, privacy for a given target utility) when faced with \emph{BNGP(A)} attacks.

We summarize our contributions:
\begin{itemize}
    \item[(1)] In Section \ref{sec:Bayesian_game}, we model the optimal privacy-utility trade-off as a Bayesian game, where the defender can customize the objective function under a mild assumption. We further extend the conventional membership advantage (MA) metric to a Bayes-weighted MA, which accounts for the attacker’s arbitrary preferences over the true positive rate (TPR) and false positive rate (FPR) while also incorporating any subjective prior knowledge about the true membership status. 
    
    \item[(2)] We introduce a non-parametric framework, termed Bayes-Nash generative privacy (BNGP), which integrates game-theoretic and information-theoretic methods by formulating an optimization problem based on a novel Bayes generative privacy risk (BGP risk) metric. This formulation avoids reliance on worst-case privacy proofs—such as sensitivity calculations and combinatorial composition analysis required by DP. Moreover, by treating the entire underlying data process as a black-box oracle, the BNGP framework enables efficient optimization even for complex systems.
    We show that BNGP satisfies the post-processing property and composes gracefully even across interdependent mechanisms. 
    
    \item[(3)] To evaluate our approach, we conduct experiments on high-dimensional genomic datasets with binary attributes. Our results show that the BNGP outperforms existing membership inference attack baselines, demonstrating that BNGP provides robust privacy protection for privacy-preserving data sharing.    
\end{itemize}

All theoretical proofs are provided in the Appendix of the full version \cite{zhang2025bayes}.

\subsection{Related Work}

\textbf{Quantitative Notions of Privacy Leakage }
Quantitative notions of privacy leakage have been extensively studied in various contexts which provides mathematically rigorous frameworks for measuring the amount of sensitive information that may be inferred by attackers.
Differential privacy \cite{dwork2006calibrating,dwork2006differential} and its variants \cite{bun2016concentrated,dwork2016concentrated,mironov2017renyi,bun2018composable} formalize the privacy leakage using various parameterized statistical divergence metrics.
For example, R\'enyi differential privacy (RDP) \cite{mironov2017renyi} generalizes the standard pure DP and quantifies the privacy leakage through the use of R\'enyi divergence.
Information-theoretic measures, such as mutual information \cite{chatzikokolakis2010statistical,cuff2016differential}, f-divergence \cite{xiao2023pac}, and Fisher information \cite{farokhi2017fisher,hannun2021measuring,guo2022bounding}, provide alternative ways to quantify and characterize privacy loss. 
Probably Approximately Correct (PAC) Privacy \cite{xiao2023pac,sridhar2025pac} and its refinement Residual-PAC Privacy \cite{zhang2025breaking} further advance this direction by offering instance-based, simulation-driven guarantees that automatically calibrate noise while overcoming the Gaussian-bound limitation.
Empirical measurements are also widely studied \cite{shokri2017membership,yeom2018privacy,nasr2021adversary,stock2022defending} that quantify the actual privacy guarantees or leakage under certain privacy protection methods.

\textbf{Privacy-Utility Trade-off }
Balancing the trade-off between privacy and utility is a central challenge in designing privacy-preserving mechanisms. This balance is often modeled as an optimization problem \cite{lebanon2009beyond,sankar2013utility,lopuhaa2024mechanisms,ghosh2009universally,gupte2010universally,geng2020tight,du2012privacy,alghamdi2022cactus,goseling2022robust}. For instance, \cite{ghosh2009universally} formulated a loss-minimizing problem constrained by differential privacy and demonstrated that the geometric mechanism is universally optimal in Bayesian settings. Similarly, optimization problems can be framed with utility constraints \cite{lebanon2009beyond,alghamdi2022cactus}. Moreover, \cite{gupte2010universally} modeled the trade-off as a zero-sum game, where the privacy mechanism maximizes privacy while information consumers minimize their worst-case loss using side information.

\textbf{GAN in Privacy }
The use of generative adversarial networks (GANs) for privacy protection has gained increasing attention in recent years.
\cite{huang2018generative} introduced generative adversarial privacy (GAP), which frames privacy as a game between a generator that creates utility-preserving data while obfuscating sensitive information and a discriminator that attempts to identify private data. The objective is training models that achieve both high utility and resilience to powerful inference attacks.
Similar efforts include compressive adversarial privacy (CAP), which compresses data before adversarial training \cite{chen2018understanding}, adversarial regularization methods that adjust training to reduce information leakage \cite{nasr2018machine}, and PATE-GAN, which combines the PATE framework with GANs for differentially private synthetic data generation \cite{jordon2018pate}.
Other works in this line includes privacy-preserving adversarial networks \cite{tripathy2019privacy}, reconstructive adversarial network \cite{liu2019privacy}, 
and federated GAN \cite{rasouli2020fedgan}.
Adversarial training has also been applied to defend against MIAs specifically. For example, \cite{li2021membership} explored methods where models are trained alongside adversaries attempting MIAs, which enables the models to learn representations that are less susceptible to such attacks. 
There are also works using GAN to perform attacks, where the generator represents the attacker's strategy \cite{baluja2017adversarial,hitaj2017deep,zhao2018generating,liu2019performing,hayes2019logan}.

\section{Preliminaries}\label{sec:preliminaries}

\subsection{Differential Privacy}

The core challenge of privacy-preserving data analyses can be formulated as the following problem: Let $D$ represent a dataset that is processed by a (potentially randomized) mechanism $\mathcal{M}: \mathcal{D} \to \mathcal{X}$, which generates an output $X = \mathcal{M}(D)$ that is publicly available. 
An output could be summary statistics \cite{sankararaman2009genomic,dwork2015robust}, any model learned from the dataset \cite{abadi2016deep,shokri2017membership}, or other information or signals such as network traffic when processing the dataset \cite{chen2010side}. 
Here, $\mathcal{D}$ and $\mathcal{X}$ denote the input and output spaces, respectively.
DP ensures that any two \textit{adjacent datasets} $D, D' \in \mathcal{D}$, differing by a single entry ($D \simeq D'$), produce nearly indistinguishable outputs, limiting the information inferable about individual data records.
A randomized mechanism $\mathcal{M}: \mathcal{D} \to \mathcal{X}$ is \textit{$(\epsilon, \delta)$-differentially private} ($(\epsilon, \delta)$-DP) with $\epsilon \geq 0$ and $\delta \in [0,1]$ if for all $D \simeq D'$, $\mathcal{W}\subset \mathcal{Y}$,
\begin{equation}\label{eq:standard_DP_def}
    \Pr\left[\mathcal{M}(D) \in \mathcal{W}\right]  \leq e^{\epsilon} \Pr\left[\mathcal{M}(D') \in \mathcal{W}\right] + \delta.
\end{equation}
This definition constrains the statistical distance between the probability distributions $\mathcal{M}(B)$ and $\mathcal{M}(B')$, ensuring that their outputs are nearly indistinguishable.

\paragraph{Sensitivity}
A critical parameter in designing differentially private mechanisms is \textit{sensitivity}, which quantifies the maximum influence a single individual’s data can have on the output of a query. Formally, for a function $f: \mathcal{D} \to \mathcal{X}$, the \textit{global sensitivity} $\mathtt{Sens}(f)$ is defined as the maximum $L_{1}$-norm (or $L_{2}$-norm, depending on the mechanism) difference between the outputs of $f$ over all pairs of adjacent datasets $D \simeq D' $:  
\[
\mathtt{Sens}(f) = \max_{D \simeq D'} \|f(D) - f(D')\|.
\]  
For example, a counting query (e.g., "how many individuals have a specific attribute?") has sensitivity $\mathtt{Sens}(f) = 1$, since adding or removing one individual changes the count by at most 1. Sensitivity directly determines the magnitude of noise required to privatize the query. Mechanisms like the Laplace mechanism \cite{dwork2006calibrating} add noise scaled to $\mathtt{Sens}(f) = 1 / \epsilon$, ensuring that the output distribution’s privacy loss adheres to the $\epsilon$-DP guarantee. Lower sensitivity enables stronger privacy (smaller $\epsilon$) for the same utility, making sensitivity analysis a cornerstone of DP algorithm design.  
However, sensitivity calculation is in general NP-hard \cite{xiao2008output}.

\paragraph{Composition Properties}
In practice, we want to use the data more than once.
Thus, it is important to characterize how privacy guarantees degrade when multiple DP mechanisms are applied to the same (or related) dataset.
DP satisfies graceful composition properties.
That is, DP is closed under composition: composition keeps privacy intact, while gradually eroding its strength.
The \textit{basic composition theorem} states that the privacy parameters $\epsilon$ and $\delta$ accumulate linearly under sequential composition: executing $n$ mechanisms with $(\epsilon_i, \delta_i)$-DP guarantees yields $\left(\sum_{i=1}^k \epsilon_i, \sum_{i=1}^{n} \delta_i\right)$-DP total privacy loss.
The textit{advanced composition theorem} \cite{dwork2010boosting} provides tighter bounds: for $k$-fold composition of $(\epsilon, \delta)$-DP mechanisms, total privacy loss is bounded by $(\epsilon', \delta')$-DP, where $\epsilon' = \epsilon \sqrt{2k \log(1/\delta'')} + k \epsilon (e^\epsilon - 1), \quad \delta' = k\delta + \delta''$ for any $\delta'' > 0 $.
Kariouz et al. \cite{kairouz2015composition} obtain optimal composition bounds for homogeneous cases, while Murtagh and Vadhan \cite{murtagh2015complexity} provide optimal bounds for heterogeneous cases and show that it is, in general, \#P-complete to compute the bound.
These composition properties enable modular design of complex DP workflows while maintaining rigorous privacy accounting.

\subsection{Membership Inference Attacks}\label{sec:MIA}

Membership inference attacks (MIAs) aim to determine whether a specific data point was included in a model’s training dataset. 
Consider a population $\mathcal{K} = [K]$ of $K$ individuals, where each individual $k$ has an associated data point $d_k$ (e.g., a feature vector). 
Let $Z=\{d_{k}\}_{k\in\mathcal{K}}\in \mathcal{Z}$ be the \textit{population dataset}, where $\mathcal{Z}$ is the set of population datasets.
A binary \textit{membership vector} $b = (b_1, b_2, \dots, b_K) \in W \equiv \{0,1\}^K$ indicates dataset inclusion, where $b_k = 1$ if $d_k$ is included and $b_k = 0$ otherwise.
Let $\mathcal{D}=\mathcal{P}(Z)\backslash\{\emptyset\}$ be the \textit{power set} of $Z$ excluding the empty set.
The dataset is thus defined as $D = \{ d_k \mid b_k = 1, k \in \mathcal{K} \}\in\mathcal{D}$.
Membership selection follows a prior distribution $\theta(\cdot) \in \Delta(W)$, governing the probability of each possible membership vector.
We assume that each individual's membership information is independent of the others. 
Consider a (potentially randomized) mechanism $f(D)$ that takes $D$ as input and produces an output $x \in \mathcal{X}$, where $\mathcal{X}$ is the set of possible outputs.

\paragraph{Example: Summary Statistic Sharing}
Consider a population $\mathcal{K} = [K]$ of $K$ individuals, where each individual's data is represented by a binary vector $d_k = (d_{kj})_{j \in Q}$, with $d_{kj} \in \{0,1\}$ specifying the binary value of the attribute at position $j$. The set $Q$ represents all attributes under consideration, such as genomic positions or other features. 
The dataset $D$ consists only of data points from individuals $k\in \mathcal{K}$ who satisfy $b_{k}=1$.
The data-sharing mechanism $f(D) = x = (x_1, \dots, x_{|Q|}) \in \mathcal{X} = [0,1]^{|Q|}$ computes summary statistics $x$, where each $x_j$ represents the fraction of included individuals with $d_{kj} = 1$ for attribute $j$: $x_j = \frac{\sum_{k\in\mathcal{K}} b_k d_{kj}}{\sum_{k\in\mathcal{K}} b_k}$.
For example, in genomic data, $d_k$ may represent single-nucleotide variants (SNVs), where each $d_{kj}$ indicates the presence of an alternate allele at SNV $j$ for individual $k$. The summary statistic in this case, known as the alternate allele frequency (AAF), measures the fraction of included individuals carrying the alternate allele at each SNV.

An MIA model is a (potentially randomized) mechanism $\mathcal{A}(d_k, x)$ $\in \{0,1\}$ that predicts whether an individual $k$ was included in the dataset, given their data point $d_k$ and the output $x$ of the mechanism $f$.
The \textit{membership advantage} (MA) \cite{yeom2018privacy} is a standard measure of MIA performance, defined for each $k \in \mathcal{K}$ as:
\begin{equation}\label{eq:membership_advantage}
    \begin{aligned}
        &\mathtt{Adv}_{k}(\mathcal{A}) \\&\equiv \Pr[\mathcal{A}(d_k, X) = 1 \mid b_k = 1] - \Pr[\mathcal{A}(d_k, X) = 1 \mid b_k = 0]\\
        &=2\Pr[\mathcal{A}(d_k, X) = b_k] - 1.
    \end{aligned}
\end{equation}
In other words, $\mathtt{Adv}_k(\mathcal{A})$ quantifies the difference between the model’s \textit{true positive rate (TPR)} and \textit{false positive rate (FPR)}.
Other metrics for evaluating MIA performance include accuracy \cite{shokri2017membership}, area under the curve (AUC) \cite{chen2020gan}, mutual membership information leakage \cite{farokhi2020modelling}, and privacy-leakage disparity \cite{zhong2022understanding}. For a comprehensive review, see \cite{niu2024survey}.

\paragraph{DP for MIA}
MIAs provide a direct theoretical framework for characterizing the privacy risk in DP. The core definition of DP ensures that an adversary cannot distinguish whether a specific individual’s data is present in a dataset by observing the output of a DP mechanism. Since MIAs instantiate precisely this distinguishability task, they align tightly with DP’s fundamental guarantees and can serve as provable bounds on privacy leakage.
Suppose that we have a $(\epsilon, \delta)$-DP mechanism $\mathcal{M}: \mathcal{D} \mapsto \mathcal{X}$, where $\mathcal{M}$ is the privacy-preserving version of the original mechanism $f: \mathcal{D}\mapsto \mathcal{X}$.
Then, \cite{humphries2023investigating} shows that the adversary's membership advantage is upper bounded by
\[
\begin{aligned}
    \mathtt{Adv}_{k}(\mathcal{A})&=2\Pr[\mathcal{A}(d_k, \mathcal{M}(D)) = b_k] - 1 \\
    &\leq \left(e^{\epsilon} -1 + 2\delta\right)/\left(e^{\epsilon} +1\right),
\end{aligned}
\]
for all $k\in \mathcal{K}$.
The bound asymptotically approaches zero as $\delta, \epsilon \rightarrow 0$; it increases monotonically with $\delta$ for fixed $\epsilon\geq 0$ and with $\epsilon$ for fixed $\delta\in[0,1)$.
This bound holds in the worst-case scenario, applying even if the attacker has unbounded computational power.

\section{Privacy Protection Against MIA as a Game}\label{sec:Bayesian_game}

We formalize the interaction between two agents: the \textit{defender}, who acts as the data curator of dataset $D$, and the \textit{attacker}, an adversary attempting to perform MIAs. 
The defender aims to obtain an optimal trade-off between privacy and utility.
Conversely, the attacker seeks to maximize inference accuracy while potentially operating under computational or resource constraints.

\subsection{Privacy Strategy}

To protect membership privacy, the defender randomizes the mechanism $f$ via \textit{noise perturbation}. 
Noise perturbation is one of the most common techniques used to enhance privacy protection, such as in differential privacy, by adding controlled random noise to data, query responses, or model parameters to obscure the contribution of any single individual.

Specifically, let $g: W \mapsto \Delta(\Xi)$ denote the \textit{privacy strategy}, where $g(\xi|b)$ specifies the probability distribution over \textit{noise} $\xi \in \Xi$, conditioning on the membership vector $b\in W$.
The privacy strategy may also be independent of $b$, i.e., $g(\cdot) \in \Delta(\Xi)$.
The randomized version of the original mechanism $f$ is represented as the mechanism $\mathcal{M}(\cdot; g): \mathcal{D}\mapsto \mathcal{X}$.
In addition, let $\rho_{g}: \mathcal{D} \mapsto \Delta(\mathcal{X})$ be the underlying density function given $g$ and $f$.

When an output $x = \mathcal{M}(D; g)$ is realized with $g$ drawing a noise $\xi$, we denote it as $x = \mathtt{M}(D,\xi)$. In output perturbation, $\xi$ is added to the output $\hat{x} = f(D)$, and the publicly released output is $x = \mathtt{M}(D,\xi) = \mathtt{R}(\hat{x} + \xi)$, where $\mathtt{R}(\cdot)$ ensures the perturbed $x$ remains within the valid range $\mathcal{X}$. For example, as described in Section \ref{sec:preliminaries}, when $\hat{x}$ represents frequencies, the formulation $x = \mathtt{R}(\hat{x} + \xi) \equiv \mathrm{Clip}_{[0,1]}(\hat{x} + \xi)$ ensures $x \in [0,1]^{|Q|}$.

\subsection{Attack Model and Bayes-Weighted Membership Advantage}

The attacker conducts a membership inference attack (MIA) to infer the true membership vector. 
A trivial attack in MIA occurs when the attacker always assumes that target individuals are present in the dataset. While this leads to a high TPR, it comes at the cost of an extremely high FPR, making it an impractical strategy. Such an attack ignores the operational costs associated with false positives and fails to exploit actual privacy vulnerabilities. As a result, considering trivial worst-case scenarios can lead to misleading privacy risk assessments and suboptimal privacy-utility trade-offs.

To constitute a meaningful privacy compromise, an attack must seek to outperform random guessing and be evaluated in terms of both effectiveness and operational costs. In practical applications (e.g., personalized medicine or targeted marketing), an attacker does not merely seek to maximize true positives (i.e., correctly inferring membership) but must also account for the costs of acting on false positives. A reckless attack that assumes universal membership may lead to wasted resources, inefficiencies, or even detection.

The attacker may possess external knowledge represented as \textit{subjective prior beliefs} about the true membership $b \in W$, denoted by $\sigma(\cdot) \in \Delta(W)$. The attack model (possibly randomized), $\mathcal{A}_{h}(x) = \{\mathcal{A}^{k}_{h}(d_k, x)\}_{k \in \mathcal{K}}$, consists of individual decisions $\mathcal{A}^{k}_{h}(d_k, x) \in \{0,1\}$, where $h:\mathcal{X}\mapsto\Delta(W)$ captures the intrinsic randomness of the attack model that is independent of $\sigma$.
Let $s=(s_{k})_{k\in \mathcal{K}}\in W$ denote the inference output of $\mathcal{A}_{h}(x)$ for any $x\in\mathcal{X}$.
Then, $\Pr[\mathcal{A}_{h}(x)=s] = h(s|x)$, for all $s\in W$, $x\in\mathcal{X}$.

To capture the attacker's trade-off, we extend the MA in (\ref{eq:membership_advantage}) to the \textit{$(\sigma,\gamma)$-Bayes-weighted membership advantage} ($(\sigma,\gamma)$-BWMA) by introducing a coefficient $0<\gamma\leq 1$ to reflect the attacker's trade-off between TPR and FPR. That is, given the privacy strategy $g$, $(\sigma,\gamma)$-BWMA of the attack model $\mathcal{A}_{h}$ is defined as:
\begin{equation}\label{eq:BW_MA}
    \begin{aligned}
    \mathtt{Adv}^{\gamma}_{\sigma}\left(\mathcal{A}_{h}, g\right)\equiv (1-\gamma)\mathtt{TPR}_{\sigma}\left(\mathcal{A}_{h}, g\right)
        - \gamma\mathtt{FPR}_{\sigma}\left(\mathcal{A}_{h}, g\right),
    \end{aligned}
\end{equation}
where $\mathtt{TPR}_{\sigma}\left(\mathcal{A}_{h}, g\right)$ and $\mathtt{FPR}_{\sigma}\left(\mathcal{A}_{h}, g\right)$ are the TPR and the FPR, given by
\[
\begin{aligned}
    &\mathtt{TPR}_{\sigma}\left(\mathcal{A}_{h}, g\right) \\&\equiv \sum\nolimits_{k\in \mathcal{K}, b_{-k}}\Pr\left[\mathcal{A}^{k}_{h}(d_{k}, x)=1\middle| b_{k}=1;g\right]\sigma(b_{k}=1, b_{-k}),
\end{aligned}
\]
\[
\begin{aligned}
    &\mathtt{FPR}_{\sigma}\left(\mathcal{A}_{h}, g\right)\\
    &\equiv \sum\nolimits_{k\in \mathcal{K}, b_{-k}}\Pr\left[\mathcal{A}^{k}_{h}(d_{k}, x)=1\middle| b_{k}=0;g\right]\sigma(b_{k}=0, b_{-k}).
\end{aligned}
\]
Decreasing $\gamma$ indicates the attacker's stronger preference for TPR, while increasing $\gamma$ reflects a greater desire to minimize FPR.
When $\gamma=0.5$, the attacker values TPR and FPR equally.

\subsection{Game Formulation}

We model the interaction between the defender and the attacker as a game.

\paragraph{Attacker's Objective Function}
Given $\gamma$ and $\sigma$, the attacker aims to obtain $\mathcal{A}_{h}$ that maximizes $\mathtt{Adv}^{\gamma}_{\sigma}\left(\mathcal{A}_{h}, g\right)$.
First, we consider a simple loss function for the attacker:
\begin{equation}\label{eq:attacker_basic_loss}
    \ell^{\gamma}_{\texttt{att}}(s,b) \equiv -v(s,b) +\gamma c(s),
\end{equation}
where $v(s,b)\equiv\sum\nolimits_{k\in \mathcal{K}} s_{k}b_{k}$ captures the sum of true positives, and $c(s) \equiv \sum\nolimits_{k\in \mathcal{K}} s_{k}$ captures the operational costs to post-process positive inference outcomes (i.e., for $k\in\mathcal{K}$ such that $s_{k}=1$).
Maximizing $v(s,b)$ reflects maximizing true positives, while minimizing $c(s)$ reflects minimizing the operational costs.

Given a privacy strategy $g$ (and the induced $\rho$), prior $\sigma$, and the (probability of the) attack model $h$, the expected loss is defined as:
\begin{equation}\label{eq:attacker_original_loss}
    \mathcal{L}^{\gamma,\sigma}_{\texttt{att}}(h, g)\equiv \sum\nolimits_{s,b}\int_{\mathcal{X}}\ell^{\gamma}_{\texttt{att}}(s,b)h(s|x)\rho(x|b)dx \sigma(b).
\end{equation}

\begin{proposition}\label{prop:weighted_membership_advantage} 
For any $g$, $\sigma$, $h$, and $0 < \gamma \leq 1$, we have $\mathcal{L}^{\gamma,\sigma}_{\texttt{att}}(h,g) = -\mathtt{Adv}^{\gamma}_{\sigma}\left(\mathcal{A}_{h}, g\right)$.
\end{proposition}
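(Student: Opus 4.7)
The plan is a direct expansion and reorganization: rewrite the expected loss in terms of marginal probabilities that the attacker labels each individual as a member, then match the resulting sums against the definitions of $\mathtt{TPR}_\sigma$ and $\mathtt{FPR}_\sigma$.

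First, using linearity and the explicit forms $v(s,b) = \sum_k s_k b_k$ and $c(s) = \sum_k s_k$, I would decompose
\[
\mathcal{L}^{\gamma,\sigma}_{\texttt{att}}(h, g) = -\sum_{k}\sum_{s,b}\int_{\mathcal{X}} s_k b_k\, h(s|x)\rho(x|b)\,dx\,\sigma(b) \;+\; \gamma\sum_{k}\sum_{s,b}\int_{\mathcal{X}} s_k\, h(s|x)\rho(x|b)\,dx\,\sigma(b).
\]
The key observation is that $\sum_{s\in W} s_k\, h(s|x) = \Pr[\mathcal{A}^{k}_{h}(d_k,x) = 1 \mid x]$, i.e.\ the marginal probability that the attack model labels individual $k$ as a member given the observed output. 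Integrating this against $\rho(x|b)$ yields $\Pr[\mathcal{A}^{k}_{h}(d_k,x)=1 \mid b; g]$.

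Next I would handle each term separately. For the first term, collapsing the sum over $b$ with the factor $b_k$ restricts to $b_k = 1$, so
\[
\sum_{s,b}\int s_k b_k\, h(s|x)\rho(x|b)\,dx\,\sigma(b) = \sum_{b_{-k}} \sigma(b_k=1, b_{-k}) \Pr[\mathcal{A}^{k}_{h}(d_k,x)=1 \mid b_k=1, b_{-k}; g],
\]
which, summed over $k$, is precisely $\mathtt{TPR}_{\sigma}(\mathcal{A}_h, g)$. For the second term, the absence of the $b_k$ factor means I sum over all $b$, which I split as $\{b_k=1\} \cup \{b_k=0\}$; the $b_k=1$ piece recovers $\mathtt{TPR}_{\sigma}$ and the $b_k=0$ piece recovers $\mathtt{FPR}_{\sigma}$. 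Combining,
\[
\mathcal{L}^{\gamma,\sigma}_{\texttt{att}}(h,g) = -\mathtt{TPR}_{\sigma}(\mathcal{A}_h,g) + \gamma\bigl[\mathtt{TPR}_{\sigma}(\mathcal{A}_h,g) + \mathtt{FPR}_{\sigma}(\mathcal{A}_h,g)\bigr] = -\bigl[(1-\gamma)\mathtt{TPR}_{\sigma} - \gamma \mathtt{FPR}_{\sigma}\bigr],
\]
which equals $-\mathtt{Adv}^{\gamma}_{\sigma}(\mathcal{A}_h, g)$ by definition \eqref{eq:BW_MA}.

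This proof is essentially bookkeeping, so there is no real obstacle. The only subtle point is interpreting the conditional probability notation in the TPR/FPR definitions: the expression $\Pr[\mathcal{A}^{k}_{h}(d_k,x)=1 \mid b_k=c; g]$ appearing inside a sum over $b_{-k}$ weighted by $\sigma(b_k=c, b_{-k})$ must be read as the conditional probability given the full membership vector $(b_k=c, b_{-k})$, since the output $x$ depends on the entire dataset $D$ through $\rho$. Once this reading is adopted the computation above is unambiguous, and the equality follows by matching sums directly.
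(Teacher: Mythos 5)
Your proposal is correct and follows essentially the same route as the paper's proof: a direct expansion of the expected loss, identification of the marginal probability $\sum_s s_k h(s|x)$ with the attacker's per-individual positive rate, and matching against the definitions of $\mathtt{TPR}_\sigma$ and $\mathtt{FPR}_\sigma$. The only cosmetic difference is that the paper first rewrites $-v(s,b)+\gamma c(s) = -\sum_k s_k(b_k-\gamma)$ and splits by the value of $b_k$ inside the loss, whereas you split the cost term $c(s)$ over $b_k\in\{0,1\}$ after taking expectations; the algebra is identical.
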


Proposition \ref{prop:weighted_membership_advantage} establishes that for $0 < \gamma \leq 1$, the attacker's optimal strategy simultaneously minimizes $\mathcal{L}^{\gamma,\sigma}_{\texttt{att}}(h, g)$ and maximizes $\mathtt{Adv}^{\gamma}_{\sigma}(\mathcal{A}_{h}, g)$ for any given $g$. 
Here, $s$ represents the attacker's pure strategy and $h$ corresponds to the attacker's mixed strategy.
Henceforth, we use $h$ to represent the attack model $\mathcal{A}_{h}$, unless otherwise stated.
This equivalence streamlines the defender-attacker interaction by framing it as a Bayesian game, where the attacker is a rational player optimizing expected payoff, thereby enabling the use of game-theoretic solution concepts to analyze the defender-attacker interaction.

\paragraph{Defender's Objective Function}
Given a population dataset $Z \in \mathcal{Z}$, the defender seeks to optimally balance the \textit{privacy-utility trade-off} by formulating and solving an optimization problem. The specific formulation of this optimization problem depends inherently on the context, reflecting domain-specific priorities, operational constraints, and sociotechnical assumptions.

Privacy loss and utility loss are commonly considered competing objectives. However, their mathematical representation is not universally standardized; instead, it is guided by implicit preferences, domain-specific privacy considerations, and operational interpretations of privacy and utility in a given scenario. As such, there is no single, universally optimal formulation suitable for every situation.

Formally, let $h$ represent the attacker’s strategy or attack model, and $g$ represent the defender’s privacy protection strategy. Define $\mathtt{PrivacyL}(h, g)$ as the \textit{privacy loss} experienced by the defender under attack strategy $h$, and $\mathtt{UtilityL}(g|Z)$ as the \textit{utility loss} resulting from the defender’s privacy protection strategy $g$ when applied to dataset $Z$.
Additionally, let $\mathcal{L}_{\texttt{def}}(h,g)$ denote the defender’s objective function. 

Below are three widely used ways to encode a privacy-utility trade-off. 

\paragraph{Trade-off with Privacy Budget}
Given a fixed attack strategy $h$, suppose the defender prioritizes minimizing the utility loss while ensuring the privacy loss does not exceed a predefined threshold known as the privacy budget ($\textup{PB} \in \mathbb{R}$). The defender’s objective function in this scenario can be expressed as:
\begin{equation}\label{eq:trade_off_PB}
    \mathcal{L}_{\texttt{def}}(h,g)=
    \begin{cases}
        \mathtt{UtilityL}(g|Z), & \text{if } \mathtt{PrivacyL}(h,g) \leq \textup{PB},\\
        \infty, & \text{otherwise}.
    \end{cases}
\end{equation}
In this formulation, strategies exceeding the privacy budget become infeasible and are penalized infinitely.

\paragraph{Trade-off with Utility Budget}
Given an attack model $h$, the defender might instead prioritize minimizing privacy loss, while constraining the utility loss within an acceptable limit, termed the utility budget ($\textup{UB} \in \mathbb{R}$). This scenario is represented by the following objective function:
\begin{equation}\label{eq:trade_off_UB}
    \mathcal{L}_{\texttt{def}}(h,g)=
    \begin{cases}
        \mathtt{PrivacyL}(h,g), & \text{if } \mathtt{UtilityL}(g|Z) \leq \textup{UB},\\
        \infty, & \text{otherwise}.
    \end{cases}
\end{equation}
Here, strategies exceeding the utility budget become infeasible.

\paragraph{Trade-off with Preference}
Alternatively, the defender may explicitly balance privacy and utility losses within a single combined objective function, using a weighting factor $\kappa > 0$ that captures the defender’s preference over this trade-off:
\begin{equation}\label{eq:additive_obj}
    \mathcal{L}_{\texttt{def}}(h,g) = \mathtt{PrivacyL}(h,g) + \kappa \mathtt{UtilityL}(g|Z).
\end{equation}
This additive form allows the privacy loss to be interpreted directly as a monetary or quantifiable resource that can be traded against utility loss.

Our model is general and flexible enough to accommodate any defender-defined objective functions that align with their operational and strategic needs. However, we introduce the following minimal assumption to ensure the logical consistency and interpretability of the privacy loss metric:

\begin{assumption}\label{assp:trade-off}
For any fixed privacy protection strategy $g$, the privacy loss $\mathtt{PrivacyL}(h,g)$ is monotonically increasing with respect to either $\mathtt{TPR}_{\theta}(h,g)$ or $\mathtt{Adv}^{0.5}_{\theta}(h,g)$. In other words, privacy loss should not decrease as the attacker’s success or advantage in distinguishing sensitive information increases.
\end{assumption}

This assumption ensures that the defender’s notion of privacy loss behaves intuitively: if the attacker becomes more successful in inferring sensitive information—either by achieving a higher true positive rate ($\mathtt{TPR}_{\theta}(h,g)$) or gaining more advantage over random guessing ($\mathtt{Adv}^{0.5}_{\theta}(h,g)$)—then the perceived privacy loss should not decrease. That is, stronger attacks should correspond to greater privacy degradation. This property aligns with the defender’s rational expectation that a more effective adversary implies a more severe privacy breach, and it helps rule out pathological or unintuitive definitions of privacy loss that might assign lower loss to worse outcomes.

The utility loss can focus on the utility generated by the post-processing of the mechanism output, such as the value to data users of published statistics \cite{schmutte2022information}.
A common way to model the utility loss is by the deviation of $x=\mathcal{M}(D;g)$ from the unperturbed output $\hat{x}=f(D)$. 
Specifically, let $\ell_{U}:\mathbb{R}_{+}\mapsto \mathbb{R}_{+}$ be an increasing, differentiable function, and let $\|\cdot\|_{\mathtt{p}}$ be a norm on $\mathcal{X}$, for $\mathtt{p}\geq 1$.
Then, $\mathtt{UtilityL}\left(g\right)$ can be defined by the expectation of $\ell_{U}(\|x-\hat{x}\|_{\mathtt{p}})$.
Hence, one example of the defender's loss is
\begin{equation}\label{eq:defender_loss}
    \ell_{\texttt{def}}(b, s, x) \equiv v(s, b) + \kappa \ell_{U}(\|x - f(D)\|_{\mathtt{p}}).
\end{equation}
Given any $g$ (and the induced $\rho_{g}$) and $h$, the defender's expected loss $\mathcal{L}_{\texttt{def}}$ in the format (\ref{eq:additive_obj}) is then given by
\begin{equation}
\mathcal{L}_{\texttt{def}}\left(h,g\right)\equiv\sum\nolimits_{s,b}\int_{x} \ell_{\texttt{def}}(b, s, x)h(s|x)\rho_{g}(x|b)\theta(b)dx.
\end{equation}

The interaction between the defender and attacker is modeled as a game, with each optimizing their strategy. A \textit{$\sigma$-Bayesian Nash Equilibrium} represents the point where neither can unilaterally improve their outcome.

\begin{definition}[$\sigma$-Bayes Nash Equilibrium] 
Given any $\sigma$ and $0<\gamma\leq 1$, a profile $\left<g^{*}, h^{*}\right>$ is a \textup{$\sigma$-Bayesian Nash Equilibrium ($\sigma$-BNE)} if 
\begin{equation}\label{eq:BNE_def}
    \begin{aligned}
    g^{*}\in\argmin\nolimits_{g} \mathcal{L}_{\texttt{def}}\left(h^{*},g\right) \textup{ and } h^{*}\in\argmin\nolimits_{h} \mathcal{L}^{\gamma,\sigma}_{\texttt{att}}(h, g^{*}).
\end{aligned}
\end{equation}
\end{definition}

\subsection{GAN-Based Game}

We propose representing the defender’s and attacker’s strategies, $g$ and $h$, respectively, using neural networks due to their flexibility and expressiveness. 
In the infinite-capacity limit, neural nets can approximate any distribution (by universal approximation).

This perspective naturally suggests a Generative Adversarial Network (GAN)-like framework, which we term the \textit{general-sum GAN}. In this framework, the defender’s privacy strategy $g$ is implemented as a neural network \textit{generator}, denoted as $G(b, r)$. This generator takes two inputs: the true membership vector $b$, and an auxiliary random vector $r$. The output of $G$ is a noise vector $\xi$, which is then used for privacy protection. The auxiliary vector $r$, with dimension $q$, is drawn from a uniform distribution, i.e., $r \sim \mathcal{U}([0,1])$, ensuring randomness in the defender’s strategy.

The attacker’s strategy is modeled as a neural network \textit{discriminator}, denoted as $H: \mathcal{X} \mapsto W$. The discriminator takes an input $x = \mathtt{M}(D,G(b, r))$ (i.e., $x=\mathcal{M}(D;g)$ when noise $G(b, r)$ is generated), representing the data transformed by the defender’s privacy strategy, and outputs inference vector $s \in W$ about the membership.

Using these neural network representations, we can explicitly rewrite the expected loss functions for both the attacker and the defender as follows:
\begin{align*}
&\widetilde{\mathcal{L}}^{\gamma,\sigma}_{\texttt{att}}(H,G) \equiv \mathbb{E}^{r \sim \mathcal{U}}_{b \sim \sigma}\left[\ell^{\gamma}_{\texttt{att}}\left(H(\mathtt{M}(D,G(b,r))),b\right)\right],\\
&\widetilde{\mathcal{L}}_{\texttt{def}}(H,G) \\
    &\equiv \mathbb{E}^{r \sim \mathcal{U}}_{b \sim \sigma}\left[\ell_{\texttt{def}}\left(b, H(\mathtt{M}(D,G(b,r))),\mathtt{M}(D,G(b,r))\right)\right],
\end{align*}
where $\ell^{\gamma}_{\texttt{att}}$ and $\ell_{\texttt{def}}$ correspond to the attacker’s and defender’s loss functions, previously defined in equations (\ref{eq:attacker_basic_loss}) and (\ref{eq:defender_loss}).

Consequently, the defender and attacker engage in the following game:
\begin{equation}\label{eq:NN_BNE}
    \begin{aligned}
    &G^{*}\in\arg\min\nolimits_{G} \widetilde{\mathcal{L}}_{\texttt{def}}(H^{*},G), \\
    &H^{*}\in \arg\min\nolimits_{H}\widetilde{\mathcal{L}}^{\gamma,\sigma}_{\texttt{att}}(H,G^{*}).
    \end{aligned}
\end{equation}
This defines the neural-network $\sigma$-Bayes-Nash equilibrium (BNE): $G^*$ and $H^*$ are best responses to each other.
Here, the neural networks $G$ and $H$ implicitly encode probability distributions for mixed strategies $g$ and $h$, respectively.
For notational clarity and consistency, we simply replace $g \rightarrow G$ and $h \rightarrow H$.
Additionally, if the original objective function $\mathcal{L}_{\texttt{def}}(h, g)$ satisfies Assumption \ref{assp:trade-off}, this property naturally extends to the neural network formulation $\widetilde{\mathcal{L}}_{\texttt{def}}(H, G)$ as well, preserving the intuitive and analytical integrity of the privacy loss metric.

\section{Bayes-Nash Generative Privacy Strategy}\label{sec:BNGP}

In this section, we propose the \textit{Bayes-Nash generative privacy} (BNGP) framework.
In the GAN-based game (\ref{eq:NN_BNE}), the outputs of $G$ and $H$ are the samples of noise and inference.
In contrast, in the BNGP framework, we consider a different construction of the neural network $H$ which takes inputs as $x$ and outputs a vector of soft scores $p=(p_{k})_{k\in\mathcal{K}}$, where each $p_{k}\in[0,1]$ is the probability of $s_{k}=1$. 
A final inference $s \in W$ is then obtained by thresholding each coordinate of $H(x)$.

The BNGP framework considers the \textit{cross-entropy loss} (CEL) for the attacker:
\begin{equation*}
    \begin{aligned}
        \ell_{\texttt{CEL}}\left(p, b\right)\equiv -\sum\nolimits_{k\in\mathcal{K}} \left[b_{k}\log(p_{k}) + (1-b_{k})\log(1-p_{k})\right].
    \end{aligned}
\end{equation*}
Thus, the attacker's \textit{expected CEL} is given by:
\begin{equation}\label{eq:BCEL}
    \begin{aligned}
        \mathcal{L}^{\sigma}_{\texttt{CEL}}(H,G)
        \equiv \mathbb{E}^{r\sim \mathcal{U}}_{b\sim \sigma}\left[\ell_{\texttt{CEL}}\left(H\left(\mathtt{M}\left(D, G(b, r)\right)\right),b\right)\right].
    \end{aligned}
\end{equation}
Additionally, consider the defender's privacy loss in $\widehat{\mathcal{L}}^{\sigma}_{\texttt{def}}(H, G)$ formulated by (\ref{eq:trade_off_PB})-(\ref{eq:additive_obj}) is given by 
\[
\mathtt{PrivacyL}\left(H, G\right) = - \mathcal{L}^{\sigma}_{\texttt{CEL}}(H,G).
\]
That is, we take the negative cross-entropy as the defender’s privacy loss.
We define the attacker's strategy that minimizes $\mathcal{L}^{\sigma}_{\texttt{CEL}}(H,G)$ for a given $G$ as the \textit{Bayes generative privacy response}.

\begin{definition}[Bayes Generative Privacy Response and Risk]\label{def:BGP_response_risk}
Given any $G$, the \textup{$\sigma$-Bayes generative privacy response ($\sigma$-BGP response)} to $G$ is defined as 
\[
H^{*}\in\argmin\nolimits_{H}\mathcal{L}^{\sigma}_{\texttt{CEL}}(H,G).
\]
The expected loss $\mathcal{L}^{\sigma}_{\texttt{CEL}}(H^{*},G)$ is the \textup{$\sigma$-BGP risk.}
\end{definition}

The BNGP strategy is the equilibrium strategy of the defender that minimizes $\widehat{\mathcal{L}}^{\sigma}_{\texttt{def}}(H^{*}, G)$ where $H^{*}$ is a $\sigma$-BGP response.

\begin{definition}[Bayes-Nash Generative Privacy Strategy]
    The model $G^{*}$ is a \textup{$\sigma$-Bayes-Nash generative privacy ($\sigma$-BNGP) strategy} if 
    \begin{itemize}
        \item[(1)] $G^{*}\in\argmin\nolimits_{G}\widehat{\mathcal{L}}^{\sigma}_{\texttt{def}}(H^{*}, G)$,
        \item[(2)] $H^{*}\in\argmin\nolimits_{H}\mathcal{L}^{\sigma}_{\texttt{CEL}}(H,G^{*})$.
    \end{itemize}
\end{definition}

\begin{lemma}\label{lemma:unique_posterior}
    Given any $G$ and $\sigma$, every $H^{*}\in\argmin\nolimits_{H}\mathcal{L}^{\sigma}_{\textup{CEL}}(H,G)$ represents the posterior distribution $\mu_{\sigma}$ induced by $\sigma$ and $G$.
\end{lemma}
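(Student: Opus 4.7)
\textbf{Proof proposal for Lemma \ref{lemma:unique_posterior}.}

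The plan is to reduce the global minimization of $\mathcal{L}^{\sigma}_{\texttt{CEL}}(H,G)$ to a pointwise minimization at each $x \in \mathcal{X}$, and then separate the cross-entropy across coordinates so that the standard one-dimensional Bernoulli cross-entropy argument gives the posterior. First I would rewrite the expectation as an integral against the joint law of $(b,x)$ induced by $b\sim\sigma$, $r\sim\mathcal{U}$, and $x=\mathtt{M}(D,G(b,r))$. Letting $\rho_{G}(x\mid b)$ denote the resulting conditional density (as in Section~\ref{sec:Bayesian_game}) and $\rho^{\sigma}_{G}(x)=\sum_{b}\rho_{G}(x\mid b)\sigma(b)$ the marginal, I would write
\begin{equation*}
\mathcal{L}^{\sigma}_{\texttt{CEL}}(H,G)=\int_{\mathcal{X}}\Bigl(\sum_{b}\ell_{\texttt{CEL}}(H(x),b)\,\mu_{\sigma}(b\mid x)\Bigr)\rho^{\sigma}_{G}(x)\,dx,
\end{equation*}
where $\mu_{\sigma}(b\mid x)=\rho_{G}(x\mid b)\sigma(b)/\rho^{\sigma}_{G}(x)$ is the posterior on membership vectors induced by $\sigma$ and $G$.

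Next, since $H$ may be chosen freely at each $x$, the minimization decouples: for $\rho^{\sigma}_{G}$-almost every $x$, $H^{*}(x)$ must minimize $F_{x}(p)\equiv\sum_{b}\ell_{\texttt{CEL}}(p,b)\mu_{\sigma}(b\mid x)$ over $p\in[0,1]^{K}$. I would then exploit the coordinate-separability of $\ell_{\texttt{CEL}}$: swapping the sums in $b$ and $k$ yields
\begin{equation*}
F_{x}(p)=-\sum_{k\in\mathcal{K}}\Bigl[q_{k}(x)\log p_{k}+(1-q_{k}(x))\log(1-p_{k})\Bigr],
\end{equation*}
where $q_{k}(x)\equiv\mu_{\sigma}(b_{k}=1\mid x)=\sum_{b:b_{k}=1}\mu_{\sigma}(b\mid x)$ is the marginal posterior probability that individual $k$ is a member. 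Each coordinate summand depends only on $p_{k}$, so the vector problem splits into $K$ scalar Bernoulli cross-entropy minimizations.

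The last step is the standard calculation: for $q\in[0,1]$, the map $p\mapsto-q\log p-(1-q)\log(1-p)$ is strictly convex on $(0,1)$ and uniquely minimized at $p=q$ (with the boundary cases $q\in\{0,1\}$ handled by the convention $0\log 0=0$ and a direct check). Applying this coordinatewise gives $H^{*}_{k}(x)=q_{k}(x)=\mu_{\sigma}(b_{k}=1\mid x)$ for $\rho^{\sigma}_{G}$-a.e.\ $x$ and every $k\in\mathcal{K}$, which is precisely the statement that $H^{*}$ represents the posterior $\mu_{\sigma}$.

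The only subtle point, and the main obstacle, is the measure-theoretic justification of the pointwise reduction and the handling of null sets where $\rho^{\sigma}_{G}(x)=0$ (on which $H^{*}$ is unconstrained, so the posterior representation holds only a.e.); I would make this explicit by invoking the fact that for any measurable candidate $H$, replacing $H(x)$ with the pointwise minimizer of $F_{x}$ does not increase $\mathcal{L}^{\sigma}_{\texttt{CEL}}$, and then noting strict convexity in $p_{k}$ makes the minimizer unique wherever $\rho^{\sigma}_{G}(x)>0$.
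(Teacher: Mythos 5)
Your proposal is correct, and it reaches the same conclusion as the paper by a somewhat different decomposition. The paper's proof computes the excess risk directly: it shows $\mathcal{L}^{\sigma}_{\texttt{CEL}}(h,g)-\mathcal{L}^{\sigma}_{\texttt{CEL}}(\mu^{\sigma},g)$ equals an expected Kullback--Leibler divergence between the posterior and $h$, which is nonnegative and vanishes iff $h$ coincides with the posterior (Gibbs' inequality applied globally over the joint law of $(b,x)$). You instead reduce to a pointwise problem at each $x$, then exploit the coordinate-separability of $\ell_{\texttt{CEL}}$ to split into $K$ scalar Bernoulli cross-entropy minimizations, each solved by strict convexity. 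The two arguments rest on the same underlying fact, but your route has a concrete advantage: because $H$ outputs a $K$-vector of per-individual probabilities (i.e., implicitly a product distribution over $W$), the correct characterization of the minimizer is the vector of \emph{marginal} posteriors $\mu_{\sigma}(b_k=1\mid x)$, and your coordinatewise separation makes this explicit, whereas the paper's proof is written as if $h(\cdot\mid x)$ ranged over arbitrary joint distributions on $W$ and the equality condition $h(b\mid x)=\mu^{\sigma}(b\mid x)$ held for the full joint — a point it glosses over when the true posterior given $x$ is not a product measure. Your explicit a.e.\ caveat on the null set of the output marginal is also a refinement the paper omits. What the paper's approach buys in exchange is brevity: the single identity ``excess cross-entropy risk equals expected KL'' delivers both nonnegativity and the uniqueness of the minimizer in one step, without needing the pointwise/coordinatewise bookkeeping.
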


Lemma \ref{lemma:unique_posterior} shows that any BGP response $H^{*}$ to any $G$ represents the posterior belief induced by $G$ and $\sigma$.
In other words, the posterior belief is the unique functional minimizer of $\mathcal{L}^{\sigma}_{\texttt{CEL}}$.

\paragraph{Conditional Entropy}
In fact, the BNGP risk $\mathcal{L}^{\sigma}_{\texttt{CEL}}$ coincides with the \textit{conditional entropy}, denoted by $\mathcal{H}^{\sigma}_{G}(B;X)$, defined by
\[
\mathcal{H}^{\sigma}_{G}(B;X) = -\sum\nolimits_{b\in\mathcal{B}}\int\nolimits_{\mathcal{X}}\rho_{G}(x|b)\sigma(b)\log\left(h^{*}(b|x)\right)dx,
\]
where $\rho_{G}(x|b)$ is the underlying probability density function of the mechanism $\mathcal{M}(\cdot;G)$.

Theoretically, any $\sigma$-BNGP strategy $G$ with a given $\mathtt{UtilityL}$ can be derived by modeling the BGP response as the posterior distribution induced by $G$ and $\sigma$. However, this approach requires optimizing a conditional probability $\rho_G$, where both the objective function and constraints depend on its posterior distributions, governed by Bayes' rule. This introduces significant analytical and computational challenges. Unlike scenarios where $\rho_G(x|b)$ can be explicitly designed, such as in information design \cite{schmutte2022information}, many practical applications lack a closed-form expression for the underlying distribution of the mechanism $\mathcal{M}(\cdot;G)$ in terms of the privacy strategy, which complicates direct optimization.
It would become more challenging when we protect privacy for the composition of multiple heterogeneous mechanisms.

Directly involving the posterior distribution in the optimization process introduces significant computational challenges. In stochastic gradient methods, the gradient with respect to the privacy strategy $G$ depends on the posterior induced by $\rho_G$ via Bayes’ rule. When this posterior lacks a closed-form solution, each gradient update must approximate or simulate it—typically using methods like MCMC or variational inference—which are computationally expensive, particularly for complex or high-dimensional models. This repeated sampling increases computational costs, slows convergence, and introduces bias into the gradient estimates, potentially destabilizing training and leading the optimization toward suboptimal solutions.

Our BNGP approach circumvents the need for direct posterior modeling, thereby avoiding the computational and analytical challenges discussed above. By sidestepping explicit posterior estimation, our method can efficiently handle complex compositions of multiple heterogeneous mechanisms—even when these mechanisms interact. Moreover, since $\mathtt{M}(D, \cdot)$ can serve as a \textit{black-box data processing oracle}, our framework is capable of incorporating diverse real-world processing modules without requiring detailed knowledge of their internal operations. This flexibility not only simplifies the optimization process but also enhances the practicality and robustness of our privacy-preserving strategy.

\subsection{Properties of BGP Risk}\label{app:properties_of_BGP}

\textit{Post-processing} and \textit{composition} are two important properties for useful privacy frameworks such as the DP framework.
We show that the BGP risk (Definition \ref{def:BGP_response_risk}) satisfies the properties of post-processing and composition.

\subsubsection{Post-processing}
Let $\textup{Proc}: \mathcal{X}\mapsto \widetilde{\mathcal{X}}$ be any (possibly randomized) post-processing map. 
Given the original privatized output $x=\mathcal{M}(D;G)\in\mathcal{X}$, the post-processed output $\textup{Proc}(\mathcal{M}(D;G))$ lives in $\widetilde{\mathcal{X}}$.
In addition, let $\textup{Proc}\circ G$ denote the randomization that combines the randomness from $G$ and the post-processing.
That is, $\textup{Proc}\circ G$ takes membership $b$ and seed $r$ as input (via $G$), generates noise $\xi$, apply the $G$-perturbed $\mathcal{M}$, then post-processes via $\mathrm{Proc}$.

Proposition \ref{prop:post_processing} shows that the BGP risk satisfies the post-processing property.
That is, processing a data-sharing mechanism's output cannot increase input data information.

\begin{proposition}[Post-Processing]\label{prop:post_processing}
Suppose that $G$ has the BGP risk $H \in \arg\min\nolimits_{H''}\mathcal{L}^{\sigma}_{\texttt{CEL}}(H'', G)$.
Suppose in addition that for any $\textup{Proc}$, $\textup{Proc} \circ G$ has BGP risk $H' \in\arg\min\nolimits_{H''}\mathcal{L}_{\texttt{CEL}}(H'', \textup{Proc} \circ G)$. 
Then, $\mathcal{L}^{\sigma}_{\texttt{CEL}}(H', \textup{Proc} \circ G) \geq \mathcal{L}^{\sigma}_{\texttt{CEL}}(H,G)$.
\end{proposition}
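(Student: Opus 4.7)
The plan is to reduce the post-processing property for the BGP risk to the classical data processing inequality for conditional entropy, using the variational characterization of cross-entropy. This is the natural route because the excerpt has already observed (via Lemma \ref{lemma:unique_posterior} and the remark on conditional entropy) that the BGP risk equals $\mathcal{H}^{\sigma}_{G}(B;X)$.

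First, I would invoke Lemma \ref{lemma:unique_posterior} to replace both $H$ and $H'$ by the true posteriors they represent: $H$ is the posterior $\mu^{G}_{\sigma}(b\mid x)$ induced by $(\sigma, \rho_{G})$, and $H'$ is the posterior $\mu^{\textup{Proc}\circ G}_{\sigma}(b\mid y)$ induced by $(\sigma, \rho_{\textup{Proc}\circ G})$, where $y\in\widetilde{\mathcal{X}}$. Substituting these minimizers back into $\mathcal{L}^{\sigma}_{\texttt{CEL}}$ converts the two BGP risks into conditional entropies: $\mathcal{L}^{\sigma}_{\texttt{CEL}}(H,G) = \mathcal{H}^{\sigma}_{G}(B;X)$ and $\mathcal{L}^{\sigma}_{\texttt{CEL}}(H',\textup{Proc}\circ G) = \mathcal{H}^{\sigma}_{\textup{Proc}\circ G}(B;Y)$, where $Y=\textup{Proc}(X)$. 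After this reduction, the claim becomes $\mathcal{H}^{\sigma}(B\mid Y)\geq \mathcal{H}^{\sigma}(B\mid X)$.

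Next, I would establish the Markov structure $B \to X \to Y$. By the definition of $\textup{Proc}\circ G$ given just before the proposition, $Y$ is generated by first drawing $\xi \sim G(\cdot\mid b)$, forming $X=\mathtt{M}(D,\xi)$, and then applying the (possibly randomized) map $\textup{Proc}$ to $X$; crucially, the internal randomness of $\textup{Proc}$ is independent of $B$ conditional on $X$. With the Markov chain in hand, I would apply the data processing inequality $I^{\sigma}(B;Y)\leq I^{\sigma}(B;X)$. Since the marginal prior $\sigma$ over $B$ (and therefore the unconditional entropy of $B$) is identical in both expressions, subtracting $\mathcal{H}^{\sigma}(B)$ from both mutual informations yields $\mathcal{H}^{\sigma}(B\mid Y)\geq \mathcal{H}^{\sigma}(B\mid X)$, which is exactly the desired inequality.

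The main obstacle I anticipate is not the high-level argument but the bookkeeping of the measure-theoretic setup: one has to be careful that $\rho_{\textup{Proc}\circ G}$ is well-defined when $\textup{Proc}$ is randomized and $\widetilde{\mathcal{X}}$ may be continuous or discrete, and that the pointwise identification of the minimizer with the posterior in Lemma \ref{lemma:unique_posterior} is valid $\rho_{G}$-almost everywhere (otherwise the substitution step could pick up a null-set issue). A clean way to sidestep this is to give a short one-line proof of the DPI directly in our notation via the non-negativity of a KL divergence: consider $\mathrm{KL}\!\left(\mu^{G}_{\sigma}(\cdot\mid x)\,\Vert\,\mu^{\textup{Proc}\circ G}_{\sigma}(\cdot\mid \textup{Proc}(x))\right)\geq 0$, take expectation over $(B,X)$, and the difference of the two conditional entropies drops out with the correct sign. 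If I take this route, the whole argument fits in a few lines and avoids any appeal to external information-theoretic lemmas.
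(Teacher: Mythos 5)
Your proof is correct and its mathematical core coincides with the paper's: both arguments use Lemma \ref{lemma:unique_posterior} to identify each BGP risk with the conditional entropy of $B$ given the corresponding output, and then conclude via the data processing inequality on the chain $B \to X \to \textup{Proc}(X)$. The only substantive difference is that the paper first detours through Blackwell informativeness (invoking the trade-off-function comparison of Dong et al.\ and Blackwell's theorem to argue $G$ is more informative than $\textup{Proc}\circ G$) before applying the DPI, whereas you obtain the Markov structure directly from the definition of $\textup{Proc}\circ G$ --- which is cleaner, since the randomness of $\textup{Proc}$ is by construction independent of $B$ given $X$, making the Blackwell step superfluous. Your optional self-contained argument via non-negativity of the KL divergence between the two posteriors is a nice touch that removes the dependence on external information-theoretic citations altogether; either route is valid.
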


\subsubsection{Composition}

Another key property of the BNGP risk is that it composes gracefully, ensuring privacy risk grows in a controlled, sublinear manner under sequential composition of mechanisms accessing the input dataset or related datasets.

Consider a profile $\vec{G} = \{G_{1}, \dots, G_{n}\}$ for $1 \leq n < \infty$, where each neural network $G_{j}$ represents the density function $g_{j}$. 
Let $\mathcal{M}_{j}(\cdot;G_{j}): \mathcal{D} \mapsto \mathcal{X}_{j}$ denote the $j$-th mechanism, for all $j \in [n]$, where $\mathcal{X}_{j}$ represents the output space of $\mathcal{M}_{j}$. Additionally, let $\rho_{j}: \mathcal{D} \mapsto \Delta(\mathcal{X}_{j})$ denote the underlying density function of $\mathcal{M}_{j}(\cdot;G_{j})$.

Define the composition $\mathcal{M}(\vec{G}): \mathcal{D} \mapsto \prod\nolimits_{j=1}^{n} \mathcal{X}_{j}$ of mechanisms $\mathcal{M}_{1}(\cdot;G_{1}), \dots, \mathcal{M}_{n}(\cdot;G_{n})$ as
\begin{equation}\label{eq:composition_def}
    \mathcal{M}(D; \vec{G}) \equiv \left(\mathcal{M}_{1}(D; G_{1}),  \dots, \mathcal{M}_{n}(D; G_{n})\right).
\end{equation}

The joint density function of $\mathcal{M}(\cdot; \vec{G})$, denoted by $\vec{\rho}: \mathcal{D} \mapsto \Delta\left(\prod\nolimits_{j=1}^{n} \mathcal{X}_{j}\right)$, encodes any underlying dependencies among the mechanisms. 
The mechanisms are independent if $\vec{\rho}(x_{1}, \dots, x_{n} | D) = \prod\nolimits_{j=1}^{n} \rho_{j}(x_{j} | D)$.
Otherwise, they are dependent. 
For simplicity, denote
\[
\vec{\mathtt{M}}(D, \vec{G}(b,\vec{r}))\equiv \left(\mathtt{M}_{1}\left(D, G_{1}(b, r_{1})\right), \dots, \mathtt{M}_{n}\left(D, G_{n}(b, r_{n})\right)\right).
\]
Let $\vec{x}=(x_{1}, \dots, x_{n}) = \vec{\mathtt{M}}(D, \vec{G}(b,\vec{r}))$.
Recall that each $\mathtt{M}_{j}(D, \xi_{i}) = \mathcal{M}_{i}(D; G_i)$ when noise $\xi_{i} = G_i(b, r_i)$ is generated, for $i\in[n]$.
Let $\vec{H}(\vec{x})$ denote the attacker's discriminator that uses all outputs (irrespective of their order), and let $H_{j}(x_{j})$ represent the discriminator that takes only $x_{j}$.

\begin{lemma}\label{lemma:composition}
    Suppose that $\mathcal{M}(\vec{G})$ is a composition of $n$ mechanisms with arbitrary correlation.
Then, for $\vec{H}^{*}\in \argmin\nolimits_{\vec{H}}\mathcal{L}^{\sigma}_{\texttt{CEL}}(\vec{H}, \vec{G})$, $H^{*}_{j}\in \argmin\nolimits_{H_{j}}\mathcal{L}^{\sigma}_{\texttt{CEL}}( H_{j},G_{j})$ for all $j\in[n]$,
\begin{align*}
    \mathcal{L}^{\sigma}_{\texttt{CEL}}(\vec{H}^{*},\vec{G})&= \sum^{n}\nolimits_{j=1}\mathcal{L}^{\sigma}_{\texttt{CEL}}(H^{*}_{j},G_{j}) + \Lambda^{\sigma}(\vec{G})\\
    &\leq \frac{1}{2}\mathbb{E}^{\vec{r}\sim \mathcal{U}^{n}}_{b\sim\sigma}\left[\log\Bigl((2\pi e)^K \det(\Sigma[\vec{G}\left(b,\vec{r}\right)])\Bigr)\right],
\end{align*}
where $K$ is the number of total individuals, $\Sigma[\vec{x}]$ is the conditional covariance matrix of the membership vector given the joint output $\vec{x}$, and
\[
\Lambda^{\sigma}(\vec{G}) =
\begin{cases}
    \mathcal{H}(Q), & \textup{ if mechanisms are \textup{independent}},\\
    \texttt{D}_{\texttt{KL}}\left(Q\| P\right), & \textup{ if mechanisms are \textup{dependent}}.
\end{cases}
\]
Here, $Q(\vec{x}) = \sum_{b} \vec{\rho}(\vec{x}|b)\sigma(b)$ and $P(\vec{x}) = \prod_{j=1}^n Q_j(x_j)$, where $Q_j(x_j) = \int_{\vec{\mathcal X}{-j}} Q(x_j, \vec{x}_{-j})\,d\vec{x}_{-j}$, with $\vec{\mathcal X}{-j}=\prod_{k\neq j}\mathcal{X}_{k}$ and $\vec{x}_{-j}$ is the joint outputs except $x_{j}$.
In addition, $\mathcal{H}(\cdot)$ is the differential entropy, and $\texttt{D}_{\texttt{KL}}(\cdot)$ is the Kullback–Leibler (KL) divergence.
\end{lemma}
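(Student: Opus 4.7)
The plan is to leverage Lemma \ref{lemma:unique_posterior}, which identifies every BGP response with the true Bayes posterior. Consequently, $\mathcal{L}^{\sigma}_{\texttt{CEL}}(\vec{H}^{*},\vec{G}) = \mathcal{H}^{\sigma}_{\vec{G}}(B;\vec{X})$, the conditional entropy of $B$ given the joint mechanism output $\vec{X} = \vec{\mathtt{M}}(D,\vec{G}(B,\vec{r}))$ under the joint density $\vec{\rho}$ and prior $\sigma$; similarly $\mathcal{L}^{\sigma}_{\texttt{CEL}}(H^{*}_{j},G_{j}) = \mathcal{H}^{\sigma}_{G_j}(B;X_j)$ for each $j$, using only the marginal $\rho_j$. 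The lemma therefore reduces to (i) an information-theoretic identity comparing $\mathcal{H}(B|\vec{X})$ against $\sum_j \mathcal{H}(B|X_j)$ plus a residual $\Lambda^{\sigma}(\vec{G})$, and (ii) a Gaussian upper bound on $\mathcal{H}(B|\vec{X})$.

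For the equality, I would invoke the standard decomposition $\mathcal{H}(B|\vec{X}) = \mathcal{H}(B) - I(B;\vec{X})$ and $\mathcal{H}(B|X_j) = \mathcal{H}(B) - I(B;X_j)$, so that the residual absorbs the difference $\sum_j I(B;X_j) - I(B;\vec{X})$ together with the appropriate multiple of $\mathcal{H}(B)$. In the conditionally-independent case ($\vec{\rho}(\vec{x}|b) = \prod_j \rho_j(x_j|b)$), $\mathcal{H}(\vec{X}|B)$ factorizes as $\sum_j \mathcal{H}(X_j|B)$, so the residual collapses to a pure function of the marginal $Q(\vec{x})=\sum_b \vec{\rho}(\vec{x}|b)\sigma(b)$ of $\vec{X}$, yielding the $\mathcal{H}(Q)$ term. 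In the dependent case, the conditional joint entropy no longer factorizes, and the extra coupling among the $X_j$'s is exactly captured by the total-correlation term $\texttt{D}_{\texttt{KL}}(Q\|P)$ between $Q$ and the product of its marginals $P(\vec{x})=\prod_j Q_j(x_j)$, which I would obtain by adding and subtracting $\sum_j \log Q_j(x_j)$ inside the cross-entropy integral and grouping terms.

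For the upper bound, I would apply the Gaussian maximum-entropy inequality to the conditional distribution of $B$ (viewed as a random vector in $\mathbb{R}^K$) given $\vec{X}=\vec{x}$: for any distribution on $\mathbb{R}^K$ with covariance $\Sigma$,
\[
\mathcal{H}(\,\cdot\,) \;\leq\; \tfrac{1}{2}\log\bigl((2\pi e)^K \det \Sigma\bigr).
\]
Applying this pointwise gives $\mathcal{H}(B|\vec{X}=\vec{x}) \leq \tfrac{1}{2}\log((2\pi e)^K \det \Sigma[\vec{x}])$, and then integrating with respect to $\vec{X}\sim Q$ and using the law of the unconscious statistician to rewrite the $\vec{X}$-expectation as $\mathbb{E}^{\vec{r}\sim\mathcal{U}^n}_{b\sim\sigma}[\,\cdot\,]$ through the reparameterization $\vec{X} = \vec{\mathtt{M}}(D,\vec{G}(b,\vec{r}))$ yields the stated inequality.

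The main obstacle will be Step~2: cleanly tracking the bookkeeping of the $\mathcal{H}(B)$ contributions in the chain-rule manipulation so that the residual reduces to the two compact forms stated (pure marginal entropy $\mathcal{H}(Q)$ in the independent case versus pure divergence $\texttt{D}_{\texttt{KL}}(Q\|P)$ in the dependent case), and correctly routing the correlation structure encoded in $\vec{\rho}$ through the derivation. A secondary subtlety is the Gaussian bound when $B$ is discrete: one must interpret the entropy and the Gaussian surrogate consistently, appealing to the fact that the Gaussian maximizes entropy at fixed covariance over all distributions on $\mathbb{R}^K$, so the inequality still holds after viewing the binary membership vector as supported in $\mathbb{R}^K$.
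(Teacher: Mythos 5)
Your proposal follows essentially the same route as the paper's: Lemma \ref{lemma:unique_posterior} converts all the BGP risks into conditional entropies, the residual $\Lambda^{\sigma}(\vec{G})$ is extracted by the same Bayes-rule/chain-rule bookkeeping (the paper does this by expanding the cross-entropy into prior-entropy, likelihood-entropy, and marginal-entropy integrals, which is the integral form of your mutual-information decomposition), and the inequality comes from the same pointwise Gaussian maximum-entropy bound on the conditional law of $B$ given $\vec{x}$, with the discrete-support subtlety handled exactly as the paper does via a nonnegative correction term that is then dropped. No substantive difference in approach.
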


\begin{theorem}[Additive Composition]\label{prop:composition_risk}
Suppose that $\mathcal{M}(\vec{G})$ is a composition of $n$ mechanisms with arbitrary correlation.
Let $\vec{H}^{*}\in \argmin\nolimits_{\vec{H}}\mathcal{L}^{\sigma}_{\texttt{CEL}}(\vec{H}, \vec{G})$ and $H^{*}_{j}\in \argmin\nolimits_{H_{j}}\mathcal{L}^{\sigma}_{\texttt{CEL}}( H_{j},G_{j})$ for all $j\in[n]$.
Suppose each $\mathcal{L}^{\sigma}_{\texttt{CEL}}(H^{*}_{j},G_{j}) = \textup{PB}_{j}$.
Then, $\mathcal{L}^{\sigma}_{\texttt{CEL}}(\vec{H}^{*},\vec{G}) = \sum^{n}_{j=1}\textup{PB}_{j} + \Lambda^{\sigma}(\vec{G})$.
\end{theorem}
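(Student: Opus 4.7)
The plan is to derive the theorem as a one-line consequence of Lemma \ref{lemma:composition}. That lemma states the identity
\begin{equation*}
\mathcal{L}^{\sigma}_{\texttt{CEL}}(\vec{H}^{*},\vec{G}) = \sum\nolimits^{n}_{j=1}\mathcal{L}^{\sigma}_{\texttt{CEL}}(H^{*}_{j},G_{j}) + \Lambda^{\sigma}(\vec{G}),
\end{equation*}
valid for any correlation structure among the composed mechanisms, with $\vec{H}^{*}$ and each $H^{*}_{j}$ being precisely the BGP responses posited in the theorem. Substituting the hypothesis $\mathcal{L}^{\sigma}_{\texttt{CEL}}(H^{*}_{j},G_{j}) = \textup{PB}_{j}$ for each $j \in [n]$ on the right-hand side then immediately yields $\mathcal{L}^{\sigma}_{\texttt{CEL}}(\vec{H}^{*},\vec{G}) = \sum^{n}_{j=1}\textup{PB}_{j} + \Lambda^{\sigma}(\vec{G})$, which is the claim.

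The conceptual content behind this one-line argument is worth recording. By Lemma \ref{lemma:unique_posterior}, each BGP response represents the true Bayesian posterior induced by its generator and $\sigma$, so every BGP risk coincides with a conditional entropy. The theorem therefore reads as a chain-rule-style decomposition of the joint conditional entropy $\mathcal{H}^{\sigma}_{\vec{G}}(B;\vec{X})$ into the sum of the single-mechanism conditional entropies $\mathcal{H}^{\sigma}_{G_{j}}(B;X_{j})$, plus a residual $\Lambda^{\sigma}(\vec{G})$ that specializes to the differential entropy of the joint output marginal $Q$ in the independent case and to the KL divergence $\texttt{D}_{\texttt{KL}}(Q\|P)$ between the true joint distribution and the product of marginals in the dependent case.

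Because the theorem itself is pure substitution, there is no technical obstacle at the level of the statement being proved here; the only care required is to verify that the $\vec{H}^{*}$ and $H^{*}_{j}$ appearing in the hypothesis of the theorem are the same BGP responses used in Lemma \ref{lemma:composition}, which is immediate from the definitions. If one were additionally asked to establish Lemma \ref{lemma:composition}, the main difficulty would lie in the dependent case: one would expand $\mathcal{H}^{\sigma}_{\vec{G}}(B;\vec{X})$ via the chain rule, re-express each summand through Bayes' rule using the joint density $\vec{\rho}$, and recognize the cross-terms that arise from coupling among mechanisms as $\texttt{D}_{\texttt{KL}}(Q\|P)$; the Gaussian upper bound in Lemma \ref{lemma:composition} would then follow from the maximum-entropy property applied to the posterior covariance $\Sigma[\vec{x}]$.
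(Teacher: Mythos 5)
Your proof is correct and matches the paper's intended logic: the paper itself introduces Theorem~\ref{prop:composition_risk} as an immediate consequence of Lemma~\ref{lemma:composition}, and since the $\vec{H}^{*}$ and $H^{*}_{j}$ in the two statements are defined identically, the theorem is indeed pure substitution of $\textup{PB}_{j}$ for $\mathcal{L}^{\sigma}_{\texttt{CEL}}(H^{*}_{j},G_{j})$ in the lemma's identity. Be aware only that the paper's appendix section titled ``Proof of Theorem~\ref{prop:composition_risk}'' is in fact the full derivation of that identity (independent case, dependent case, and the Gaussian upper bound), so all of the substantive work lives in the lemma you are citing --- and your closing sketch of how the lemma itself would be established agrees with the paper's actual argument.
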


Based on Lemma \ref{lemma:composition}, Theorem \ref{prop:composition_risk} shows that for any composition of $n$ mechanisms, the total BGP risk admits a additive aggregation: $\mathcal{L}^{\sigma}_{\texttt{CEL}}(\vec{H}^{*},\vec{G}) = \sum^{n}_{j=1}\textup{PB}_{j} + \Lambda^{\sigma}(\vec{G})$.
When mechanisms are independent, the term $\Lambda^{\sigma}(\vec{G})$ is the differential entropy of the joint output, which captures its intrinsic uncertainty.
In the presence of arbitrary correlations, $\Lambda^{\sigma}(\vec{G})$ equals $\texttt{D}_{\texttt{KL}}(Q\| P)$, the KL divergence between the true joint distribution and the product of the marginals, thereby quantifying precisely the extra information leaked by interdependencies.
Crucially, because $\Lambda^{\sigma}(\vec{G})$ measures residual uncertainty about a fixed-size membership vector $B$, it cannot grow proportionally with $n$; at worst it remains bounded by the secret's entropy, so the overall privacy risk increases only sublinearly in the number of composed mechanisms.

\subsection{Subjective Prior}\label{sec:subjective_prior}

The true membership prior $\theta\in \Delta(\{0,1\})^{K}$ specifies each individual's marginal probability of being selected to be included in the dataset.
The attacker's subjective prior $\sigma\in \Delta(\{0,1\})^{K}$ directly impacts MIA performance and thus the resulting privacy risk evaluated under the true prior $\theta$.
Proposition \ref{prop:static_subjective_prior} shows that, for any static (i.e., fixed) attack belief $\sigma$, using the BGP response matched to the true prior $\theta$ can never increase the privacy loss.

\begin{proposition}\label{prop:static_subjective_prior}
For any privacy strategy $G$, let $H^{*}_{\theta}\in\arg\min_{H}$ $\mathcal{L}^{\theta}_{\texttt{CEL}}(H, G)$, and let $H^{*}_{\sigma}\in\arg\min_{H}$ $\mathcal{L}^{\sigma}_{\texttt{CEL}}(H, G)$.
Then, for any static $\sigma\in \Delta(\{0,1\})^{K}$, we have $\mathcal{L}^{\theta}_{\texttt{CEL}}(H^{*}_{\theta}, G)\leq \mathcal{L}^{\theta}_{\texttt{CEL}}(H^{*}_{\sigma}, G)$.
\end{proposition}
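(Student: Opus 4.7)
The statement is a direct instance of the classical fact that log--loss is a strictly proper scoring rule: its pointwise minimizer is the true conditional label distribution, so predicting anything else incurs a nonnegative KL penalty. The plan is to make this explicit for the cross-entropy loss defined by $\ell_{\texttt{CEL}}$, using Lemma~\ref{lemma:unique_posterior} to identify $H^{*}_{\theta}$ with the true-prior posterior and then bounding the mismatch cost of any alternative $H$.

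\textbf{Step 1: Rewrite $\mathcal{L}^{\theta}_{\texttt{CEL}}(H,G)$ via the marginal over $x$.} First I would unfold the expectation in \eqref{eq:BCEL} with $b\sim\theta$ and $x\sim \rho_{G}(\cdot\mid b)$, and exchange summations using the law of total expectation to obtain
\begin{align*}
\mathcal{L}^{\theta}_{\texttt{CEL}}(H,G)
 = -\sum_{k\in\mathcal{K}}\int_{\mathcal{X}} p_{\theta}(x)\Bigl[
 &\mu_{\theta}(b_{k}=1\mid x)\log H(x)_{k}\\
 &+ \mu_{\theta}(b_{k}=0\mid x)\log(1-H(x)_{k})\Bigr]\,dx,
\end{align*}
where $p_{\theta}(x)=\sum_{b}\rho_{G}(x\mid b)\theta(b)$ and $\mu_{\theta}(\cdot\mid x)$ denotes the posterior induced by $\theta$ and $G$.

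\textbf{Step 2: Identify $H^{*}_{\theta}$ with the $\theta$-posterior marginals.} Lemma~\ref{lemma:unique_posterior} (applied with $\sigma:=\theta$) gives that $H^{*}_{\theta}(x)_{k}=\mu_{\theta}(b_{k}=1\mid x)$ for $p_{\theta}$-almost every $x$. Substituting this into the integrand above yields the expected conditional entropy under the true prior.

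\textbf{Step 3: Decompose the excess loss as a KL divergence.} For any competitor $H$, I would form the difference
\begin{align*}
&\mathcal{L}^{\theta}_{\texttt{CEL}}(H,G)-\mathcal{L}^{\theta}_{\texttt{CEL}}(H^{*}_{\theta},G)\\
&=\sum_{k\in\mathcal{K}}\int_{\mathcal{X}} p_{\theta}(x)\,\texttt{D}_{\texttt{KL}}\!\Bigl(\mathrm{Bern}\bigl(\mu_{\theta}(b_{k}=1\mid x)\bigr)\,\big\|\,\mathrm{Bern}(H(x)_{k})\Bigr)dx,
\end{align*}
which follows by adding and subtracting $\log\mu_{\theta}(b_{k}=1\mid x)$ and $\log\mu_{\theta}(b_{k}=0\mid x)$ in the appropriate summands of Step~1. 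Since each binary KL divergence is nonnegative, the right-hand side is $\geq 0$. Finally, specializing $H:=H^{*}_{\sigma}$ gives $\mathcal{L}^{\theta}_{\texttt{CEL}}(H^{*}_{\theta},G)\leq \mathcal{L}^{\theta}_{\texttt{CEL}}(H^{*}_{\sigma},G)$, as claimed.

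\textbf{Anticipated obstacle.} There is no deep obstacle; the entire argument is the cross-entropy--KL decomposition. The one place to be careful is the role of the two priors: the expectation defining the loss must always be taken under the true generating distribution $(\theta,\rho_{G})$, while $H^{*}_{\sigma}$, viewed as a function on $\mathcal{X}$, is fixed to the $\sigma$-posterior. The inequality then measures precisely the predictive suboptimality of using the wrong posterior on the true distribution, and the nonnegativity of KL seals the argument regardless of how far $\sigma$ is from $\theta$.
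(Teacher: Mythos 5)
Your proposal is correct and follows essentially the same route as the paper: both reduce the claim to the fact (Lemma~\ref{lemma:unique_posterior}) that the $\theta$-posterior is the global minimizer of $\mathcal{L}^{\theta}_{\texttt{CEL}}(\cdot,G)$, and then plug in $H^{*}_{\sigma}$ as one particular competitor. The only difference is cosmetic — you re-derive the cross-entropy--KL decomposition inside the proof, whereas the paper simply cites the lemma; indeed the inequality already follows from the definition of $H^{*}_{\theta}$ as an $\arg\min$, so the KL machinery, while correct, is not strictly needed here.
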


In practice, an attacker may condition on side information $q$ about the true membership vector $b$ and thus hold priors more informative about the underlying $b$ than the true prior $\theta$.
Let $q$ be drawn from a signal space $\mathcal{V}$ according to a kernel $\beta(\cdot|b)$.
Given such a $q$, the attacker's subjective prior is therefore a mapping $\sigma\colon \mathcal V\longrightarrow \Delta(\{0,1\})^K$, $q\mapsto\bigl(\sigma_i(b_{i}\mid q)\bigr)_{i=1}^K$.
Let $\Sigma^{\beta,\sigma}\equiv\{\sigma(\cdot|q)\in \Delta(\{0,1\})^K, q\in \mathcal{V}, q\sim \beta\}$ be the full family of signal-indexed priors.
That is, the attacker's prior becomes a random variable induced by $\beta$ and $\theta$.
We then define the \textit{Bayes-plausible} subset 
$$\Sigma^{\beta,\sigma}_{\mathrm{BP}}\equiv\left\{
\begin{aligned}
    &\sigma(\cdot|q)\in\Sigma^{\beta,\sigma}:\\&\sum_{b'}\int_{\mathcal V}\sigma(b\mid q) \beta(dq|b')\theta(b')=\theta(b),\forall b\in\{0,1\}^K
\end{aligned}
\right\}.$$

When we consider such priors, rewrite $\mathcal{L}^{\sigma}_{\texttt{CEL}}$ in (\ref{eq:BCEL}) as:
\begin{equation*}
    \mathcal{L}^{\sigma,q}_{\texttt{CEL}}(H_{q},G)
        \equiv \mathbb{E}^{r\sim \mathcal{U}}_{b\sim \sigma(\cdot|q) }\left[\ell_{\texttt{CEL}}\left(H_{q}\left(\mathtt{M}\left(G(b, r) \right)\right), b\right)\right].
\end{equation*}
When the true membership vector is $b$, for any $H^{*}_{\sigma,q}\in\arg\min_{H}$ $\mathcal{L}^{\sigma,q}_{\texttt{CEL}}(H, G)$, define 
\begin{equation*}
    \mathcal{L}^{\theta}_{\texttt{CEL}}(H^{*}_{\sigma,q},G|b)
        \equiv \mathbb{E}^{r\sim \mathcal{U}}_{b'\sim \theta }\left[\ell_{\texttt{CEL}}\left(H^{*}_{\sigma,q}\left(\mathtt{M}\left(G(b', r) \right)\right), b'\right)\right].
\end{equation*}
Here, we have dependence on $b$ because $q$ that is drawn by $\beta(\cdot|b)$ implicitly depends on $b$.

\begin{proposition}\label{prop:Bayes_plausibility}
Let $\sigma$ be an attacker's subjective prior.
    For any privacy strategy $G$, let $H^{*}_{\theta}\in\arg\min_{H}$ $\mathcal{L}^{\theta}_{\texttt{CEL}}(H, G)$, and let $H^{*}_{\sigma,q}\in\arg\min_{H}$ $\mathcal{L}^{\sigma,q}_{\texttt{CEL}}(H, G)$.
    Then, $\mathcal{L}^{\theta}_{\texttt{CEL}}(H^{*}_{\theta}, G)$ $\geq \mathcal{L}^{\theta}_{\texttt{CEL}}(H^{*}_{\sigma,q}, G|b)$ for all $q\in\mathcal{V}$, $b\in W$ if and only if for $\beta$-almost every $q\in\mathcal{V}$, $\sigma(\cdot|q)\in\Sigma_{\mathrm{BP}}$.
\end{proposition}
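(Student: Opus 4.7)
The plan is to invoke Lemma \ref{lemma:unique_posterior} to replace $H^{*}_{\theta}$ and $H^{*}_{\sigma,q}$ by their closed-form Bayes-optimal posteriors: $H^{*}_{\theta}(x)$ represents $\mu_{\theta}(b|x)\propto \rho_{G}(x|b)\theta(b)$, and $H^{*}_{\sigma,q}(x)$ represents $\mu_{\sigma,q}(b|x)\propto \rho_{G}(x|b)\sigma(b|q)$. This reduces the proposition to a comparison of expected cross-entropies between these two posterior rules, both evaluated under the true sampling law $b\sim\theta$, $r\sim\mathcal U$, $x=\mathtt{M}(G(b,r))$, with the signal $q$ drawn from $\beta(\cdot|b)$.

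For the forward direction ($\Leftarrow$), the key identity is that Bayes-plausibility makes the true-prior loss a $p(q)$-mixture of subjective-prior losses. Concretely, the condition $\sum_{b'}\int \sigma(b|q)\beta(dq|b')\theta(b')=\theta(b)$ lets me interchange order of summation and write $\mathcal{L}^{\theta}_{\texttt{CEL}}(H,G) = \mathbb{E}_{Q\sim p}\bigl[\mathcal{L}^{\sigma(\cdot|Q)}_{\texttt{CEL}}(H,G)\bigr]$ for every classifier $H$. Applied to $H=H^{*}_{\theta}$, together with the pointwise optimality $\mathcal{L}^{\sigma(\cdot|q)}_{\texttt{CEL}}(H^{*}_{\sigma,q},G)\le \mathcal{L}^{\sigma(\cdot|q)}_{\texttt{CEL}}(H^{*}_{\theta},G)$ that follows from Definition \ref{def:BGP_response_risk}, this yields the claimed inequality once I reconcile with the definition of $\mathcal{L}^{\theta}_{\texttt{CEL}}(\cdot,G|b)$ using the joint law of $(b,q)$ induced by $\theta$ and $\beta(\cdot|b)$. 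Gibbs' inequality (expected cross-entropy = conditional entropy + expected KL) gives the correct accounting for the residual term and shows where equality is attained.

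For the converse ($\Rightarrow$), I would proceed by contradiction. Suppose BP fails on a set $A\subset\mathcal V$ with $p(A)>0$, i.e., there exists $b^{*}$ such that $\int \sigma(b^{*}|q)p(q)dq\ne \theta(b^{*})$. Then the mixture identity $\mathcal{L}^{\theta}_{\texttt{CEL}} = \mathbb{E}_Q[\mathcal{L}^{\sigma(\cdot|Q)}_{\texttt{CEL}}]$ must fail strictly in one direction, and by the strict convexity of KL divergence and the uniqueness of the Bayes-optimal posterior (Lemma \ref{lemma:unique_posterior}), I would localize — via a standard Lebesgue continuity argument — a signal $q\in A$ at which $\mu_{\sigma,q}$ departs from the true posterior enough to drive the subjective cross-entropy strictly above the oracle's. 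Choosing a compatible $b$ (from the support of $\beta(\cdot|b)$ intersecting $A$) then witnesses a violation of the inequality, contradicting the hypothesis.

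The main obstacle is the converse: converting a functional/integral failure of BP into a concrete pointwise violation of the cross-entropy inequality. The argument must combine strict convexity of $\mathtt{D}_{\texttt{KL}}$, uniqueness from Lemma \ref{lemma:unique_posterior}, and measure-theoretic localization (passing from \emph{$\beta$-a.e.} to a specific $(q,b)$). Handling the distinction between the ``for all $q,b$'' quantifier in the inequality and the ``$\beta$-almost every $q$'' quantifier in the BP conclusion — in particular verifying that the contradiction can always be exhibited on the support of $\beta$ — is the delicate technical step.
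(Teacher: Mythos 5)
Your forward direction is, in substance, the same argument the paper gives: the paper constructs the refined posterior $h^{q}_{G}(b\mid x_{G},x_{q})\propto q(x_{q}\mid b)\,g(x_{G}\mid b)\,\theta(b)$ and invokes ``conditioning reduces entropy,'' which is exactly the martingale consequence of Bayes-plausibility that you package as the mixture identity $\mathcal{L}^{\theta}_{\texttt{CEL}}(H,G)=\mathbb{E}_{Q}\bigl[\mathcal{L}^{\sigma(\cdot|Q)}_{\texttt{CEL}}(H,G)\bigr]$. However, both your chain and the paper's stop at an \emph{averaged} inequality: you end with $\mathcal{L}^{\theta}_{\texttt{CEL}}(H^{*}_{\theta},G)\ge\mathbb{E}_{Q}\bigl[\mathcal{L}^{\sigma(\cdot|Q)}_{\texttt{CEL}}(H^{*}_{\sigma,Q},G)\bigr]$, in which the signal-conditioned classifier is scored under its own subjective prior and then averaged over $q$, whereas the proposition asserts $\mathcal{L}^{\theta}_{\texttt{CEL}}(H^{*}_{\theta},G)\ge\mathcal{L}^{\theta}_{\texttt{CEL}}(H^{*}_{\sigma,q},G\mid b)$ \emph{for every} $q$ and $b$, with the classifier scored under $\theta$. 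The ``reconciliation'' you defer is therefore not bookkeeping but the actual content of the pointwise claim: an average bound does not yield a uniform-in-$(q,b)$ bound without an additional argument, and scoring $H^{*}_{\sigma,q}$ under $\theta$ rather than $\sigma(\cdot|q)$ can only increase its loss (by Lemma \ref{lemma:unique_posterior}, $H^{*}_{\theta}$ minimizes $\mathcal{L}^{\theta}_{\texttt{CEL}}$ over all classifiers of $x$ alone), so you must make explicit that $H^{*}_{\sigma,q}$ is being treated as an attacker with access to the extra observation $q$, as the paper's construction of $h^{q}_{G}(\cdot\mid x_{G},x_{q})$ does. Without that, the inequality as you have set it up points the wrong way.

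The converse direction is a genuine gap. The paper's own proof does not address it at all, and your proposal for it is a plan rather than a proof: the step of converting a positive-measure failure of Bayes-plausibility (an integral identity over $q$) into a \emph{specific} pair $(q,b)$ with $q$ in the support of $\beta(\cdot\mid b)$ at which the pointwise cross-entropy inequality fails is precisely where all the work lies, and strict convexity of $\texttt{D}_{\texttt{KL}}$ plus ``a standard Lebesgue continuity argument'' does not yet produce it --- in particular you never show that the direction of the violation of the mixture identity forces the loss of $H^{*}_{\sigma,q}$ under $\theta$ to \emph{exceed} that of $H^{*}_{\theta}$ at some admissible $q$, rather than merely differ from it. As written, the proposal establishes (modulo the reconciliation above) only the averaged ``if'' half of the equivalence.
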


Proposition \ref{prop:static_subjective_prior} shows that if an attacker fixes any prior $\sigma\neq \theta$, the defender's $\theta$-BGP response always incurs no greater expected privacy loss (in terms of $\mathcal{L}^{\theta}_{\texttt{CEL}}$) under the true prior $\theta$ than the response tuned to $\sigma$.
Proposition \ref{prop:Bayes_plausibility} sharpens this to the per-state level: an attacker who adapts their prior to a signal $q$ can reduce the defender’s pointwise loss only if that signal-conditioned posterior is a Bayes-plausible refinement of $\theta$.  Any \textit{adaptive} prior that fails the Bayes-plausibility test cannot improve inference over the true-prior predictor.

\subsection{Mechanism Comparison in BNGP privacy}

Comparing privacy mechanisms on both their formal guarantees and empirical utility is inherently difficult.
There is no single metric or ranking that holds across all settings.
For example, differential privacy (DP) typically uses a parameter pair $(\epsilon, \delta)$ to quantify privacy loss \cite{abadi2016deep,papernot2021tempered}.
Yet two mechanisms with the same $(\epsilon, \delta)$ can differ drastically in how that budget is spent across individual queries or data points \cite{kaissis2024beyond}.
Likewise, even fixing $(\epsilon,\delta)$, different noise distributions yield different utility-privacy trade-offs for the same data processing applications \cite{geng2014optimal}.

Let $\mathcal{G}$ represent the set of privacy strategies we wish to compare.
To make any two strategies in $\mathcal{G}$ meaningfully comparable, we impose the following regularity condition in Assumption \ref{assp:regular_condition} on utility loss $\mathtt{UtilityL}(\cdot)$. 
This assumption does not affect the derivation of the BNGP strategy for any defender-specific definition of utility loss.

\begin{assumption}\label{assp:regular_condition}
For fixed attacker strategy $H$ and prior distribution $\sigma$, if privacy strategies $G$ and $G'$ satisfy $\mathcal{L}^{\sigma}_{\texttt{CEL}}(H,G) \geq \mathcal{L}^{\sigma}_{\texttt{CEL}}(H,G')$, then it also satisfies $\mathtt{UtilityL}(G) \geq \mathtt{UtilityL}(G')$. Additionally, strategies with equal utility loss must have identical CEL loss.
\end{assumption}

This assumption simply enforces a privacy–utility monotonicity on our strategy class: whenever one mechanism is strictly less private (i.e., has higher conditional‐entropy loss), it cannot be strictly more accurate (i.e., have lower utility loss), and conversely, equal utility implies equal privacy loss.  Standard metrics, such as mean squared error, absolute error, classification accuracy, and negative log-likelihood, satisfy this assumption.

\begin{proposition}\label{thm:BNGP}
Suppose Assumption \ref{assp:regular_condition} holds.
Let $G^{*}\in\mathcal{G}$ be a $\sigma$-BNGP optimal strategy, and recall $\widehat{\mathcal{L}}^{\sigma}_{\texttt{def}}(H, G | Z)$ from (\ref{eq:trade_off_PB}) and (\ref{eq:trade_off_UB}). Let $H_{G}$ represent the posterior distribution induced by strategy $G$ and prior $\sigma$. For any alternative strategy $G'$ that minimizes the defender’s objective function $\widehat{\mathcal{L}}^{\sigma}_{\texttt{def}}$ with attacker’s optimal response $H'$, we have:
\[
\mathcal{L}^{\sigma}_{\texttt{CEL}}(H_{G^{*}},G^{*}) \leq \mathcal{L}^{\sigma}_{\texttt{CEL}}(H_{G'},G').
\]
\end{proposition}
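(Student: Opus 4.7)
The plan is to reduce the statement to an equality-of-BGP-risks argument by first invoking Lemma \ref{lemma:unique_posterior} to identify both $H_{G^{*}}$ and $H_{G'}$ as the unique CEL-minimizing posteriors (hence BGP responses) to $G^{*}$ and $G'$ respectively, so that the two sides of the claimed inequality are the $\sigma$-BGP risks of the two strategies. Then, since $G^{*}$ is a $\sigma$-BNGP strategy and $G'$ is, by hypothesis, also a minimizer of $\widehat{\mathcal{L}}^{\sigma}_{\texttt{def}}$ against its own optimal response, both $(G^{*},H_{G^{*}})$ and $(G',H_{G'})$ attain the same minimum value of the defender's objective. Together with Assumption \ref{assp:regular_condition}, this will tie the CEL and the utility loss at the two optima.

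I would then split on the three forms of the defender's objective. For the privacy-budget form (\ref{eq:trade_off_PB}), both $G^{*}$ and $G'$ achieve the minimum feasible $\mathtt{UtilityL}$; the ``equal utility implies identical CEL'' clause of Assumption \ref{assp:regular_condition} then forces $\mathcal{L}^{\sigma}_{\texttt{CEL}}(H,G^{*})=\mathcal{L}^{\sigma}_{\texttt{CEL}}(H,G')$ for every fixed $H$, and specializing $H$ to the posteriors (which minimize CEL at the respective strategies) collapses this to the equality of BGP risks. For the utility-budget form (\ref{eq:trade_off_UB}), both optima share the same minimum value of $\mathtt{PrivacyL}=-\mathcal{L}^{\sigma}_{\texttt{CEL}}$ at the BGP response, so the BGP risks coincide directly. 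In both cases the inequality holds trivially as equality.

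For the preference form (\ref{eq:additive_obj}), both optima share the same value of $-\mathcal{L}^{\sigma}_{\texttt{CEL}}(H_{G},G)+\kappa\,\mathtt{UtilityL}(G)$, so any discrepancy in BGP risk must be matched, up to the factor $\kappa$, by a discrepancy in utility loss. To convert this into the claimed inequality I would introduce the auxiliary cross-quantities $\mathcal{L}^{\sigma}_{\texttt{CEL}}(H_{G^{*}},G')$ and $\mathcal{L}^{\sigma}_{\texttt{CEL}}(H_{G'},G^{*})$, and use two ingredients: (i) Lemma \ref{lemma:unique_posterior}, which gives $\mathcal{L}^{\sigma}_{\texttt{CEL}}(H_{G},G)\le \mathcal{L}^{\sigma}_{\texttt{CEL}}(H',G)$ for any $H'$; and (ii) the monotonicity part of Assumption \ref{assp:regular_condition}, which, applied with a \emph{fixed} attacker, converts a comparison of CEL across $G$ and $G'$ into a comparison of $\mathtt{UtilityL}(G)$ and $\mathtt{UtilityL}(G')$. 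Chaining these through the equality of the weighted sums forces the two BGP risks to coincide, again yielding the inequality.

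The main obstacle is exactly the preference-weighted case, since a priori the two addends in the objective can move in opposite directions, so one cannot read the comparison of BGP risks directly from the comparison of objective values. The resolution relies on the fact that Assumption \ref{assp:regular_condition} pins down a monotone coupling between $\mathtt{UtilityL}(G)$ and $\mathcal{L}^{\sigma}_{\texttt{CEL}}(H,G)$ for every \emph{fixed} $H$, so the argument must carefully pass through the cross-quantities above and invoke the CEL-minimizing property of the posterior from Lemma \ref{lemma:unique_posterior} to close the chain. Once this monotone coupling is established at the equilibrium attackers, the three cases unify and the stated bound follows.
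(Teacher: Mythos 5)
Your proposal rests on a misreading of the roles of $G^{*}$ and $G'$, and the resulting argument breaks in the privacy-budget case. The paper's definition (and the appendix proof) makes clear that $G^{*}$ and $G'$ do \emph{not} minimize the same objective: $G^{*}$ minimizes $\widehat{\mathcal{L}}^{\sigma}_{\texttt{def}}(H[G],G)$ against the BGP/CEL-optimal response $H[G]$, while $G'$ minimizes $\widehat{\mathcal{L}}^{\sigma}_{\texttt{def}}(\widehat{H}[G],G)$ against the $\gamma$-BWMA-optimal response $\widehat{H}[G]\in\argmin_{H}\widetilde{\mathcal{L}}^{\gamma,\sigma}_{\texttt{att}}(H,G)$. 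These are two different optimization problems with two different attacker models inside them, so your claim that ``both $(G^{*},H_{G^{*}})$ and $(G',H_{G'})$ attain the same minimum value of the defender's objective'' does not hold, and everything downstream of it collapses. In particular, for the privacy-budget form (\ref{eq:trade_off_PB}), the feasible sets are \emph{not} the same: by Lemma \ref{lemma:unique_posterior}, the BGP response gives the largest $\mathtt{PrivacyL}=-\mathcal{L}^{\sigma}_{\texttt{CEL}}$ among all attacker responses, so the constraint $\mathtt{PrivacyL}(H[G],G)\leq\textup{PB}$ is strictly tighter than $\mathtt{PrivacyL}(\widehat{H}[G],G)\leq\textup{PB}$. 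The paper's proof exploits this asymmetry: it introduces a scalar parametrization $\mathtt{n}(G)$ (whose existence is guaranteed by Assumption \ref{assp:regular_condition}), shows the two constraints translate to thresholds $\mathtt{n}^{*}_{\tau}\geq\mathtt{n}'_{\tau}$, and then reads off the ordering of BGP risks from the monotone coupling. Your argument forces an equality here, which is not what happens; the two optima generally have \emph{different} utility losses and \emph{different} BGP risks in the PB case.

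Two additional points. First, the proposition only invokes (\ref{eq:trade_off_PB}) and (\ref{eq:trade_off_UB}); the additive preference form (\ref{eq:additive_obj}) that you spend the most effort on is not part of the claim, and indeed the paper does not cover it. Second, your utility-budget reasoning is essentially correct for the right reason (the constraint set is attacker-independent, so both problems pick the same boundary $G$, hence equal BGP risk), and this matches the paper. But this is the easy half; the substantive half is PB, where your argument is missing the key ingredient. You would need to (i) observe $\mathtt{PrivacyL}(H[G],G)\geq\mathtt{PrivacyL}(\widehat{H}[G],G)$ for all $G$ (via Lemma \ref{lemma:unique_posterior}), (ii) translate the two constrained feasibility sets to nested intervals via the monotone parametrization from Assumption \ref{assp:regular_condition}, and (iii) conclude from the nesting and the monotone decrease of BGP risk in $\mathtt{n}$. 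None of these steps is in your plan, so there is a genuine gap.
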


Proposition \ref{thm:BNGP} relies on Lemma \ref{lemma:unique_posterior}, which establishes a foundational property of the Bayes Generalized Privacy (BGP) response. Specifically, that the BGP risk equals conditional entropy. Thus, Proposition \ref{thm:BNGP} ensures that among all strategies optimizing the defender's objective, the BNGP strategy $G^{*}$ minimizes the worst-case privacy loss, independent of the attacker’s preference parameter $\gamma$ weighting TPR and FPR (see, (\ref{eq:BW_MA})).

\subsubsection{Comparison of DP Mechanisms}

Define $\mathcal{A}^{k}(d_k, x) \in \{0,1\}$ as the attacker’s binary decision for individual $k$. Introduce the maximum $(\sigma,\gamma)$-Bayesian Worst-Case Membership Advantage (BWMA):
\[
\overline{\mathtt{Adv}}^{\gamma}_{\sigma}(G) \equiv \max\nolimits_{H}\mathtt{Adv}^{\gamma}_{\sigma}(H,G).
\]
For given $\epsilon \geq 0$, let $\delta(\epsilon)$ denote the privacy profile \cite{balle2020privacy}, i.e., the smallest $\delta$ such that $\mathcal{M}(\cdot; G)$ satisfies $(\epsilon, \delta)$-DP.

\begin{theorem}\label{thm:ordering_DP_BGP}
Consider mechanisms $\mathcal{M}(\cdot;G)$ and $\mathcal{M}(\cdot;G')$, satisfying $(\epsilon, \delta(\epsilon))$-DP and $(\epsilon', \delta'(\epsilon'))$-DP, respectively. Denote their BGP risks as $\mathcal{L}^{\sigma}_{\texttt{CEL}}(G)$ and $\mathcal{L}^{\sigma}_{\texttt{CEL}}(G')$. The following statements are equivalent:
\begin{itemize}
\item[(1)] For all prior distributions $\sigma \in \Theta$, $\mathcal{L}^{\sigma}_{\texttt{CEL}}(G) \geq \mathcal{L}^{\sigma}_{\texttt{CEL}}(G')$.
\item[(2)] For all $\epsilon \geq 0$, the privacy profile of $G$ satisfies $\delta(\epsilon) \geq \delta'(\epsilon)$.
\end{itemize}
Moreover, either (1) or (2) implies:
\begin{itemize}
\item[(3)] For all $\sigma \in \Theta$ and $\gamma \in (0,1]$, $\overline{\mathtt{Adv}}^{\gamma}_{\sigma}(G) \leq \overline{\mathtt{Adv}}^{\gamma}_{\sigma}(G')$.
\end{itemize}
\end{theorem}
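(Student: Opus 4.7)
The plan is to prove the theorem in three parts, exploiting the fact that the privacy profile $\epsilon \mapsto \delta(\epsilon)$ is a complete summary of a mechanism's distinguishability, equivalent to the family of hockey-stick divergences $H_{e^{\epsilon}}(\mathcal{M}(D;G)\|\mathcal{M}(D';G))$ taken over adjacent pairs $D \simeq D'$ (membership vectors differing in a single coordinate). Lemma~\ref{lemma:unique_posterior} lets me identify the BGP risk with the conditional entropy $\mathcal{H}^{\sigma}_{G}(B;X)$, and Proposition~\ref{prop:post_processing} shows that it is monotone under post-processing; so the equivalence of pointwise privacy-profile ordering and BGP-risk ordering should fall out of a Blackwell-style post-processing argument, with the BWMA ordering following by composing the resulting kernel with any discriminator.

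For $(2)\Rightarrow(1)$ and $(2)\Rightarrow(3)$, I invoke Blackwell informativeness. Pointwise ordering of the privacy profile is equivalent to saying that for each adjacent pair $(D,D')$, the hockey-stick divergence family of one mechanism's output pair dominates that of the other in the appropriate direction, and the Blackwell-Sherman-Stein theorem then yields a stochastic kernel relating the two output pairs. After lifting this pairwise statement to a single dataset-independent kernel $\textup{Proc}$ with $\mathcal{M}(\cdot;G)=\textup{Proc}(\mathcal{M}(\cdot;G'))$ in distribution (or the reverse, depending on which mechanism is more noisy), Proposition~\ref{prop:post_processing} delivers the BGP-risk inequality in (1) for every $\sigma$. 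For (3), any discriminator $H$ against the more-noisy mechanism composes with $\textup{Proc}$ to produce a discriminator against the less-noisy one whose per-individual TPR and FPR agree with those of $H$, hence $\mathtt{Adv}^{\gamma}_{\sigma}(H,G)=\mathtt{Adv}^{\gamma}_{\sigma}(H\circ\textup{Proc},G')\leq \overline{\mathtt{Adv}}^{\gamma}_{\sigma}(G')$, and taking the supremum over $H$ yields (3).

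For $(1)\Rightarrow(2)$, I argue by contrapositive. If (2) fails at some $\epsilon_{0}\geq 0$, I pick adjacent datasets $(D^{\star},D'^{\star})$ that (nearly) realize the witnessing hockey-stick gap, let $b^{\star},b'^{\star}$ be the associated membership vectors, and choose $\sigma$ to be the uniform two-point prior on $\{b^{\star},b'^{\star}\}$. Under this $\sigma$ the BGP risk collapses to a binary conditional entropy of the pair $(\rho_{G}(\cdot\mid b^{\star}),\rho_{G}(\cdot\mid b'^{\star}))$, which is an $f$-divergence fully determined by the family of hockey-stick divergences between those two conditionals. A strict gap at $\alpha=e^{\epsilon_{0}}$ then forces a strict BGP-risk gap under $\sigma$, contradicting (1).

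The main obstacle is the Blackwell-lifting in $(2)\Rightarrow(1)$: promoting pointwise ordering of hockey-stick divergences on every adjacent pair to a single dataset-independent Markov kernel $\textup{Proc}$ that works for all datasets simultaneously. For a single adjacent pair this is standard Blackwell-Sherman-Stein, but the uniform version typically requires either structural assumptions on the noise mechanism (e.g., output perturbation by a fixed noise distribution) or a more delicate argument via the trade-off function characterization of $f$-DP and its post-processing closure. The contrapositive in $(1)\Rightarrow(2)$ is similarly delicate, because the two-point-prior reduction targets one specific $f$-divergence and one must certify that a strict hockey-stick gap at the single threshold $e^{\epsilon_{0}}$ really manifests in that particular divergence---most cleanly by combining the bijection between privacy profiles and trade-off functions with the data-processing inequality for $f$-divergences.
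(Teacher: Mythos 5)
Your proposal contains two genuine gaps, one of which you flag yourself but neither of which is repairable along the route you sketch. First, in $(2)\Rightarrow(1)$: the privacy profile $\delta(\epsilon)$ is a \emph{supremum} over adjacent pairs of the hockey-stick divergence, so pointwise dominance of the two profiles does not give you pair-by-pair dominance of the hockey-stick curves (the suprema may be attained at different adjacent pairs for $G$ and $G'$), let alone a single dataset-independent kernel $\textup{Proc}$ with $\mathcal{M}(\cdot;G)=\textup{Proc}(\mathcal{M}(\cdot;G'))$. Even granting pair-by-pair Blackwell dominance of adjacent dichotomies, the BGP risk $\mathcal{L}^{\sigma}_{\texttt{CEL}}$ is the equivocation of the full $2^{K}$-ary experiment $b\mapsto\rho_{G}(\cdot\mid b)$ under an arbitrary prior, and adjacent-pair simulability does not lift to simulability of the whole experiment without further structure. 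This is precisely why the paper does not argue through adjacent pairs at all for this direction: it first proves $(1)\Leftrightarrow\rho'\succeq_{\mathtt{B}}\rho$ for the \emph{full} experiments (data-processing inequality for mutual information one way; strict propriety of log loss plus the Blackwell--Sherman--Stein theorem the other way, via Lemma \ref{lemma:unique_posterior}), and only then translates Blackwell dominance into trade-off-function and privacy-profile orderings by citing the $f$-DP equivalences of Dong et al.\ and Kaissis et al. Your use of Proposition \ref{prop:post_processing} is the right instinct for the easy direction, but it presupposes exactly the global kernel whose existence is the whole content of the step.

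Second, your contrapositive for $(1)\Rightarrow(2)$ does not close. A strict gap in the hockey-stick divergence at the single threshold $e^{\epsilon_{0}}$ is a gap in one particular $f$-divergence of the dichotomy $\bigl(\rho_{G}(\cdot\mid b^{\star}),\rho_{G}(\cdot\mid b'^{\star})\bigr)$; the binary equivocation under the uniform two-point prior is a \emph{different} $f$-divergence, and a strict ordering at one threshold (with no control at the others) does not force any ordering, let alone a strict one, of the equivocations. Indeed $G$ could be less distinguishable at $e^{\epsilon_{0}}$ yet far more distinguishable at other thresholds, making its two-point conditional entropy smaller. On top of this, the two mechanisms' worst-case adjacent pairs generally differ, so there is no single prior $\sigma$ under which your reduction compares like with like. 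The paper sidesteps both problems by deducing the converse from the fact that ordering of Bayes risk under the (strictly proper) logarithmic loss for \emph{all} priors characterizes the Blackwell order, rather than trying to manufacture a single separating prior. Your item (3) argument is fine in outline, but it inherits the unproved kernel from the first gap; the paper instead obtains (3) from Proposition \ref{prop:weighted_membership_advantage} together with the already-established Blackwell ordering.
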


Theorem \ref{thm:ordering_DP_BGP} establishes a clear relationship among different privacy metrics by showing that two fundamental conditions are equivalent, and that either of them guarantees a desirable property in terms of membership inference. Specifically, condition (1) states that for every prior distribution $ \sigma \in \Theta $, the BGP risk of $\mathcal{M}(\cdot;G)$ is uniformly higher than that of $\mathcal{M}(\cdot;G')$, which implies that $\mathcal{M}(\cdot;G)$ leads to a lower worst-case privacy loss. Condition (2) provides an equivalent characterization in terms of the privacy profile: for every $\epsilon \geq 0$, the privacy leakage $\delta(\epsilon)$ associated with $\mathcal{M}(\cdot;G)$ is no smaller than that of $\mathcal{M}(\cdot;G')$, ensuring uniformly better privacy guarantees across all values of $\epsilon$. 

Moreover, either of these conditions implies condition (3), which asserts that for all $\sigma \in \Theta$ and for any preference parameter $\gamma \in (0,1]$, the maximum membership inference advantage $\overline{\mathtt{Adv}}^{\gamma}_{\sigma}(G)$ is no larger than $\overline{\mathtt{Adv}}^{\gamma}_{\sigma}(G')$. In other words, if a mechanism exhibits stronger privacy according to either the BGP risk or the privacy profile, it will also result in a lower maximum advantage for an attacker performing membership inference. However, it is important to note that while conditions (1) and (2) are equivalent, condition (3) is only a one-way implication and does not necessarily imply the other two.

\subsection{Approximating the BGP Risk}\label{sec:approximating}

The optimization problem defining the Bayesian Nonparametric Gaussian Process (BNGP) strategy is inherently infinite-dimensional and, therefore, not directly solvable using standard computational methods. To address this issue, we approximate the BNGP strategy using neural networks with finite parameterization. Specifically, let $G_{\lambda_{\mathtt{d}}}(b,r)$ represent a neural network approximation of the privacy strategy $G(b,r)$ with parameters $\lambda_{\mathtt{d}}$, and let $H_{\lambda_{\mathtt{a}}}(x)$ denote a neural network approximation of the attacker’s strategy $H(x)$, parameterized by $\lambda_{\mathtt{a}}$.

Approximating the attacker’s strategy $H$ with a finite-parameter neural network introduces approximation errors, potentially causing discrepancies between the computed and true worst-case privacy risks. Because the BNGP strategy inherently depends on accurately modeling the attacker’s optimal response, the effectiveness and robustness of the approximated defender strategy $G_{\lambda_{\mathtt{d}}}$ directly depend on the quality and precision of the neural network approximation $H_{\lambda_{\mathtt{a}}}$. Therefore, we investigate the consequences of finite parameterization, explicitly quantifying the resulting approximation errors in $H_{\lambda_{\mathtt{a}}}$ and their implications for privacy risks.

\paragraph{Neural Network Structure}
We use a feedforward neural network $H_{\lambda_{\mathtt{a}}}: \mathcal{X}\mapsto [0,1]^{K}$ with $\mathtt{l}$ hidden layers of width $\mathtt{w}$, which outputs a vector of individual membership probabilities $(\Pr[b_k = 1|x])^K_{k=1}$.

\paragraph{Activation Functions}
Each hidden layer employs the Rectified Linear Unit (ReLU) activation function, chosen for its computational simplicity and strong approximation capabilities. The final output layer uses a sigmoid activation function, applied coordinate-wise, ensuring the outputs are valid probability estimates bounded within the interval $[0,1]$.

\paragraph{Effective Intrinsic Dimension}
Let $\mu_{\lambda_{\mathtt{d}}}:\mathcal{X}\mapsto \Delta(\{0,1\})^{K}$ denote the posterior distribution induced by the neural network representation $G_{\lambda_{\mathtt{d}}}$ and the prior distribution $\sigma$. We define the \textit{effective intrinsic dimension}, denoted $\alpha_{\lambda_{\mathtt{d}}}$, as the smallest integer $\alpha\leq K$ such that for every $\eta>0$ there exist
\begin{itemize}
    \item[(i)] a mapping $F:\mathcal{X}\mapsto \mathbb{R}^{\alpha}$,
    \item[(ii)] a Lipschitz-continuous mapping $T: \mathbb{R}^{\alpha} \mapsto \Delta(\{0,1\})^{K}$,
\end{itemize}
that satisfies $\sup_{x\in\mathcal{X}} \mathcal{W}_{1}\left(\mu_{\lambda_{\mathtt{d}}}(x), T(F(x))\right)\leq\eta$, where $\mathcal{W}_{1}$ denotes the Wasserstein-$1$ distance. This effective intrinsic dimension captures the minimal latent dimension needed to approximate the posterior within $\eta$ accuracy across all $x$.

\begin{proposition}\label{prop:error_bound_H}
Suppose the neural network width satisfies $\mathtt{w}\geq 7K+1$ and the posterior distribution $\mu_{\lambda_{\mathtt{d}}}$ is uniformly Lipschitz continuous. Let $h_{\lambda_{\mathtt{a}}}$ represent the approximate distribution given by the neural network $H_{\lambda_{\mathtt{a}}}$. Then, for any fixed $G_{\lambda_{\mathtt{d}}}$, the Wasserstein-$1$ distance between the true posterior distribution $\mu_{\lambda_{\mathtt{d}}}$ and its approximation $h_{\lambda_{\mathtt{a}}}$ is bounded as:
\[
\mathcal{W}_{1}(\mu_{\lambda_{\mathtt{d}}}(x), h_{\lambda_{\mathtt{a}}}(x))\leq 0.25\mathtt{C}\left(\mathtt{w}^{2}\mathtt{l}\right)^{-1/\alpha_{\lambda_{\mathtt{d}}}}, \quad \forall x\in\mathcal{X},
\]
where the constant $\mathtt{C}$ depends solely on intrinsic properties of $\mu_{\lambda_{\mathtt{d}}}$ (e.g.\ its covering number in the $\alpha_{\lambda_{\mathtt{d}}}$-dimensional latent space), including its uniform Lipschitz constant and the geometric characteristics of its low-dimensional representation provided by $F$. Importantly, $\mathtt{C}$ is independent of neural network parameters such as width $\mathtt{w}$ and depth $\mathtt{l}$.
\end{proposition}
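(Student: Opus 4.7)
The plan is to combine the low-dimensional representation guaranteed by the definition of $\alpha_{\lambda_{\mathtt{d}}}$ with a quantitative approximation theorem for deep ReLU networks, and then lift a sup-norm approximation bound to a Wasserstein-$1$ bound. Concretely, I would first invoke the effective intrinsic dimension hypothesis: given any $\eta>0$ there exist $F:\mathcal{X}\to\mathbb{R}^{\alpha_{\lambda_{\mathtt{d}}}}$ and a Lipschitz map $T:\mathbb{R}^{\alpha_{\lambda_{\mathtt{d}}}}\to\Delta(\{0,1\})^K$ such that $\sup_x \mathcal{W}_1(\mu_{\lambda_{\mathtt{d}}}(x),T(F(x)))\le \eta$. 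After composing, the object we need a neural network to represent is the Lipschitz map $T\circ F$ (Lipschitz in the latent coordinates, with the overall Lipschitz constant governed by $T$), whose relevant complexity is controlled by the latent dimension $\alpha_{\lambda_{\mathtt{d}}}$ rather than by $\dim(\mathcal{X})$.

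Next I would apply a Shen--Yang--Zhang-type uniform approximation theorem for deep ReLU networks: for a Lipschitz function defined on a bounded domain in $\mathbb{R}^{\alpha}$, a ReLU network of width $\mathtt{w}$ and depth $\mathtt{l}$ attains a uniform approximation error of order $(\mathtt{w}^2\mathtt{l})^{-1/\alpha}$, with the implicit constant depending only on the Lipschitz constant and a covering number of the domain, not on $\mathtt{w}$ or $\mathtt{l}$. The width requirement $\mathtt{w}\ge 7K+1$ is exactly what these results demand for producing a $K$-dimensional vector-valued output through the standard multiplexing construction; the final coordinate-wise sigmoid only rescales the bound by a multiplicative Lipschitz factor (absorbed into $\mathtt{C}$) while keeping the output inside $[0,1]^K$. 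This gives a network $H_{\lambda_{\mathtt{a}}}$ with $\sup_x\|T(F(x))-H_{\lambda_{\mathtt{a}}}(x)\|_\infty \le \tfrac14 \mathtt{C}\,(\mathtt{w}^2\mathtt{l})^{-1/\alpha_{\lambda_{\mathtt{d}}}}$, where $\mathtt{C}$ collects the Lipschitz constant of $T$ and the covering number of the latent image.

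The third step is to convert this into a Wasserstein-$1$ bound between the induced distributions $h_{\lambda_{\mathtt{a}}}(x)$ and $T(F(x))$ on $\{0,1\}^K$, and then combine via the triangle inequality
\begin{equation*}
\mathcal{W}_1\bigl(\mu_{\lambda_{\mathtt{d}}}(x), h_{\lambda_{\mathtt{a}}}(x)\bigr)\le \mathcal{W}_1\bigl(\mu_{\lambda_{\mathtt{d}}}(x), T(F(x))\bigr)+\mathcal{W}_1\bigl(T(F(x)), h_{\lambda_{\mathtt{a}}}(x)\bigr).
\end{equation*}
For the second term, since the support $\{0,1\}^K$ is finite, Kantorovich--Rubinstein duality reduces $\mathcal{W}_1$ to a linear combination of coordinate-marginal discrepancies, each controlled by the sup-norm bound above. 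For the first term, letting $\eta\to 0$ is legitimate because $\alpha_{\lambda_{\mathtt{d}}}$ is defined as the smallest latent dimension that achieves arbitrarily small $\eta$, so its contribution is absorbed in the limit and only the network approximation error survives.

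The main obstacle will be the third step: carefully propagating the coordinate-wise $\ell^\infty$ approximation error of the Bernoulli parameters into a genuine $\mathcal{W}_1$ bound on the joint distribution over $\{0,1\}^K$, while keeping the constant $\mathtt{C}$ independent of $\mathtt{w}$ and $\mathtt{l}$ and linear only in the intrinsic Lipschitz and covering quantities of $T$. The technical subtlety is that $\mu_{\lambda_{\mathtt{d}}}(x)$ is in principle a joint distribution whereas $h_{\lambda_{\mathtt{a}}}(x)$ is parameterized by its marginals, so either one must argue that the cross-entropy-optimal $H_{\lambda_{\mathtt{a}}}$ is permitted (through the Lipschitz map $T$) to encode the joint posterior on the low-dimensional latent factor $F(x)$, or one has to pay an additional dependence term that is absorbed into $\mathtt{C}$; either way one needs the $7K+1$ width budget and the uniform Lipschitz regularity of $\mu_{\lambda_{\mathtt{d}}}$ to make the constant truly intrinsic.
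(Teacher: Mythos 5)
Your proposal follows essentially the same route as the paper's proof: invoke the effective-intrinsic-dimension decomposition $\mu_{\lambda_{\mathtt{d}}}\approx T\circ F$, apply the $(\mathtt{w}^2\mathtt{l})^{-1/\alpha}$ ReLU approximation rate to the Lipschitz map $T\circ F$ on the $\alpha_{\lambda_{\mathtt{d}}}$-dimensional latent domain, pick up the factor $0.25$ from the sigmoid's Lipschitz constant, and finish with the triangle inequality while absorbing the $\eta$ term. In fact you are somewhat more careful than the paper on the final step — the paper converts the sup-norm bound to a $\mathcal{W}_1$ bound with only a one-line appeal to the network's Lipschitz property and never addresses the joint-versus-marginal issue you flag — so your outline is, if anything, a more honest account of the same argument.
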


Proposition \ref{prop:error_bound_H} shows that the attacker-posterior approximation error obeys a power-law decay in neural network capacity: $\mathcal{W}_{1}(\mu_{\lambda_{\mathtt{d}}}(x), h_{\lambda_{\mathtt{a}}}(x)) = O((\mathtt{w}^{2})^{-1/\alpha_{\lambda_{\mathtt{d}}}})$.
Here, $\alpha_{\lambda_{\mathtt{d}}}$ is the effective intrinsic dimension, so the exponent $-1/\alpha_{\lambda_{\mathtt{d}}}$ governs the rate at which error shrinks as you increase width $\mathtt{w}$ or depth $\ell$.
Because the constant $\mathtt{C}$ depends only on the intrinsic properties of $\mu_{\lambda_{\mathtt{d}}}$, this bound guarantees that any increase in network size yields a predictable reduction in approximation error.
In turn, more accurate modeling of the attacker's optimal response produces more faithful estimates of the true privacy loss; so larger networks directly translate into stronger, more reliable privacy guarantees for the defender's strategy.

\section{Experiments}\label{sec:experiments}

\subsection{Example: MIA in Sharing Summary Statistics}\label{sec:case_study}

We apply the BNGP framework to the sharing of summary statistics from binary datasets, as outlined in the example in Section \ref{sec:MIA}.
Assuming the attributes in each $d_{k}$ are independent.
SNVs can be prefiltered to retain only those in \textit{linkage equilibrium} \cite{kimura1965attainment}.
An MIA attacker uses the summary statistics $x$ output by $f(D)$ to infer whether specific individuals $k \in \mathcal{K}$ belong to the private dataset $D$. 
We compare our Bayesian model with state-of-the-art (SOTA) Frequentist attacks, including \textit{fixed(-threshold) LRT} \cite{sankararaman2009genomic, shringarpure2015privacy, venkatesaramani2021defending, venkatesaramani2023enabling}, \textit{adaptive LRT} \cite{venkatesaramani2021defending, venkatesaramani2023enabling}, and the \textit{optimal} LRT. 
These attacks rely on the log-\textit{likelihood ratio statistic} $\mathtt{lrs}(d_k, x)$, which compares observed summary statistics $x$ to \textit{reference frequencies} $\bar{p}_j$ derived from a population dataset independent of the population dataset $Z$. Detailed definitions of these models and loss functions are provided in Appendix \ref{app:existence_lrt_attack}.

The fixed LRT attacker determines whether individual $k$ is part of the dataset by rejecting $H_0^k: b_{k}=0$ (absence) in favor of $H_1^k:b_{k}=1$ (presence) if $\mathtt{lrs}(d_k, x) \leq \tau$, where the fixed $\tau$ balances Type-I ($\alpha_{\tau}$) and Type-II ($\beta_{\tau}$) errors. The adaptive LRT dynamically adjusts $\tau^{(N)}$ using reference population data to refine the hypothesis test. The optimal LRT minimizes Type-II error $\beta_{\tau^{*}}$ for a given $\alpha_{\tau^{*}}$, achieving the most powerful test by Neyman-Pearson lemma \cite{neyman1933ix}. \textit{Optimal $\alpha$-LRT attacks} refer to Likelihood Ratio Tests that are Neyman-Pearson optimal at a fixed significance level $\alpha$.

\subsection{Experimental Results}

\begin{figure*}[htb]
    \centering
    \begin{subfigure}[b]{0.3\textwidth}
        \includegraphics[width=\textwidth]{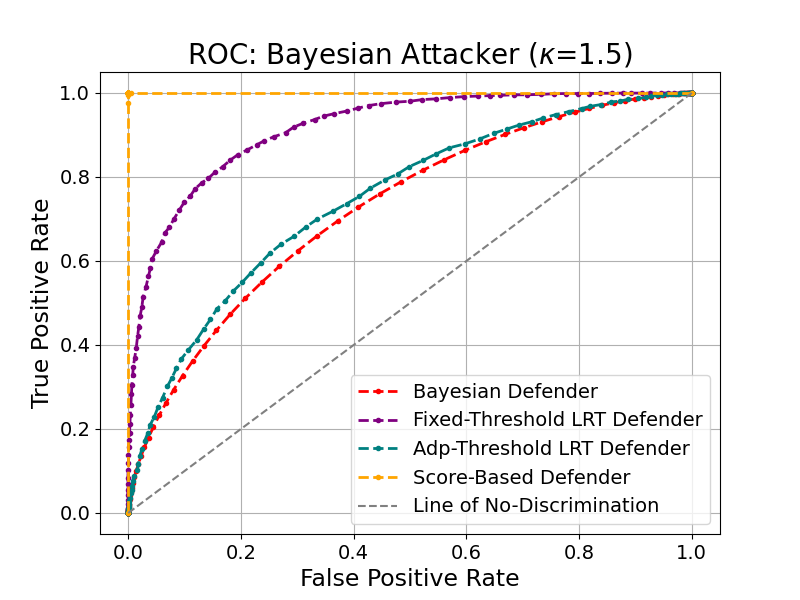}
        \caption{}
        \label{fig:fig4}
    \end{subfigure}
    \hfill
    \begin{subfigure}[b]{0.3\textwidth}
        \includegraphics[width=\textwidth]{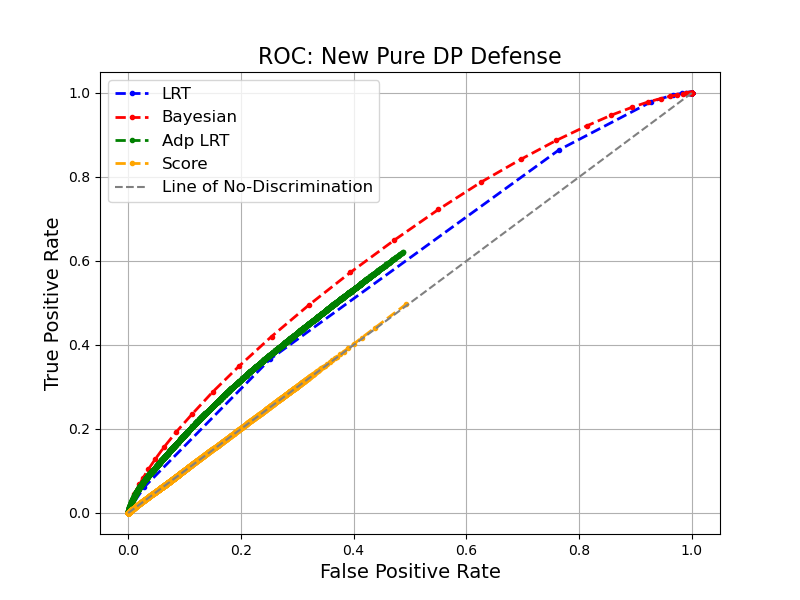}
        \caption{}
        \label{fig:fig2x}
    \end{subfigure}
    \hfill
    \begin{subfigure}[b]{0.3\textwidth}
        \includegraphics[width=\textwidth]{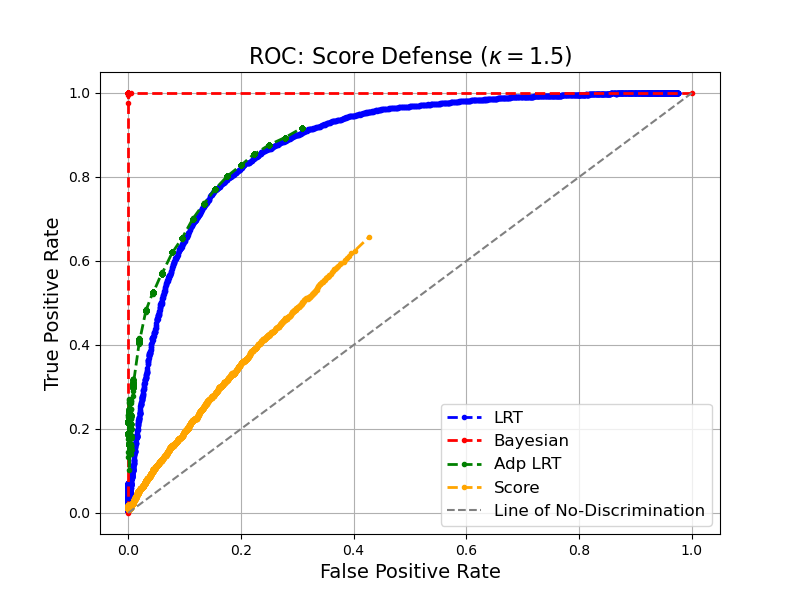}
        \caption{}
        \label{fig:fig4x}
    \end{subfigure}
    
    \vspace{0.5em}

    \begin{subfigure}[b]{0.3\textwidth}
        \includegraphics[width=\textwidth]{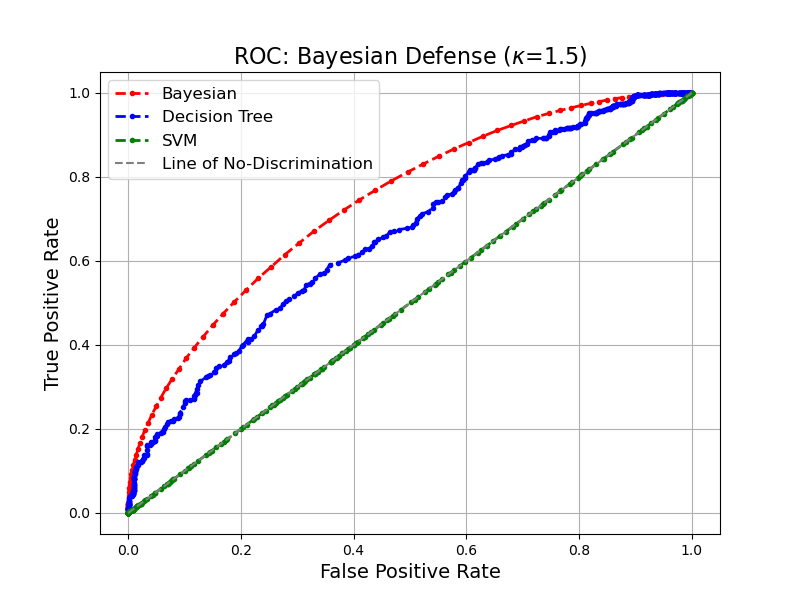}
        \caption{}
        \label{fig:fig5x}
    \end{subfigure}
    \hfill
    \begin{subfigure}[b]{0.3\textwidth}
        \includegraphics[width=\textwidth]{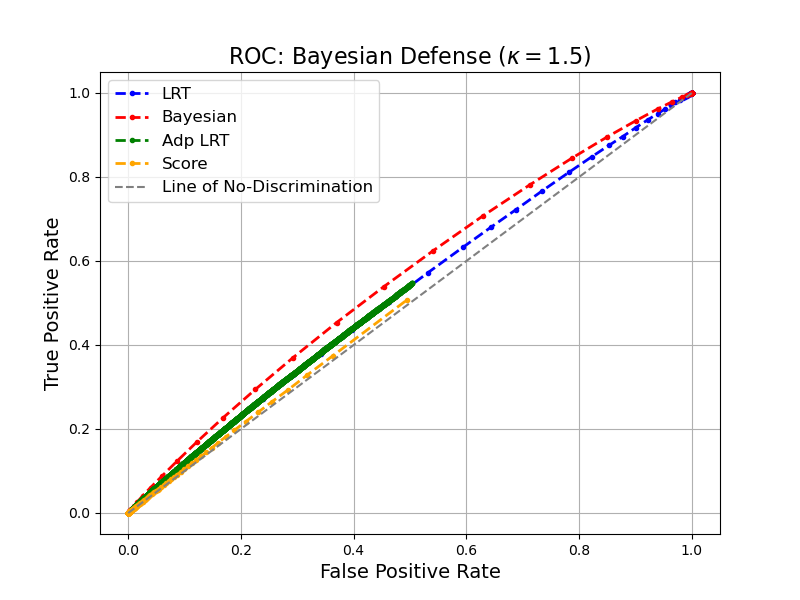}
        \caption{}
        \label{fig:fig1x}
    \end{subfigure}
    \hfill
    \begin{subfigure}[b]{0.3\textwidth}
        \includegraphics[width=\textwidth]{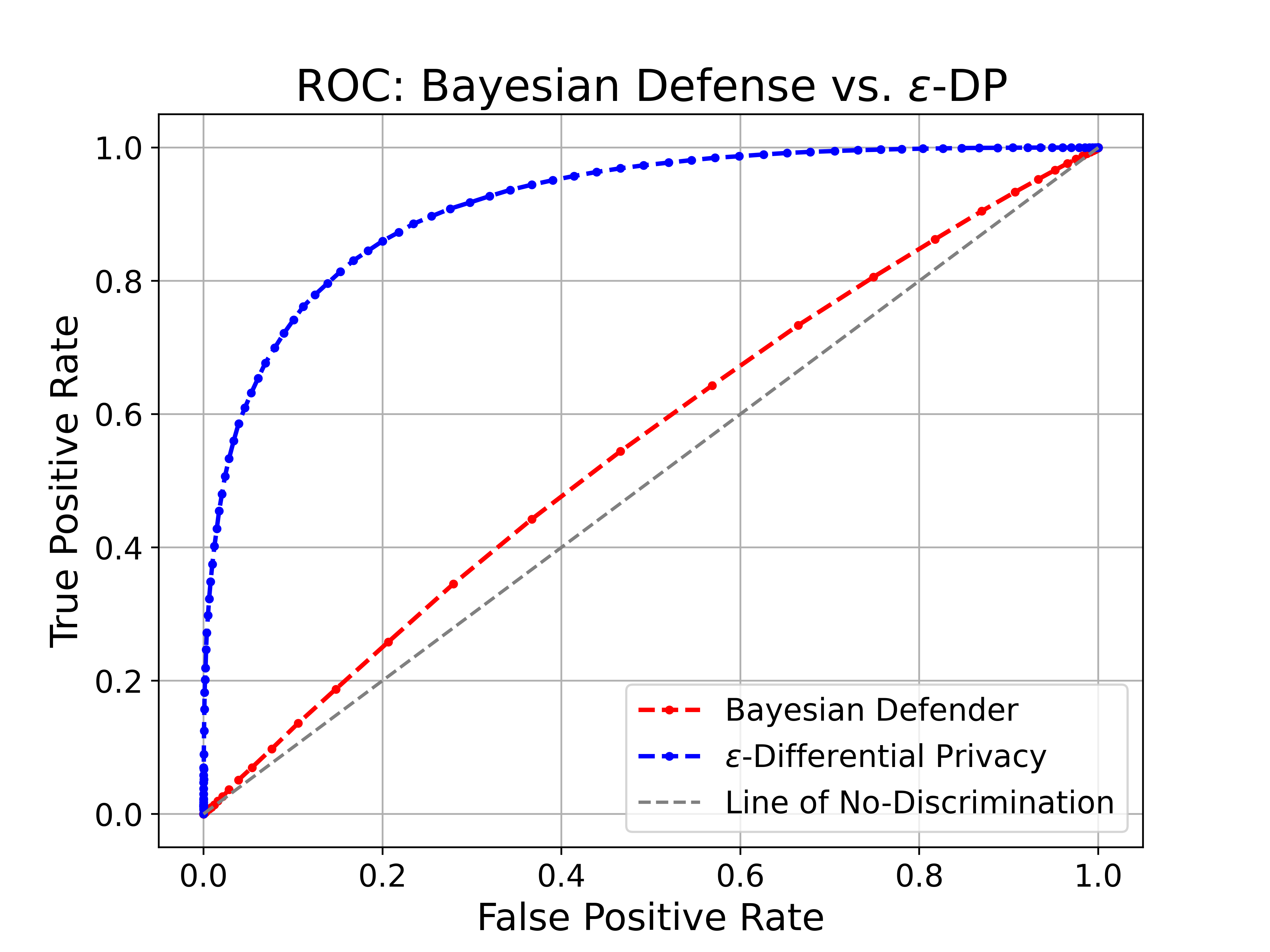}
        \caption{}
        \label{fig:fig6}
    \end{subfigure}
    
    \caption{
    (a) Performance of a Bayesian attacker against various defenders on a genomic dataset with 5000 SNVs per individual.\
    (b) Performance of a new Pure DP defender against different attackers on a genomic dataset with 100 SNVs.\
    (c) Performance of a score-based defender against various attackers using a genomic dataset with 4000 SNVs from 5 individuals (note: all other genomic data experiments used 800 individuals).\
    (d) Performance of a Bayesian defender against different attackers on the MNIST dataset.\
    (e) Performance of a Bayesian defender against different attackers on the Adult dataset.\
    (f) Performance of a Bayesian attacker under the BNGP defender and the DP defender with non-homogeneous SNP preferences (i.e., the defender’s parameter $\kappa$ assigns different weights to different attributes) when both defenders incur the same utility loss on a genomic dataset with 5000 SNVs.
    }
    \label{fig:six_figures}
\end{figure*}

Our experiments use three datasets: the \textit{Adult dataset} (UCI Machine Learning Repository), the \textit{MNIST dataset}, and a \textit{genomic dataset}. Detailed experimental setups and additional results are provided in Appendix \ref{sec:app_Hyperparameters}. 
We refer to an attacker leveraging BGP responses as the \textit{Bayesian attacker} and a defender employing a BNGP strategy as the \textit{Bayesian defender}.
In the experiments, we consider that the defender's loss function $\mathcal{L}_{\texttt{def}}$ takes the form of (\ref{eq:additive_obj}).

We consider the following baseline attack models: fixed-threshold LRT (LRT) and adaptive-threshold LRT (Adp LRT) attackers \cite{sankararaman2009genomic, shringarpure2015privacy, venkatesaramani2021defending, venkatesaramani2023enabling}, the score-based attacker \cite{dwork2015robust}, and decision-tree and support vector machine (SVM) attackers.
The score-based attacker, introduced by \cite{dwork2015robust}, generalizes the LRT attacker from \cite{homer2008resolving} by requiring only that the distorted summary statistics approximate the true marginals in the $L_1$-norm.
We refer to a defender who designs a privacy strategy against an $\mathtt{X}$ attacker as the $\mathtt{X}$ defender, where $\mathtt{X}$ represents one of the baseline attackers described above.
In addition to the $\mathtt{X}$ defender, we also consider the baseline defenders using \textit{standard DP} and \textit{new pure DP} \cite{steinke2016between}.

Figure \ref{fig:fig4} illustrates the Bayesian attacker’s performance across three scenarios, each with the mechanism protected by a different defender, using the genomic dataset.
The Bayesian defender employs the BNGP strategy, while the fixed-threshold LRT and adaptive LRT defenders adopt privacy strategies that best respond to their respective LRT attackers.
The results demonstrate that the BNGP strategy is the most robust defense against the Bayesian attacker among the three defense models.

Figure \ref{fig:fig2x} illustrates the performance of four attackers—Bayesian, fixed-threshold LRT, adaptive LRT, and score-based—each representing a different privacy risk scenario for the defenders on a genomic dataset. The defender employs the new pure DP defense strategy \cite{steinke2016between}. The results indicate that the Bayesian attacker, which uses a BGP response, achieves the highest performance, followed by the adaptive LRT (which slightly outperforms the fixed-threshold LRT), while the score-based attacker performs the worst. These findings demonstrate that the Bayesian attacker poses the worst-case privacy risk under the same new pure DP defense strategy among the evaluated baselines.

Figure \ref{fig:fig4x} compares the performance of four attackers—Bayesian, fixed-threshold LRT, adaptive LRT, and score-based—when the mechanism is defended by a score-based strategy (i.e., one that best responds to the score-based attacker) using the genomic dataset. The results show that the Bayesian attacker, employing BGP responses, significantly outperforms the others. Both the fixed-threshold and adaptive LRT attackers perform similarly but are notably less effective, while the score-based attacker achieves the poorest performance under the same defender. These findings echo the observations in Figure \ref{fig:fig2x} for the score-based defender.

Figure \ref{fig:fig5x} shows the MNIST classifier’s performance under privacy attacks. The ROC curve compares three attackers—Bayesian, decision-tree, and SVM—against a mechanism protected by a Bayesian defender using the BNGP strategy. The Bayesian attacker outperforms the others, followed by the decision-tree, while the SVM attacker performs worst. These results suggest that attackers using advanced machine-learning techniques do not represent the worst-case scenario; thus, a defender considering only non-Bayesian attackers may achieve less robust privacy protection.

Figure \ref{fig:fig1x} further demonstrates the robustness of the BGP response by presenting the performance of four attackers on the Adult dataset. In this experiment, the mechanism is defended by a Bayesian defender employing the BNGP strategy. The results indicate that the Bayesian attacker outperforms the others, with the adaptive LRT slightly surpassing the fixed-threshold LRT, and the score-based attacker yielding the weakest performance.

Figure \ref{fig:fig6} compares the performance of a Bayesian attacker under two defenses applied to a genomic dataset. In one case, the Bayesian defender employs the BNGP strategy; in the other, a conventional $\epsilon$-DP mechanism is used. Detailed information regarding the experimental setup is provided in Appendix \ref{sec:app_naive_DP}.
The Bayesian defender incorporates heterogeneous privacy-utility trade-offs by assigning weights $\vec{\kappa} = (\kappa_{j})_{j \in Q}$ to SNV positions. Specifically, $\kappa_j = 0$ for $90\%$ of the 5000 SNVs and $\kappa_j = 50$ for the remaining $10\%$, indicating that utility loss is a concern only for this minority subset. In contrast, the conventional $\epsilon$-DP strategy does not consider such preferences; instead, its privacy parameter $\epsilon$ is chosen to equate the expected utility loss to that of the Bayesian defender.
In our experiments, the BNGP strategy incurs a utility loss of approximately $0.0001$, with the corresponding $\epsilon$ set to $1.25 \times 10^{5}$ (see Appendix \ref{sec:app_naive_DP} for further details). The results demonstrate that, despite equivalent utility loss, the $\epsilon$-DP defense suffers a substantially higher privacy loss under the Bayesian attack—evidenced by an AUC of $0.53$ for the Bayesian defender compared to $0.91$ for the $\epsilon$-DP defender.

\section{Conclusion}\label{sec:conclusion}

This paper presents a game-theoretic framework for optimal privacy-utility trade-offs that overcomes differential privacy’s limitations. By modeling privacy protection as a Bayesian game, we derive the Bayes-Nash Generative Privacy (BNGP) strategy, which tailors trade-offs to defender preferences without intractable sensitivity calculations, supports complex compositions, and is robust to heterogeneous attacker preferences. Empirical results confirm BNGP’s effectiveness in privacy-preserving data sharing and classification, establishing it as a flexible, practical alternative to existing methods.

\bibliographystyle{IEEEtran}
\bibliography{references}

\newpage

\appendices

\section{Proof of Proposition \ref{prop:weighted_membership_advantage}}\label{app:prop:weighted_membership_advantage}

Let
\[
v(s,b) = \sum_{k\in \mathcal{K}} s_k\,b_k \quad \text{and} \quad c=\sum_{k\in \mathcal{K}} s_k,
\]
so that the attacker's loss is defined as
\[
\begin{aligned}
    \ell^{\gamma}_{\texttt{att}}(s,b) &\equiv -v(s,b) + \gamma\,c(s) = -\sum_{k\in \mathcal{K}}\Bigl(s_k b_k - \gamma s_k\Bigr) \\
    &= -\sum_{k\in \mathcal{K}} s_k (b_k-\gamma).
\end{aligned}
\]
Since \(s_k\) is a binary indicator, we have 
\[
s_k(b_k-\gamma) = \mathbf{1}_{\{s_k=1\}} \Bigl[(b_k-\gamma)\Bigr].
\]
Note that when \(b_k=1\), \(b_k-\gamma = 1-\gamma\); when \(b_k=0\), \(b_k-\gamma = -\gamma\). Thus, we can write
\[
s_k(b_k-\gamma) = (1-\gamma)\,\mathbf{1}_{\{s_k=1\}}\mathbf{1}_{\{b_k=1\}} - \gamma\,\mathbf{1}_{\{s_k=1\}}\mathbf{1}_{\{b_k=0\}}.
\]
Therefore,
\[
\begin{aligned}
    &\ell^{\gamma}_{\texttt{att}}(s,b) \\
    &= -\sum_{k\in \mathcal{K}} \Bigl[ (1-\gamma)\,\mathbf{1}_{\{s_k=1\}}\mathbf{1}_{\{b_k=1\}} - \gamma\,\mathbf{1}_{\{s_k=1\}}\mathbf{1}_{\{b_k=0\}} \Bigr].
\end{aligned}
\]
The expected loss is defined as
\[
\mathcal{L}^{\gamma,\sigma}_{\texttt{att}}(h, g) \;=\; \sum_{s,b}\int_{x} \ell^{\gamma}_{\texttt{att}}(s,b)(s,b) \, h(s|x)\, \rho_{g}(x|b) \, dx\, \sigma(b).
\]
Substituting the expression for $\ell^{\gamma}_{\texttt{att}}(s,b)$, we obtain
\[
\begin{aligned}
    \ell^{\gamma}_{\texttt{att}}(s,b) &= -\sum_{s,b}\int_{x} \sum_{k\in\mathcal{K} } \Big[\\
    &(1-\gamma)\,\mathbf{1}_{\{s_k=1\}}\mathbf{1}_{\{b_k=1\}} - \gamma\,\mathbf{1}_{\{s_k=1\}}\mathbf{1}_{\{b_k=0\}} \Big]\\
    &\times h(s|x) \,\rho_{g}(x|b) \, dx\, \sigma(b).
\end{aligned}
\]
For each individual \(k\), split the summation over \(s\) and \(b\) as follows. Denote by \(s_{-k}\) and \(b_{-k}\) the vectors for the other individuals. Then the above sum can be written as a sum over \(k\) of terms of the form
\[
\begin{aligned}
    &-\sum_{s_k, b_k}\left[ (1-\gamma)\,\mathbf{1}_{\{s_k=1\}}\mathbf{1}_{\{b_k=1\}} - \gamma\,\mathbf{1}_{\{s_k=1\}}\mathbf{1}_{\{b_k=0\}} \right] \\
    &\times\left( \sum_{s_{-k},b_{-k}} h(s_k,s_{-k}|x) \,\rho_{g}(x|b) \,\sigma(b) \right).
\end{aligned}
\]
Define the TPR and FPR as
\[
\mathtt{TPR}(h, g) \equiv \sum_{b_{-k}} \Pr\left[s_k=1 \mid b_k=1, x\right]\,\sigma(b_k=1,b_{-k}),
\]
and
\[
\mathtt{FPR}(h, g) \equiv \sum_{b_{-k}} \Pr\left[s_k=1 \mid b_k=0, x\right]\,\sigma(b_k=0,b_{-k}).
\]
Thus, taking the expectation over $x$ with respect to $\rho_{g}(x|b)$ and summing over all \(k\) yields
\[
\mathcal{L}^{\gamma,\sigma}_{\texttt{att}}(h, g) = -\Bigl[(1-\gamma)\,\mathtt{TPR}(h, g) - \gamma\,\mathtt{FPR}(h, g)\Bigr].
\]
By definition, the Bayes-weighted membership advantage is given by
\[
\mathtt{Adv}^{\gamma}_{\sigma}(h, g) \equiv (1-\gamma)\,\mathtt{TPR}(h, g) - \gamma\,\mathtt{FPR}(h, g).
\]
Hence, we have $\mathcal{L}^{\gamma,\sigma}_{\texttt{att}}(h, g) = -\mathtt{Adv}^{\gamma}_{\sigma}(h_A, g_D)$.

\qed

\section{Proof of Lemma \ref{lemma:unique_posterior}}

To prove Lemma \ref{lemma:unique_posterior}, we directly work on the functions of the probabilities $h$ and $g$ represented by $H$ and $G$, respectively, where $h:\mathcal{X}\mapsto [0,1]^{K}$.
Let $h(x)=q(x)=(q_{1}(x), \dots, q_{K}(x))$, where each $q_{k}(x)\in[0,1]$ is the probability of $b_{k}=1$. 
With abuse of notation, we let $h(b|x)=(h(b_{1}|x), \dots, h(b_{K}|x))$, where $h(b_{k}=1|x)=q_{k}(x)$ and $h(b_{k}=0|x)=1-q_{k}(x)$.
Then, the expected CEL can be written as
\[
\begin{aligned}
    \mathcal{L}^{\sigma}_{\texttt{CEL}}\left(h,g\right)= -\sum_{b}\int_{x}\sigma(b)\log\left(h(b|x)\right)\rho_{g}(x|b)dx.
\end{aligned}
\]

Let $\mu^{\sigma}:\mathcal{X}\mapsto \Delta(W)$ denote the posterior distribution induced by $\sigma$ and $g$ according to the Bayes' rule.
Then,
\[
\begin{aligned}
    &\mathcal{L}^{\sigma}_{\texttt{CEL}}\left(h,g\right) - \mathcal{L}^{\sigma}_{\texttt{CEL}}\left(\mu^{\sigma},g\right)\\
    &= - \sum_{b}\int_{x}\sigma(b)\log\left(h(b|x)\right)\rho_{g}(x|b)dx \\
    &+ \sum_{b}\int_{x}\sigma(b)\log\left(\mu^{\sigma}(b|x)\right)\rho_{g}(x|b)dx\\
    &=\sum_{b}\int_{x}\sigma(b)\rho_{g}(x|b)\left[\log\left(\mu^{\sigma}(b|x)\right) - \log\left(h(b|x)\right) \right]dx\\
    &=\sum_{b}\int_{x}\sigma(b)\rho_{g}(x|b)\log\left(\frac{\mu^{\sigma}(b|x)}{h(b|x)}\right)dx.
\end{aligned}
\]
By Bayes' rule, we have 
\[
\sigma(b)\rho_{g}(x|b) = \mu^{\sigma}(b|x)P^{\sigma}(x), 
\]
where $P^{\sigma}(x)=\sum_{b}\sigma(b)\rho_{g}(x|b)$. 
Then, we have
\[
\begin{aligned}
&\mathcal{L}^{\sigma}_{\texttt{CEL}}\left(h,g\right) - \mathcal{L}^{\sigma}_{\texttt{CEL}}\left(\mu^{\sigma},g\right) \\&= \sum_{b}\int_{x}\mu^{\sigma}(x)\log\left(\frac{\mu^{\sigma}(b|x)}{h(b|x)}\right)P^{\sigma}(x)dx\\
&\geq 0,
\end{aligned}
\]
which is non-negative because it is the Kullback–Leibler (KL) divergence.
In addition, $\mathcal{L}^{\sigma}_{\texttt{CEL}}\left(h,g\right) - \mathcal{L}^{\sigma}_{\texttt{CEL}}\left(\mu^{\sigma},g\right)=0$ if and only if $h(b|x)=\mu^{\sigma}(b|x)$ for all $b\in W$ and $x\in \mathcal{X}$. \qed

\section{Proof of Proposition \ref{thm:BNGP}}

Let $\mathcal{G}$ be the set of privacy strategies satisfying Assumption \ref{assp:regular_condition}.
Consider a function $\mathtt{n}:\mathcal{G}\mapsto \mathbb{R}$ that satisfies the following:
\begin{itemize}
    \item If $\mathtt{n}\left(G_{1}\right)\leq \mathtt{n}\left(G_{2}\right)$, then 
    \[
    \mathtt{UtilityL}(G_{1}) \leq \mathtt{UtilityL}(G_{2}).
    \]
    That is, the utility loss is increasing in $\mathtt{n}$.

    \item If $\mathtt{n}\left(G_{1}\right)\leq \mathtt{n}\left(G_{2}\right)$, then 
    \[
    \mathtt{PrivacyL}(H_{1}, G_{1}) \geq \mathtt{PrivacyL}(H_{2}, G_{2}),
    \]
    where $H_{i}$ is the BGP response to $G_{i}$ for $i\in\{1,2\}$. That is, the BGP risk is decreasing in $\mathtt{n}$.
\end{itemize}

We show that the following inequality of privacy loss holds for both problems with privacy budget and utility budget:
\[
\mathtt{PrivacyL}\left(H[G'], G'\right) \geq \mathtt{PrivacyL}\left(H[G^{*}], G^{*}\right).
\]

\paragraph{1. Trade-Off with Privacy Budget}
We first consider the defender's objective function given by (\ref{eq:trade_off_PB}):
\[
\begin{aligned}
    &\mathcal{L}_{\texttt{def}}\left( H,G \middle|Z\right)\\
    &=\begin{cases}
        \mathtt{UtilityL}(G|Z), & \text{ if } \mathtt{PrivacyL}\left( H, G\right)\leq \textup{PB}\\
        \infty,& \text{ othewise}.
    \end{cases}
\end{aligned}
\]
Hence, the BNGP strategy $G^{*}$ is given by:
\begin{equation}\label{eq:tight_prob_1}
    G^{*}\in\arg\min_{G} \mathcal{L}_{\texttt{def}}\left(H[G],G \middle|Z\right), 
\end{equation}
where $H[G]$ is the BNP response to $G$.
Equivalently, we require $\mathtt{n}(G)$ to be at least some threshold, $\mathtt{n}_{\tau}$, for the true BGP risk constraint:
\[
\begin{aligned}
    \left\{G: \mathtt{PrivacyL}\left( H[G], G\right) \leq \textup{PB}\right\} = \left\{G: \mathtt{n}(G)\geq \mathtt{n}_{\tau}\right\}.
\end{aligned}
\]
That is, for the problem (\ref{eq:trade_off_PB}), the defender aims to pick a $G$ to minimize $\mathtt{UtilityL}(G|Z)$ among $\left\{G: \mathtt{n}(G)\geq \mathtt{n}_{\tau}\right\}$.
Since the utility loss is increasing in $\mathtt{n}$, we have that one of the solutions to (\ref{eq:trade_off_PB}) is $G^{*}=G_{\mathtt{n}^{*}_{\tau}}$, where $\mathtt{n}(G_{\mathtt{n}^{*}_{\tau}}) = \mathtt{n}^{*}_{\tau}$.

The alternative $G$ is given by:
\begin{equation}\label{eq:losser_prob_1}
    G'\in\arg\min_{G} \mathcal{L}_{\texttt{def}}\left(\widehat{H}[G],G \middle|Z\right), 
\end{equation}
where $\widehat{H}[G]\in \arg\min\nolimits_{H}\widetilde{\mathcal{L}}^{\gamma,\sigma}_{\texttt{att}}(H,G')$.
Similarly, this is equivalent to find $G$ from
\[
\begin{aligned}
    \left\{G: \mathtt{PrivacyL}\left( \widehat{H}[G], G\right) \leq \textup{PB}\right\} = \left\{G: \mathtt{n}(G)\geq \mathtt{n}_{\tau}\right\}.
\end{aligned}
\]
Thus, there is a threshold $\mathtt{n}'_{\tau}$ such that $G' = G_{\mathtt{n}'_{\tau}}$ where $n(G_{\mathtt{n}'_{\tau}}) = \mathtt{n}'_{\tau}$.

By Lemma \ref{lemma:unique_posterior}, we have, for all $G\in\mathcal{G}$,
\[
\mathtt{PrivacyL}\left(H[G'], G'\right)\geq \mathtt{PrivacyL}(\widehat{H}[G'], G').
\]
Thus, for a given privacy loss budget $\text{PB}$, the constraint in (\ref{eq:losser_prob_1}) is looser than (\ref{eq:tight_prob_1}), 
Then, we have $\mathtt{n}'_{\tau} \leq \mathtt{n}^{*}_{\tau}$.
Since $G^{*}=G_{\mathtt{n}^{*}_{\tau}}$ and $G' = G_{\mathtt{n}'_{\tau}}$ with $\mathtt{n}'_{\tau} \leq \mathtt{n}^{*}_{\tau}$, we have
\[
\mathtt{PrivacyL}\left(H[G'], G'\right) \geq \mathtt{PrivacyL}\left(H[G^{*}], G^{*}\right).
\]

\paragraph{Trade-Off with Utility Budget}
For the optimal privacy-utility trade-off with a utility budget, the defender's objective function is given by (\ref{eq:trade_off_UB}):
\[
\begin{aligned}
    &\mathcal{L}_{\texttt{def}}\left( H,G \middle|Z\right)\\
    &=\begin{cases}
        \mathtt{PrivacyL}\left(H, G\right), & \text{ if } \mathtt{UtilityL}(G|Z) \leq \textup{UB}\\
        \infty,& \text{ othewise}.
    \end{cases}
\end{aligned}
\]
Similar to the problem with a privacy budget, we write down
\begin{equation}\label{eq:tight_prob_2}
    G^{*}\in\arg\min_{G} \mathcal{L}_{\texttt{def}}\left(H[G],G \middle|Z\right), 
\end{equation}
where $H[G]$ is the BNP response to $G$
and
\begin{equation}\label{eq:losser_prob_2}
    G'\in\arg\min_{G} \mathcal{L}_{\texttt{def}}\left(\widehat{H}[G],G \middle|Z\right), 
\end{equation}
where $\widehat{H}[G]\in \arg\min\nolimits_{H}\widetilde{\mathcal{L}}^{\gamma,\sigma}_{\texttt{att}}(H,G')$.
Similarly, for both problem, we require $\mathtt{n}(G)$ to be at lease some threshold $\mathtt{n}_{\tau}$ for the utility constraint:
\[
\begin{aligned}
    \left\{G: \mathtt{UtilityL}\left(G\right) \leq \textup{UB}\right\} = \left\{G: \mathtt{n}(G)\leq \mathtt{n}_{\tau}\right\}.
\end{aligned}
\]
Since the utility loss is independent of $H$, the maximum $\mathtt{n}_{\tau}$ can be attained by (\ref{eq:tight_prob_2}) and (\ref{eq:losser_prob_2}) are the same.
Thus, $\mathtt{PrivacyL}\left(H[G^{*}], G\right)= \mathtt{PrivacyL}(\widehat{H}[G'], G')$.   \qed

\section{Proof of Proposition \ref{prop:post_processing}}

By by Theorem 2.10. of \cite{dong2021gaussian}  (also see \cite{blackwell1951comparison}), we have that for a fixed significance level, the minimum false positive rates (of inferring each individual $k$'s membership status), denoted by $T(G)$ and $T(\textup{Proc}\circ G)$, can be achieved by $G$ and $\textup{Proc}\circ G$ satisfy
\[
T(\textup{Proc}\circ G) \geq T(G).
\]
Thus, $G$ is more informative than $\textup{Proc}\circ G$ according to Blackwell's ordering of informativeness \cite{blackwell1951comparison}.

Next, we show that Blackwell's ordering of information implies the order using conditional entropy.
Let $\rho_{G}: W \mapsto \Delta(\mathcal{X})$ be the conditional density represented by $G$.
Let $P_{G}(x)\equiv \sum_{b}\rho_{G}(x|b)\sigma(b)$.
For $G' = \textup{Proc}\circ G$, let 
\[
\begin{aligned}
    P_{G'}(x') &= \sum_{b}\rho_{G'}(x|b)\sigma(b)\\
    &=\sum_{b}\sigma(b)\int_{x}\textup{Proc}_{G}(x'|x)\rho_{g}(x|b)dx,
\end{aligned}
\]
where $\textup{Proc}_{G}(x'|x)$ captures the process processing from $x$ to $x'$.
Equivalently, we have
\[
P_{G'}(x') = \int_{x'}\textup{Proc}_{G}(x'|x)P_{G}(x)dx.
\]

For a given $G$, the mutual information between membership vector $B$ and output $X$ is defined by
\[
\mathcal{I}_{G}(B;Y) = \mathcal{H}(B) - \mathcal{H}_{G}(B|Y).
\]
Since $\sigma$ is fixed, the entropy $\mathcal{H}(B)$ is a constant.
Therefore, a larger mutual information $\mathcal{I}_{G}(B;Y)$ implies a smaller conditional entropy $\mathcal{H}_{G}(B|Y)$, and vice versa.

Since $G' = \textup{Proc}\circ G$, the data processing inequality implies that
\[
\mathcal{I}_{G'}(B;Y)\leq \mathcal{I}_{G}(B;Y),
\]
which implies
\[
\mathcal{H}_{G}(B|Y)\leq \mathcal{H}(B) - \mathcal{H}_{G'}(B|Y).
\]
By Lemma \ref{lemma:unique_posterior}, $\min_{H}\mathcal{L}^{\sigma}_{\texttt{CEL}}(H,G)$ is the conditional entropy.
Thus, we have $\mathcal{L}^{\sigma}_{\texttt{CEL}}(H^{*},G^{*})\leq \mathcal{L}^{\sigma}_{\texttt{CEL}}(H',G')$, where $G' = \textup{Proc}\circ G$ and $H'$ is the BGP response to $G'$. \qed

\section{Proof of Theorem \ref{prop:composition_risk} }

We first prove the first equality: $\mathcal{L}^{\sigma}_{\texttt{CEL}}(\vec{H}^{*},\vec{G})= \sum^{n}\nolimits_{j=1}\mathcal{L}^{\sigma}_{\texttt{CEL}}(H^{*}_{j},G_{j}) + \Lambda^{\sigma}(\vec{G})$.
For ease of exposition, we focus on the case when there are two mechanisms that are composed. 
That is, $\vec{G}=(G_{1}, G_{2})$.
The proof can be easily extended to general $n\geq 2$.
Let $\rho:\mathcal{D}\mapsto\Delta(\mathcal{X}_{1}\times \mathcal{X}_{2})$ be the underlying joint probability of $\mathcal{M}(\cdot;\vec{G})$ with $\rho_{1}:\mathcal{D}\mapsto\Delta(\mathcal{X}_{1})$ and $\rho_{2}:\mathcal{D}\mapsto\Delta(\mathcal{X}_{2})$ as the underlying marginal probability distributions of the mechanisms $\mathcal{M}(\cdot;G_{1})$ and $\mathcal{M}(\cdot;G_{2})$.
When mechanisms are independent, then $\rho(\cdot|D)=\rho_{1}(\cdot|D)\rho_{2}(\cdot|D)$.
For clarity, we use $\rho(\cdot|b)=\rho(\cdot|D)$ and $\rho_{j}(\cdot|b)=\rho_{j}(\cdot|D)$ for $j\in\{0,1\}$ when $b$ is the membership vector of $D$.

\subsection*{Independent Mechanisms}

Lemma \ref{lemma:unique_posterior} can be easily generalized to multiple mechanisms.
That is, $H^{*}$ represents the joint posterior induced by $\rho$ and $\sigma$, denoted by $h$, and each $H_{j}$ represents the posterior induced by $\rho_{j}$ and $\sigma$ for $j\in\{1,2\}$.
Then, 
\[
\begin{aligned}
    &\mathcal{L}_{\texttt{CEL}}(H^{*}, \vec{G}) \\
    &= -\sum_{b\in W} \int_{\mathcal{X}_{1}\times \mathcal{X}_{2}} \sigma(b)\log(h(b|\vec{x}))\theta(b)\rho(\vec{x}|b)d\vec{x}\\
    =&-\sum_{b}\sigma(b)\log(\theta(b))\cdot \int_{\mathcal{X}_{1}\times \mathcal{X}_{2}}  \rho(\vec{x}|b)d\vec{x}\\
    &- \sum_{b}\int_{\mathcal{X}_{1}\times \mathcal{X}_{2}} \theta(b)\rho(\vec{x}|b) \cdot \log\left(\rho(\vec{x}|b)\right)d\vec{x}\\
    &+\sum_{b}\int_{\mathcal{X}_{1}\times \mathcal{X}_{2}}\theta(b)\rho(\vec{x}|b) \cdot \log\left(\sum_{b'}\rho(\vec{x}|b')\theta(b')\right)d\vec{x}\\
    =& -\sum_{b}\theta(b)\log(\theta(b)) \\
    &- \sum_{b}\int_{\mathcal{X}_{1}\times \mathcal{X}_{2}} \theta(b)\rho(\vec{x}|b) \cdot \log\left(\rho(\vec{x}|b)\right)d\vec{x}\\
    &+ \int_{\mathcal{X}_{1}\times \mathcal{X}_{2}}\left(\sum_{b'}\rho(\vec{x}|b')\theta(b')\right)\cdot \log\left(\sum_{b'}\rho(\vec{x}|b')\theta(b')\right).
\end{aligned}
\]
Similarly, for each $G_{j}$, we have
\[
\begin{aligned}
    &\mathcal{L}_{\texttt{CEL}}(H^{*}_{j}, G_{j}) = -\sum_{b}\theta(b)\log(\theta(b)) \\
    &- \sum_{b}\int_{\mathcal{X}_{j}} \theta(b)\rho_{j}(x_{j}|b) \cdot \log\left(\rho_{j}(x_{j}|b)\right)d x_{j}\\
    &+ \int_{\mathcal{X}_{j}}\left(\sum_{b'}\rho_{j}(x_{j}|b')\theta(b')\right)\cdot \log\left(\sum_{b'}\rho_{j}(x_{j}|b')\theta(b')\right)dx_{j}. 
\end{aligned}
\]
Summing individual losses yields
\[
\begin{aligned}
    &\sum^{2}_{j=1} \mathcal{L}_{\texttt{CEL}}(H^{*}_{j}, G_{j}) =-2\sum_{b}\theta(b)\log(\theta(b))\\
    &- \sum^{2}_{j=1}\sum_{b} \int_{\mathcal{X}_{j}} \theta(b)\cdot \rho_{j}(x_{j}|b)\cdot \log\left(\rho_{j}(x_{j}|b)\right)dx_{j}\\
    &+\sum^{2}_{j=1}\int_{\mathcal{X}_{j}}\left(\sum_{b'}\rho_{j}(x_{j}|b')\theta(b')\right)\\
    &\cdot \log\left(\sum_{b'}\rho_{j}(x_{j}|b')\theta(b')\right)dx_{j}. 
\end{aligned}
\]
Denote the following:
\begin{itemize}
    \item $\mathtt{I} = \sum_{b}\theta(b)\log(\theta(b))$,

    \item $\mathtt{J}_{j} = \sum_{b} \int_{\mathcal{X}_{j}} \theta(b)\cdot \rho_{j}(x_{j}|b)\cdot \log\left(\rho_{j}(x_{j}|b)\right)dx_{j}$,

    \item $\mathtt{K} = \sum_{b}\int_{\mathcal{X}_{1}\times \mathcal{X}_{2}}\theta(b)\rho(\vec{x}|b) \cdot \log\left(\sum_{b'}\rho(\vec{x}|b')\theta(b')\right)d\vec{x}$.
\end{itemize}

Then, we have
\[
\begin{aligned}
    \mathcal{L}_{\texttt{CEL}}(H^{*}, \vec{G}) =& \left(\sum^{2}_{j=1}\left(\mathcal{L}_{\texttt{CEL}}(H^{*}_{j}, G_{j}) -\mathtt{J}_{j}\right) + 2 \mathtt{I}\right) - \mathtt{I} + \mathtt{K}\\ =&\sum^{2}_{j=1}\mathcal{L}_{\texttt{CEL}}(H^{*}_{j}, G_{j}) + \mathcal{I} - \sum^{2}_{j=1} \mathtt{J}_{j} +\mathtt{K}. 
\end{aligned}
\]

By combining $\Lambda^{\sigma}(\vec{G}) = -\left( \mathcal{I} - \sum^{2}_{j=1} \mathtt{J}_{j} +\mathtt{K}\right)$, we have
\[
\begin{aligned}
    \Lambda^{\sigma}(\vec{G}) = -\sum_{b}\theta(b)\int_{\mathcal{X}_{1}\times \mathcal{X}_{2}}\rho(\vec{x}|b)\cdot \log\left(\sum_{b'} \rho(\vec{x}|b')\theta(b')\right)d\vec{x}.
\end{aligned}
\]
For general $n\geq 2$, we have $\mathcal{H}(Q)$, where $Q(\vec{x}) = \sum_{b} \vec{\rho}(\vec{x}|b)\sigma(b)$, $\mathcal{H}(\cdot)$ is the differential entropy.

\subsection*{Dependent Mechanisms}

Next, we proceed with the proof when the mechanisms are dependent.
Similar to the case of independent mechanisms, from Lemma \ref{lemma:unique_posterior}, $H^{*}$ represents the joint posterior induced by $\rho$ and $\sigma$, denoted by $h$, and each $H_{j}$ represents the posterior induced by $\rho_{j}$ and $\sigma$ for $j\in\{1,2\}$.

The BGP risk $\mathcal{L}^{\sigma}_{\texttt{CEL}}(H^{*}, \vec{G})$ is
\[
\begin{aligned}
    \mathcal{L}^{\sigma}_{\texttt{CEL}}(H^{*}, \vec{G}) = -\sum_{b} \int_{\mathcal{X}_{1}\times \mathcal{X}_{2}}\theta(b)\log\left(h(b|\vec{x})\right)\rho(\vec{x}|b)d\vec{x}.
\end{aligned}
\]
Substituting $h(b|\vec{x})$ by Bayes' rule into the loss yields
\[
\begin{aligned}
    &\mathcal{L}^{\sigma}_{\texttt{CEL}}(H^{*}, \vec{G}) \\
    &= -\sum_{b}\int_{\mathcal{X}_{1}\times \mathcal{X}_{2}}\theta(b)\rho(\vec{x}|b) \cdot\log\left(\frac{\rho(\vec{x}|b)\theta(b)  }{ \sum_{b'} \rho(\vec{x}|b')\theta(b') } \right)d\vec{x},
\end{aligned}
\]
which can be broken into three terms:
\[
\begin{aligned}
    &\mathcal{L}^{\sigma}_{\texttt{CEL}}(H^{*}, \vec{G}) = -\sum_{b}\theta(b)\log\left(\theta(b)\right)\cdot \int_{\mathcal{X}_{1}\times \mathcal{X}_{2} } \rho(\vec{x}|b)d\vec{x}\\
    &-\sum_{b} \int_{ \mathcal{X}_{1}\times \mathcal{X}_{2}}\theta(b)\cdot\rho(\vec{x}|b)\cdot \log\left(\rho(\vec{x}|b) \right)d\vec{x}\\
    &+ \int_{ \mathcal{X}_{1}\times \mathcal{X}_{2}} \sum_{b}\theta(b)\rho(\vec{x}|b)\cdot \log\left(\sum_{b'}\rho(\vec{x}|b')\theta(b') \right)d\vec{x}.
\end{aligned}
\]
Denote the following:
\begin{itemize}
    \item $\mathtt{I} = -\sum_{b}\theta(b)\log(\theta(b))$,

    \item $\mathtt{J} = -\sum_{b}\int_{\mathcal{X}_{1}\times \mathcal{X}_{2}}\theta(b) \rho(\vec{x}|b)\cdot \log\left(\rho(\vec{x}|b)\right)d\vec{x}$,

    \item $\mathtt{K}=\int_{\mathcal{X}_{1}\times \mathcal{X}_{2}} \sum_{b} \theta(b) \rho(\vec{x}|b) \cdot \log\left(\sum_{b'}\rho(\vec{x}|b')\theta(b')\right)d\vec{x}$.
\end{itemize}
Thus, the loss can be expressed as:
\[
\mathcal{L}^{\sigma}_{\texttt{CEL}}(H^{*}, \vec{G}) = \mathtt{I} + \mathtt{J} - \mathtt{K}.
\]
By combining $\Lambda^{\sigma}(\vec{G}) = -\left(\mathtt{I}- \mathtt{J} + \mathtt{K}\right)$, we have $\Lambda^{\sigma}(\vec{G}) = \texttt{D}_{\texttt{KL}}\left(Q\| P\right)$, where $Q(\vec{x}) = \sum_{b} \vec{\rho}(\vec{x}|b)\sigma(b)$, $
P(\vec{x}) = \prod_{j=1}^n Q_j(x_j)$ with $Q_j(x_j) = \int_{\vec{\mathcal X}{-j}} Q(x_j, \vec{x}_{-j})\,d\vec{x}_{-j}$, $\mathcal{H}(\cdot)$ is the differential entropy, and $\texttt{D}_{\texttt{KL}}(\cdot)$ is the Kullback–Leibler (KL) divergence.

\subsection*{Second Inequality}

Next, we prove the second inequality of 
\begin{align*}
    &\mathcal{L}^{\sigma}_{\texttt{CEL}}(\vec{H}^{*},\vec{G})= \sum^{n}\nolimits_{j=1}\mathcal{L}^{\sigma}_{\texttt{CEL}}(H^{*}_{j},G_{j}) + \Lambda^{\sigma}(\vec{G})\\
    &\leq \frac{1}{2}\mathbb{E}^{\vec{r}\sim \mathcal{U}^{n}}_{b\sim\sigma}\left[\log\Bigl((2\pi e)^K \det(\Sigma[\vec{G}\left(b,\vec{r}\right)])\Bigr)\right].
\end{align*}

Among all continuous probability density functions on $\mathbb{R}^{K}$ with a given covariance matrix $\Sigma$, the Gaussian distribution maximizes the differential entropy.
In particular, if $V\sim \mathcal{N}(\mu, \Sigma)$, then
\[
\begin{aligned}
    \widehat{\mathcal{H}}(V) &= -\int_{\mathbb{R}^{K}}\phi(v;\mu,\Sigma)\,\log \phi(v;\mu,\Sigma)\,dv \\
    &= \frac{1}{2}\,\log\Bigl((2\pi e)^K\,\det(\Sigma)\Bigr),
\end{aligned}
\]
where $\widehat{\mathcal{H}}$ is the differential entropy $\phi$ is the PDF. This is a standard result in information theory.
For any discrete distribution $\digamma(b)=(\digamma_{k}(b_{k}))^{K}_{k=1}$ that has covariance matrix $\Sigma$, the maximum possible Shannon entropy, $\mathcal{H}^{*}(B)$ satisfies
\[
\mathcal{H}^{*}(B)\leq \frac{1}{2} \log\left(\left(2\pi e\right)^{K}\text{det}\left(\Sigma\right)\right) - \Delta(\Sigma),
\]
where $\Delta(\Sigma)\geq 0$ is a correction term that depends on the support $W$ and the covariance $\Sigma$.
This can be extended to the condition probability $h(b|x)$ which is the probability represented by $\vec{H}^{*}$.

For each fixed $x$, let $\Sigma_{B|X}(x)$ be the covariance of $h(b|x)$.
Then, we have
\[
\begin{aligned}
    &\mathcal{H}(h(\cdot|x)) = -\sum_{b}\log(h(b|x))\\& \leq \frac{1}{2}\log\left((2\pi e)^{K} \text{det}\left(\Sigma_{X|Y}(y)\right)\right) - \Delta(\Sigma_{X|Y}(y)).
\end{aligned}
\]
Since the BGP risk is the conditional entropy, we have
\[
\begin{aligned}
    &\mathcal{L}^{\sigma}_{\text{CEL}}(\vec{H}^{*}, \vec{G})=\mathbb{E}_{X}\left[\mathcal{H}(h(\cdot|x)) \right].
\end{aligned}
\]
Thus, 
\[
\begin{aligned}
    \mathcal{L}^{\sigma}_{\text{CEL}}(\vec{H}^{*}, \vec{G})&\leq \mathbb{E}^{\vec{r}\sim \mathcal{U}^{n}}_{b\sim\sigma}\Big[\frac{1}{2}\log\Bigl((2\pi e)^K \det(\Sigma[\vec{G}\left(b,\vec{r}\right)])\Bigr) \\&- \Delta(\Sigma_{X|Y}(y))\Big].
\end{aligned}
\]
Since $\Delta(\Sigma_{X|Y}(y))$, we obtain the second inequality.

\section{Proof of Proposition \ref{prop:static_subjective_prior}}

Proposition \ref{prop:static_subjective_prior} directly follows Lemma \ref{lemma:unique_posterior}. 
That is, we have that $\mathcal{L}^{\theta}_{\texttt{CEL}}(H^{*},G)$ achieves the minimum when $H^{*}$ represents the posterior distribution of the distribution represented by $G$.
That is, for all $H$,
\[
\mathcal{L}^{\theta}_{\texttt{CEL}}(H^{*}, G)\leq \mathcal{L}^{\theta}_{\texttt{CEL}}(H, G).
\]
Thus, it also holds for $H=H^{*}_{\sigma}$. \qed

\section{Proof of Proposition \ref{prop:Bayes_plausibility} }

Let $\rho_{G}:W\mapsto \Delta(\mathcal{X})$ denote the conditional probability given $G$.
Let $h_{\theta}:\mathcal{X}\mapsto \Delta(\{0,1\})^{K}$ denote the posteriors using $\theta$.
Let $x_{G}$ and $x_{q}$ denote the outputs by $G$ and $q$, respectively.
Given $q$, the refined posterior using both $q$ and $\rho_{G}$ can be constructed by
\[
\begin{aligned}
    h^{q}_{G}(b|x_{G}, x_{q})=\frac{q(x_{q}|b)g(x_{G}|b)\theta(b)}{\sum_{b'} q(x_{q}|b')g(x_{G}|b')\theta(b')}.
\end{aligned}
\]

For a fixed $x_{G}$, we have
\[
\begin{aligned}
    \mathbb{E}_{x_{q}}\left[\mathcal{H}\left(h^{q}_{G}(\cdot|x_{G}, x_{q})\right) \right]\leq \mathcal{H}\left(h_{\theta}(\cdot|x_{G})\right),
\end{aligned}
\]
where $\mathcal{H}$ is the Shannon entropy.
This is because additional conditioning reduces entropy.
By taking the expectation over $x_{G}$, we have
\[
\begin{aligned}
    \begin{aligned}
        &\mathbb{E}_{x_{G}}\left[\mathbb{E}_{x_{q}}\left[\mathcal{H}\left(h^{q}_{G}(\cdot|x_{G}, x_{q})\right) \right]\right] = \mathcal{L}^{\theta}_{\texttt{CEL}}(H^{*}_{\sigma}, G) \\
        &\leq \mathbb{E}_{x_{G}}\left[\mathcal{H}\left(h_{\theta}(\cdot|x_{G})\right)\right] = \mathcal{L}^{\theta}_{\texttt{CEL}}(H^{*}_{\theta}, G).
    \end{aligned}
\end{aligned}
\]
\qed

\section{Proof of Theorem \ref{thm:ordering_DP_BGP}}

Let $\rho:\mathcal{D}\mapsto \Delta(\mathcal{X})$ and $\rho':\mathcal{D}\mapsto \Delta(\mathcal{X})$ be the probability distributions induced by $G$ and $G'$, respectively.
We use $\rho(\cdot|b)=\rho(\cdot|D)$ and $\rho'(\cdot|b)=\rho'(\cdot|D)$ when the membership vector of $D$ is $b$.
When $G'$ is more Blackwell informative \cite{blackwell1951comparison,de2018blackwell} than $G$, we denote $\rho'\succeq_{\mathtt{B}} \rho$.
We say $g'\succeq_{\mathtt{B}} g'$ is more (Blackwell) informative than $g$ if there exists a stochastic process $\eta(x|x')$ such that $\rho(x|b)=\int_{x'}\eta(x|x')\rho'(x'|b)dx$.
In addition, let $\mathcal{I}^{\sigma}(B;X_{G})$ and $\mathcal{I}^{\sigma}(B;X_{G'})$ denote the mutual information under $G$ and $G'$, respectively, when the prior is $\sigma$.

We start by showing that \textbf{(1)} is equivalent to $\rho'\succeq_{\mathtt{B}} \rho$.

\paragraph{\textup{\textbf{Blackwell Order $\Rightarrow$ (1)}}}
Suppose that $\rho(x|b)=\int_{x'}\eta(x|x')\rho'(x'|b)dx$. By definition, there is a stochastic process $\eta$ such that $\rho(x|b)=\int_{x'}\eta(x|x')\rho'(x'|b)dx$. In other words, we have the Markov chain: 
\[
B\rightarrow X_{G} \rightarrow X_{G'}.
\]
By the data processing inequality, we have
\[
\mathcal{I}^{\sigma}(B;X_{G})\leq \mathcal{I}^{\sigma}(B;X_{G'}),
\]
for \textit{every} prior $\sigma\in \Theta$.

By definition, the mutual information can be expressed in terms of conditional entropy $\mathcal{H}^{\sigma}(B;X)$:
\[
\mathcal{I}^{\sigma}(B;X) = \mathcal{B} - \mathcal{H}^{\sigma}(B;X).
\]
Thus, we have
\[
\mathcal{H}^{\sigma}(B;X_{G'}) \leq \mathcal{H}^{\sigma}(B;X_{G}),
\]
for all $\sigma\in\Theta$.
Then, Lemma \ref{lemma:unique_posterior} implies $\mathcal{L}^{\sigma}{\texttt{CEL}}(G) \geq \mathcal{L}^{\sigma}{\texttt{CEL}}(G')$ for all $\sigma\in\Theta$.

\paragraph{\textup{\textbf{(1) $\Rightarrow$ Blackwell Order}}}
Suppose that we have $\mathcal{L}^{\sigma}{\texttt{CEL}}(G) \geq \mathcal{L}^{\sigma}{\texttt{CEL}}(G')$ for all $\sigma\in\Theta$.
Then, for all $\sigma\in\Theta$, we have
\[
\mathcal{I}^{\sigma}(B;X) = \mathcal{H}(B) - \mathcal{H}^{\sigma}(B;X),
\]
which implies $\mathcal{I}^{\sigma}(B;X_{G})\leq \mathcal{I}^{\sigma}(B;X_{G'})$.

The conditional entropy can be derived from the \textit{Bayes risk} under the logarithmic loss $\mathtt{L}(b,h|x) = -\log(h(b|x))$:
\[
\min_{h}\mathbb{E}^{\sigma}\left[-\log(h(B|X))\right] = \mathcal{H}^{\sigma}\left(B|X\right).
\]
So, If $\mathcal{H}^{\sigma}\left(B|X_{G}\right)\geq \mathcal{H}^{\sigma}\left(B|X_{G'}\right)$ for all $\sigma$, then the privacy strategy $G'$ yields a lower Bayes risk than $G$ under logarithmic loss for all priors.
The logarithmic loss is a strictly proper scoring rule, which means that the $h$ that coincides with the true posterior distribution uniquely minimizes the expected loss.
In addition, the dominance under the logarithmic loss for all priors implies dominance in all decision problems \cite{blackwell1951comparison,blackwell1953equivalent}.
Then, by Blackwell–Sherman–Stein Theorem \cite{blackwell1951comparison,sherman1951theorem,stein1951notes} (also, see Theorem 1 of  \cite{de2018blackwell}), we can obtain $\rho'\succeq_{\mathtt{B}} \rho$.

Newt, we introduce the notion of \textit{$f$-differential privacy} \cite{dong2022gaussian}. 
Let $D$ and $D'$ be any two adjacent datasets (i.e., $D\simeq D'$), differing in individual $k$'s data point. 

\begin{definition}[Trade-off Function (Definition 2.1 \cite{dong2022gaussian})]\label{def:trade_off_function}
    For any two probability distributions $P$ and $Q$ on the same space, define the trade-off function $T(P,Q):[0,1]\mapsto [0,1]$ as
    \[
    T(P,Q)(\alpha) = \inf\left\{\beta_{\phi}:\alpha_{\phi}\leq \alpha\right\},
    \]
    where the infimum is taken over all (measurable) test rules.
\end{definition}

\begin{definition}[$f$-Differential Privacy \cite{dong2022gaussian}]\label{def:f_DP}
    Let $f$ be a trade-off function. A mechanism $\mathcal{M}$ is said to be $f$-differentially private ($f$-DP) if 
    \[
    T(\mathcal{M}(D),\mathcal{M}(D'))\geq f,
    \]
    for all $D\simeq D'$.
\end{definition}

By Lemma \ref{lemma:app_f_DP} (shown later), we know that $\mathcal{M}(\cdot;G)$ using BNGP strategy $G$ is $f$-differentially private ($f$-DP).
Let $f$ and $f'$, respectively, be the corresponding f functions such that $\mathcal{M}(\cdot;G)$ is $f$-DP and $\mathcal{M}(\cdot;G')$ is $f$-DP.
Let $T(\cdot)$ be the trade-off function defined by Definition \ref{def:trade_off_function} in Appendix \ref{app:proof_relationship_DP}.
Thus, for any adjacent datasets $D\simeq D'$, $T(\mathcal{M}(D;G),\mathcal{M}(D';G))$ captures the \textit{minimum} Type-II errors for a chosen significance level $\alpha$.
From Theorem 2 of \cite{dong2022gaussian} (i.e., Theorem 10 of \cite{blackwell1951comparison}), we have, for all significance level $\alpha$,
\[
\begin{aligned}
    &\rho'\succeq_{\mathtt{B}} \rho \Longleftrightarrow \\&T(\mathcal{M}(D;G'),\mathcal{M}(D';G')) \leq T(\mathcal{M}(D;G),\mathcal{M}(D';G)).
\end{aligned}
\]
Thus, we also have that $\rho'\succeq_{\mathtt{B}} \rho$ is equivalent to $f'(\alpha)\leq f(\alpha)$ for all $\alpha$.
The equivalence between 1 and 2 in Theorem 1 of \cite{kaissis2024beyond}, we establish the equivalence of our \textbf{(1)} and \textbf{(2)}.

\paragraph{\textup{\textbf{(1) or (2) $\Rightarrow$ (3)}}}
$\overline{\mathtt{Adv}}^{\gamma}_{\sigma}(G)$ is defined by
\[
\overline{\mathtt{Adv}}^{\gamma}_{\sigma}(G)\equiv \max_{H}\mathtt{Adv}^{\gamma}_{\sigma}(H,G).
\]
Proposition \ref{prop:weighted_membership_advantage} implies that $\max_{H}\mathtt{Adv}^{\gamma}_{\sigma}(H,G)$ is equivalent to 
\[
\min_{H}\mathcal{L}^{\gamma,\sigma}_{\texttt{att}}(H, G).
\]
Thus, by Theorem \cite{kaissis2024beyond}, we obtain that $\rho'\simeq_{\mathtt{B}} \rho$ implies 
\[
\min_{H}\mathcal{L}^{\gamma,\sigma}_{\texttt{att}}(H, G)\geq \min_{H}\mathcal{L}^{\gamma,\sigma}_{\texttt{att}}(H, G'),
\]
which is equivalent to $\overline{\mathtt{Adv}}^{\gamma}_{\sigma}(G) \leq \overline{\mathtt{Adv}}^{\gamma}_{\sigma}(G')$, $\forall \sigma\in\Theta, \gamma\in(0,1]$.  

Since $\rho'\simeq_{\mathtt{B}} \rho$ $\Leftrightarrow$ \textbf{(1)} $\Leftrightarrow$ \textbf{(2)}, we can conclude the proof. 
\qed

\begin{lemma}\label{lemma:app_f_DP}
    Suppose $\mathcal{M}(\cdot;G)$, where $G$ is a BNGP strategy. Then, $\mathcal{M}(\cdot;G)$ satisfies $f$-differential privacy.
\end{lemma}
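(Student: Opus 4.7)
\textbf{Proof Proposal for Lemma \ref{lemma:app_f_DP}.}

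The plan is to exhibit an explicit trade-off function $f$ such that $\mathcal{M}(\cdot;G)$ satisfies $T(\mathcal{M}(D;G),\mathcal{M}(D';G))\geq f$ for every pair of adjacent datasets. The natural candidate is the worst-case (tight) trade-off function induced by the mechanism itself,
\begin{equation*}
f(\alpha)\;\equiv\;\inf_{D\simeq D'}\,T\bigl(\mathcal{M}(D;G),\,\mathcal{M}(D';G)\bigr)(\alpha),\qquad \alpha\in[0,1].
\end{equation*}
By construction, the inequality required by Definition \ref{def:f_DP} holds for every adjacent pair, so the entire argument reduces to verifying that $f$ is a bona fide trade-off function in the sense of Definition \ref{def:trade_off_function}, i.e.\ that $f$ arises as $T(P,Q)$ for some pair of distributions $(P,Q)$.

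First, I would argue that the BNGP strategy $G$ produces, for every fixed $D$, a well-defined probability measure $P_{D}=\mathcal{M}(D;G)$ on $\mathcal{X}$. This is because $G$ is realized as a neural-network generator $G(b,r)$ driven by the uniform noise $r\sim\mathcal{U}$, composed with the deterministic post-processing $\mathtt{R}$; hence $P_{D}$ is the pushforward of the Lebesgue measure on $[0,1]^{q}$ through a measurable map. Consequently, for each pair $D\simeq D'$, $T(P_{D},P_{D'})$ is a valid trade-off function by the Neyman--Pearson lemma: it is convex, continuous, non-increasing on $[0,1]$, and dominated by $1-\alpha$ (see Proposition 2.2 of Dong et al.\ 2022).

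Second, I would show that the pointwise infimum $f$ inherits these structural properties. Non-increasingness and the bound $f(\alpha)\leq 1-\alpha$ follow immediately from taking infima. For convexity, I would invoke the standard argument that replacing $f$ by its lower convex envelope $f^{**}$ yields another trade-off function that is still an $f$-DP certificate (since $T(P_{D},P_{D'})\geq f\geq f^{**}$). Alternatively, one can apply the symmetrization/convexification device of Dong et al.\ (their Proposition 2.4 and the discussion of the ``symmetric'' trade-off function), which guarantees that any pointwise infimum of trade-off functions can be replaced by a representable one without weakening the $f$-DP guarantee. Either route produces a valid trade-off function that the mechanism dominates.

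The only non-trivial step is the second one; everything else is definitional. The potential obstacle is showing that the infimum over the (possibly uncountable) family $\{T(P_{D},P_{D'}):D\simeq D'\}$ can be taken to be a trade-off function rather than merely an envelope. This is handled by the convex-envelope construction above, which is standard in the $f$-DP literature and does not rely on any special feature of BNGP beyond the measurability of $\mathcal{M}(\cdot;G)$. Once $f$ is established as a trade-off function, the lemma follows directly from Definition \ref{def:f_DP} and the defining inequality of $f$.
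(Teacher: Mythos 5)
Your proposal is correct and takes essentially the same route as the paper: both define $f$ as the pointwise infimum over adjacent pairs $D\simeq D'$ of the mechanism's own Neyman--Pearson trade-off functions and then observe that $f$-DP holds by construction. You are in fact slightly more careful than the paper, which simply asserts ``it is clear that there exists a trade-off function'' without addressing whether the infimum of an uncountable family is itself representable as $T(P,Q)$; your convex-envelope remark fills that small gap.
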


\begin{proof}

Let $D$ and $D'$ be any two adjacent datasets (i.e., $D\simeq D'$), differing in individual $k$'s data point. 
The attacker in the scenarios of $f$-DP and DP aims to infer whether individual $k$'s data point is used by the mechanism or not. 
This can be formulated as a hypothesis-testing problem:
\begin{center}
    $H_0:$ $b_{k}=0$ versus $H_{1}:$ $b_{k}=1$.
\end{center}

Let $\phi_{k}(x)\in [0,1]$ be a test function (or rejection rule), so that $\phi_{k}(x)$ is the probability of rejecting $H_{0}$ upon observing $x$.
The significance level of the test is then defined by
\[
\alpha_{\phi_{k}} = \mathbb{E}\left[\phi_{k}(X)|b_{k}=0\right] = \int_{\mathcal{X}}\phi_{k}(x)\vec{\rho}(x|D, b_{k}=0)dx,
\]
where the expectation is taken under the null hypothesis $H_{0}$.

Define the likelihood ratio:
\[
\Lambda(x) = \frac{ \rho(x|D, b_{k}=1) }{ \rho(x|D, b_{k}=0) }.
\]
The Neyman-Pearson Lemma \cite{neyman1933ix} states that among all tests $\phi(x)$ with significance level $\alpha_{\phi_{k}}$ at most $\alpha$, the test that maximizes the power
\[
1-\beta^{\alpha}_{\phi_{k}}=\mathbb{E}\left[\phi_{k}(X)|b_{k}=1\right] = \int_{\mathcal{X}}\phi_{k}(x)\rho(x|D, b_{k}=1)dx,
\]
is given by the likelihood ratio test which rejects $H_{0}$ when $\Lambda(x) = \tau$.
That is, the optimal test is given by
\[
\phi^{*}_{k}(x) = \begin{cases}
    1, & \text{ if } \Lambda(x)>\tau,\\
    \mathtt{c}, & \text{ if } \Lambda(x) = \tau,\\
    0, & \text{ if }\Lambda(x)<\tau,
\end{cases}
\]
where $\mathtt{c}$ is chosen (if needed) to ensure that the significance level of the test is exactly $\alpha_{\phi^{*}_{k}}$.
Thus, our hypothesis testing problem has a uniformly most powerful (UMP) test specified by $\phi^{*}_{k}(x)$ for any $D\simeq D'$ differing in individual $k$'s data point, for all $k\in\mathcal{K}$.

For any $D\simeq D'$, define $f(D,D'):[0,1]\mapsto [0,1]$ by
\[
f(D,D')(\alpha)\equiv\inf_{k\in\mathcal{K}} \beta^{\alpha}_{\phi^{*}_{k}}. 
\]
It is clear that there exists a trade-off function for our hypothesis testing: for all $\alpha\in[0,1]$.
\[
f(\alpha) = \inf_{D\simeq D'} f(D,D')(\alpha).
\]
Thus, the mechanism $\mathcal{M}(\cdot;G)$ satisfies $f$-DP.

\end{proof}

\section{Proof of Proposition \ref{prop:error_bound_H}}

The neural network $H_{\lambda_{\mathtt{a}}}: \mathcal{X} \to [0,1]^K$ has $\mathtt{l}$ hidden layers, each of width $\mathtt{w} \geq 7K+1$. Each hidden layer applies the ReLU activation function, which is 1-Lipschitz. The final layer applies the sigmoid function coordinatewise, ensuring that the output values remain in $[0,1]$. Since the sigmoid function is Lipschitz with constant $L_\sigma = 0.25$, the overall Lipschitz constant of the network is
    \begin{equation*}
        L_{\text{network}} = 1^{\mathtt{l}} \cdot 0.25 = 0.25.
    \end{equation*}

By assumption, the posterior $\mu_{\lambda_{\mathtt{d}}}: \mathcal{X} \to \Delta(\{0,1\})^K$ has effective intrinsic dimension $\alpha_{\lambda_{\mathtt{d}}}$. This means that for any $\eta > 0$, there exists a mapping $F: \mathcal{X} \to \mathbb{R}^{\alpha_{\lambda_{\mathtt{d}}}}$ and a Lipschitz mapping $T: \mathbb{R}^{\alpha_{\lambda_{\mathtt{d}}}} \to \Delta(\{0,1\})^K$ such that
\begin{equation*}
    \sup_{x \in \mathcal{X}} \mathcal{W}_1(\mu_G(x), T(F(x))) \leq \eta.
\end{equation*}
This property ensures that the complexity of the target distribution $\mu_G(x)$ is governed by $\alpha_{\lambda_{\mathtt{d}}}$ rather than the ambient dimension $K$.

Standard universal approximation results for ReLU networks imply that a network with width $\mathtt{w}$ and depth $\mathtt{l}$ can approximate a Lipschitz function on a compact domain with an error that scales as the number of linear regions increases. Since a ReLU network of this size can produce up to $O(\mathtt{w}^2\mathtt{l})$ linear regions, there exists a neural network $H_{\lambda_{\mathtt{a}}}$ that approximates $T \circ F$ such that
    \begin{equation*}
        \sup_{x \in \mathcal{X}} \|H_{\lambda_{\mathtt{a}}}(x) - T(F(x))\| \leq C' (\mathtt{w}^{2} \mathtt{l})^{-1/\alpha_{\lambda_{\mathtt{d}}}},
    \end{equation*}
    where $C'$ depends on the intrinsic properties of $T \circ F$ but not on $\mathtt{w}$ or $\mathtt{l}$.

The final step is to convert this uniform approximation error into a Wasserstein-1 error bound. By the Lipschitz property of $H_{\lambda_{\mathtt{a}}}$, we obtain
\begin{equation*}
     \mathcal{W}_1(T(F(x)), H_{\lambda_{\mathtt{a}}}(x)) \leq 0.25 C' (\mathtt{w}^{2} \mathtt{l})^{-1/\alpha_{\lambda_{\mathtt{d}}}}.
\end{equation*}
Using the triangle inequality for the Wasserstein-1 metric,
\begin{equation*}
     \begin{aligned}
         &\mathcal{W}_1(\mu_G(x), H_{\lambda_{\mathtt{a}}}(x)) \\&\leq \mathcal{W}_1(\mu_G(x), T(F(x))) + \mathcal{W}_1(T(F(x)), H_{\lambda_{\mathtt{a}}}(x)).
     \end{aligned}
\end{equation*}
Since $T(F(x))$ approximates $\mu_G(x)$ arbitrarily well, the first term can be absorbed into the constant. Thus, we conclude that
\begin{equation*}
     \mathcal{W}_1(\mu_G(x), H_{\lambda_{\mathtt{a}}}(x)) \leq 0.25 \mathtt{C} (\mathtt{w}^{2} \mathtt{l})^{-1/\alpha_{\lambda_{\mathtt{d}}}},
\end{equation*}
where $\mathtt{C}$ is independent of $\mathtt{w}$ and $\mathtt{l}$ but depends on the intrinsic properties of $\mu_{\lambda_{\mathtt{d}}}$ and the low-dimensional representation.
\qed

\section{Experiment Details}\label{sec:app_Hyperparameters}

\subsection{Dataset}

Our experiments use three datasets: \textit{Adult dataset} (UCI Machine
Learning Repository), \textit{MNIST dataset}, and \textit{genomic dataset}.
The genomic dataset was provided by the 2016 iDASH Workshop on Privacy and Security \cite{tang2016idash}, derived from the 1000 Genomes Project \cite{10002015global}.
The genomic dataset used in our experiments was initially provided by the organizers of the 2016 iDash Privacy and Security Workshop \cite{tang2016idash} as part of their challenge on Practical Protection of Genomic Data Sharing Through Beacon Services. 
In this research, we follow \cite{venkatesaramani2021defending,venkatesaramani2023enabling} and employ SNVs from chromosome $10$ for a subset of $400$ individuals to construct the Beacon, with another $400$ individuals excluded from the Beacon. 
We use $800$ individuals with different numbers of SNVs of each individual on Chromosome $10$.
In the experiments, we randomly select $400$ individuals from the $800$ to constitute a dataset according to the uniform distribution. 
The experiments were conducted using an NVIDIA A40 48G GPU. PyTorch was used as the deep learning framework.

\subsection{Notes: Experiment Details for LRTs}

\begin{figure}[htb]
    \centering
    \begin{subfigure}[b]{0.32\textwidth}
        \includegraphics[width=\textwidth]{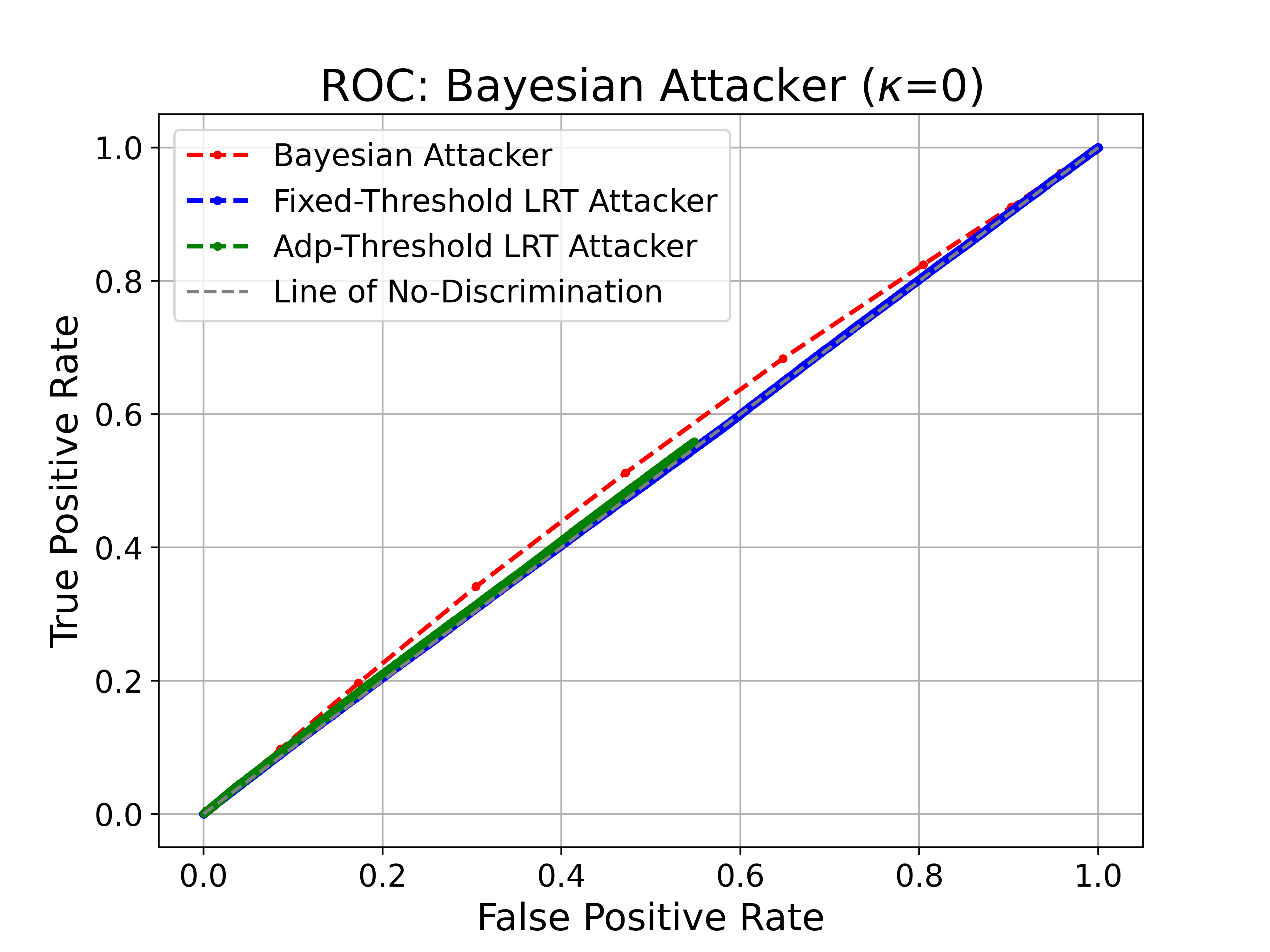}
        \caption{}
        \label{fig:fig1ap}
    \end{subfigure}
    \hfill 
    \begin{subfigure}[b]{0.32\textwidth}
        \includegraphics[width=\textwidth]{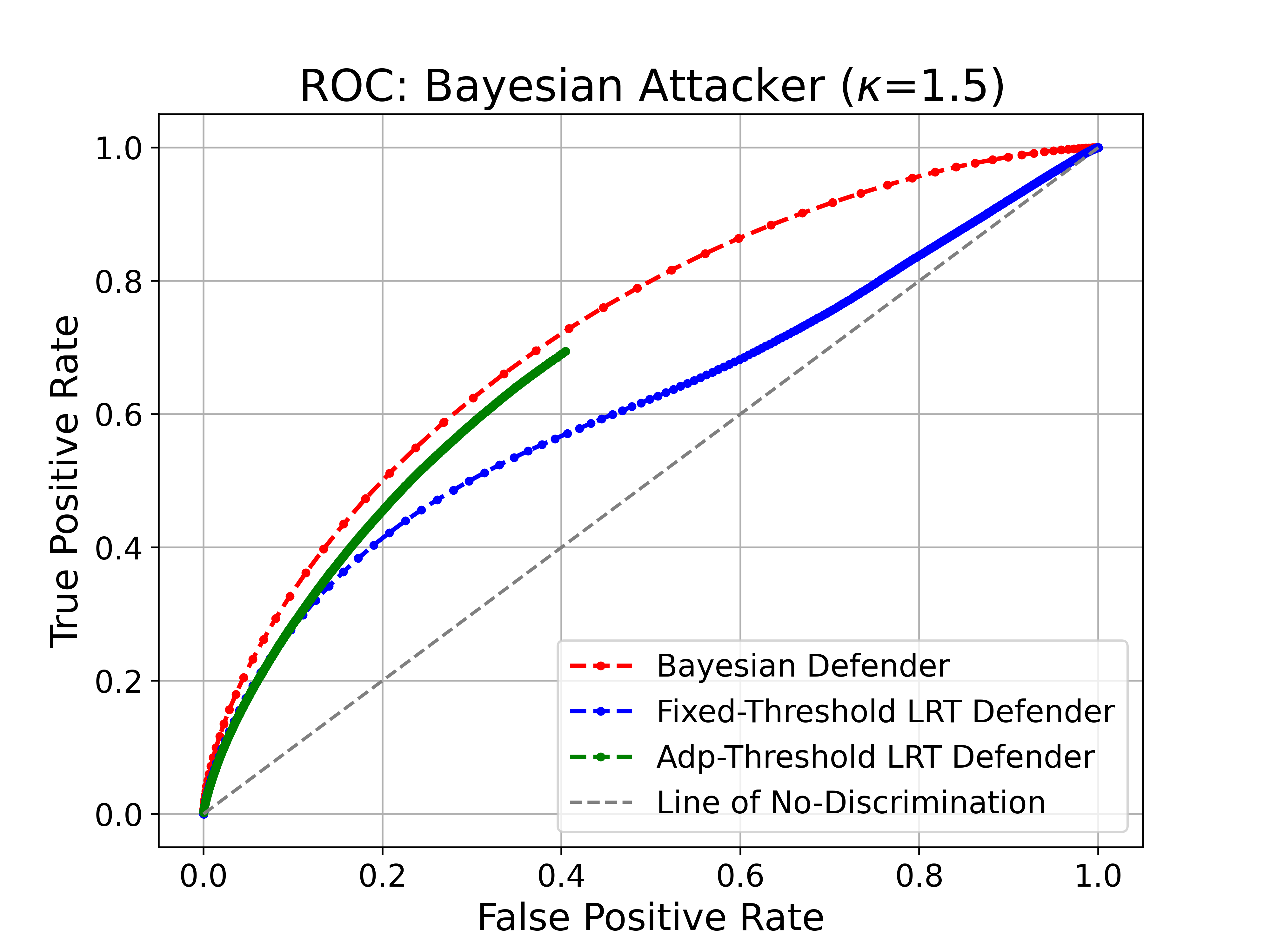}
        \caption{}
        \label{fig:fig2ap}
    \end{subfigure}
    \hfill
    \begin{subfigure}[b]{0.32\textwidth}
        \includegraphics[width=\textwidth]{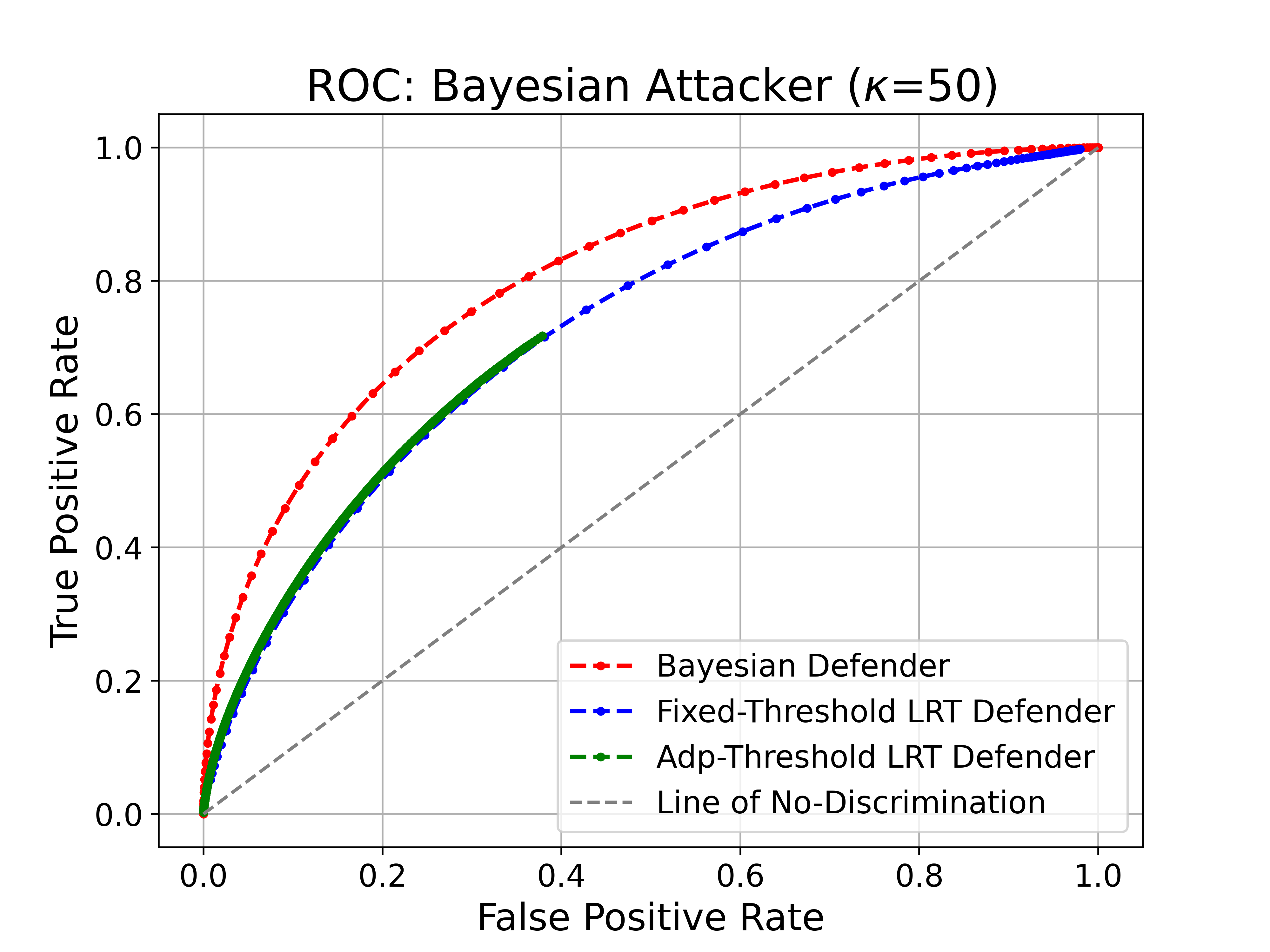}
        \caption{}
        \label{fig:fig3ap}
    \end{subfigure}
    \begin{subfigure}[b]{0.32\textwidth}
        \includegraphics[width=\textwidth]{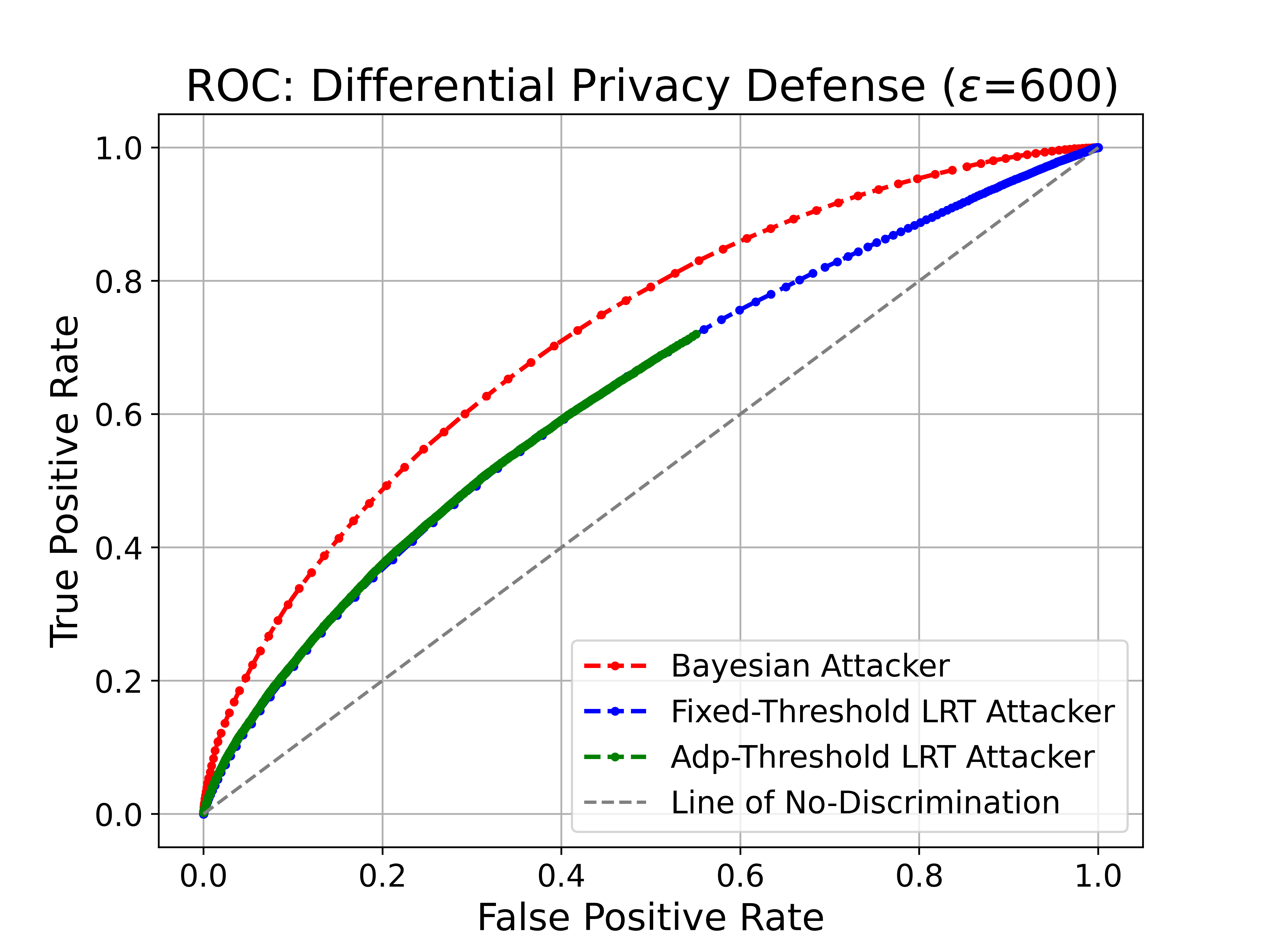}
        \caption{}
        \label{fig:fig4ap}
    \end{subfigure}
    \caption{(a)-(c): Bayesian Defender with $\kappa=0, 1.5, 50$, respectively. (d): Different attackers under non-strategic DP with $\epsilon=600$.}
    \label{fig:six_figuresap}
\end{figure}

The output of the defender's neural network \( G_{\lambda_{D}} \) is a noise term within the range \([-0.5, 0.5]\). We assess the strength of privacy protection using the attacker's ROC curve, converting \( H_{\lambda_{A}} \)'s output to binary values \( s_{k} \in \{0,1\} \) by varying thresholds. A lower AUC indicates stronger privacy protection by \( G_{\lambda_{D}} \). 
In addition to the proxies from Section \ref{sec:BNGP}, we use the sigmoid function to approximate the threshold-based rejection rule of the LRT. Specifically, \( \mathbf{1}\{\mathtt{lrs}(d_{k}, x)\leq\tau\} \) is approximated by \( 1/(1+\exp(-(\tau-\mathtt{lrs}(d_{k}, x)))) \), where \( \mathtt{lrs}(d_{k}, x) \) is the log-likelihood statistic. Similarly, the sigmoid function approximates \( \mathbf{1}\{\mathtt{lrs}(d_{k}, r)\leq \tau^{(N)}(r)\} \). The fixed- and adaptive-threshold LRT defenders optimally select \( g \) by solving (\ref{eq:nariveLRT}) and (\ref{eq:adaptiveLRT}), as detailed in Appendix \ref{app:existence_lrt_attack}.

\subsection{Notes: Bayesian Defender vs. Differential Privacy}\label{sec:app_naive_DP}
In this experiment (Figure \ref{fig:fig6}), we illustrate the advantages of the Bayesian defender (i.e., using the BNGP strategy) over standard DP in addressing defender-customized objectives for the privacy-utility trade-off, when the same utility loss is maintained in the trade-off of privacy and utility.

In this experiment, we consider a specific loss function for the defender:
\[
\ell_{D}(\delta, b, s) \equiv v(s, b) + \sum\nolimits_{j \in Q} \kappa_{j} |\delta_{j}|,
\]
where \(\kappa_{j} \geq 0\) represents the defender's preference for balancing the privacy-utility trade-off for the summary statistics of the \(j\)-th attribute (e.g., SNV in genomic data). In genomic datasets, each SNP position corresponds to a specific allele at a particular genomic location, and the importance of these positions can vary significantly depending on their association with diseases or traits in medical studies. Consequently, different SNPs may require varying levels of data quality and utility, necessitating less noise for some positions. For SNPs where higher data utility is crucial, we assign larger \(\kappa_{j}\) values to increase the weight of noise costs in the defender's decision-making process. This position-dependent weighting enables a more customized and refined privacy-utility trade-off.

In the experiment, we define \(\vec{\kappa} = (\kappa_{j})_{j \in Q}\) for SNV positions, where \(\kappa_{j} = 0\) for \(90\%\) of the \(5000\) SNVs and \(\kappa_{j} = 50\) for the remaining \(10\%\). The BNGP strategy in this setting results in an average utility loss of \(0.0001\).

The sensitivity of the summary statistics function \(f(\cdot)\) is given by \(\textup{sensitivity} = \frac{m}{K^{\dagger}}\) (see Appendix \ref{sec:app_differential_privacy}), where \(m = |Q|\) and \(1 \leq K^{\dagger} \leq K\) is the number of individuals in \(U\) included in the dataset. For the experiment, the dataset comprises \(400\) individuals, each with \(5000\) SNVs, resulting in a sensitivity of \(\frac{m}{K^{\dagger}} = 12.5\).

The scale parameter of the Laplace distribution in the DP framework is:
\[
\frac{\textup{sensitivity}}{\epsilon}.
\]

To match the utility loss of \(0.0001\) (measured as the expected absolute value of the noise), the scale parameter must equal \(0.0001\). This implies:
\[
\frac{\textup{sensitivity}}{\epsilon} = 0.0001,
\]
which gives \(\epsilon = 1.25 \times 10^{5}\).

In general, the number of SNVs (\(m\)) is often much larger than the number of individuals (\(K\)), i.e., \(m \gg K\). Consequently, small \(\epsilon\) values (e.g., between \(1\) and \(10\)) result in very large scale parameters for the Laplace distribution. Therefore, relatively large \(\epsilon\) values are chosen to preserve the utility of genomic datasets. For example, in \cite{venkatesaramani2023enabling}, the values of \(\epsilon\) are selected from \(\{10,000, 50,000, 100,000, 500,000,\) \(1 \textup{ million}, 5 \textup{ million}, 10 \textup{ million}\}\).

\subsection{Notes: Score-Based Attacker: }

Figure \ref{fig:fig4x} compares attackers under the defense trained against the score-based attacker. The Bayesian attacker significantly outperforms the others, achieving near-perfect classification, while LRT and adaptive LRT perform similarly but lag behind. 
As explained in \cite{dwork2015robust}, the score-based attacker is assumed to have less external information and knowledge than the Bayesian, the fixed-threshold LRT, and the adaptive LRT attackers. 
Theoretically, the score-based attacker uses $\mathcal{O}(n^{2}\log(n))$ SNVs, where $n$ is the number of individuals in the dataset.
In the experiments for Figure \ref{fig:fig4x}, to guarantee certain accuracy for the score-based attacker, we consider $20$ individuals and each time a dataset of $5$ individuals with $4000$ SNVs being randomly sampled. In this setting, the Bayesian attacker performs very well (with AUC close to $1$).

\subsection{Notes: Adult and MNIST Dataset}\label{sec:app:adult_mnist}

\paragraph{Adult Dataset }
In the experiments using the Adult dataset, the original mechanism $f$ (i.e., without privacy protection) releases the summary statistics of the Adult dataset. Specifically, we turn the attributes of the Adult dataset into binary values to simplify the representation of categorical and continuous attributes. For example, categorical attributes like ``occupation" or ``education level" are one-hot encoded, while continuous attributes like ``age" are discretized into binary intervals. This binary transformation allows us to construct a dataset that represents the presence or absence of specific attribute values, making it compatible with our framework for privacy protection and utility optimization.
The summary statistics released include the counts or proportions of individuals possessing specific binary attributes. These statistics form the basis for evaluating the membership inference risks and utility trade-offs in our experiments. By using this transformed representation, we ensure the methodology aligns with the assumptions of our privacy-utility trade-off framework.

\paragraph{MNIST Dataset}
For the MNIST dataset, the original mechanism is a trained classifier that outputs predicted class probabilities for given input images. Specifically, this classifier is trained on the MNIST training set to perform digit recognition, mapping each image to a probability distribution over the 10-digit classes (0 through 9).
In our experiments, we consider the privacy of the test data (or inference dataset) used to query the classifier. The attacker aims to infer whether a specific test image belongs to the inference dataset based on the output probabilities provided by the classifier.

\subsection{Notes: BGBP Response}

The BGBP response acts as a constraint for the \textit{defender} since the defender's choice of \( G \) induces \( H \), which (1) represents the attacker's best response, and (2) satisfies the conditions defined by \( \mathtt{PH}[\vec{G}; \epsilon] \). The attacker, however, simply responds optimally to the defender's choice of \( G \). To incorporate the conditions for \( H \) set by \( \mathtt{PH}[\vec{G}; \epsilon] \), we apply the penalty method to the defender's loss function. 
In our experiments, we relax the strict pure differential privacy framework and focus on a class of neural networks \( G \) that select \( \epsilon \) for a Gaussian distribution \( \mathcal{N}(0, \mathtt{Var}(\epsilon)) \), where \( \mathtt{Var}(\epsilon) = \mathtt{C}/\epsilon^2 \) and \( \mathtt{C} \) is a fixed constant. For the \textit{composition} of five mechanisms, four are pre-designed with noise perturbation using \( \mathcal{N}(0, \mathtt{Var}(\epsilon)) \). The defender's neural network \( G_5 \) selects \( \epsilon \) for the fifth mechanism, constrained by the BGBP response with \( \mathtt{BDP}[\vec{G}; 5\epsilon] \). 
We evaluate whether the target \( 5\epsilon \) can be approximately achieved if the attacker's performance aligns closely with that of a single \( 5\epsilon \)-DP mechanism (\textit{One Mechanism}), where the single \( 5\epsilon \)-DP mechanism is also perturbed by Gaussian noise \( \mathcal{N}(0, \mathtt{Var}(5\epsilon)) \). Our experimental results demonstrate that the BNGP strategy, constrained by BGBP response, successfully implements parameterized privacy in a generative manner.

\subsection{Additional Experiment: }

As shown in Figure \ref{fig:six_figuresap},  the privacy strength of the defense decreases (resp. increases) as $\kappa$ increases (resp. decreases), as we would expect, since $\kappa$ captures the tradeoff between privacy and utility.
Figure \ref{fig:fig4ap} demonstrates the performances of the Bayesian, fixed-threshold, and adaptive-threshold attackers under $\epsilon$-DP defense where $\epsilon=600$.
Large $\epsilon$ values are chosen because datasets with an extremely high number of variables require correspondingly larger epsilon parameters to avoid introducing prohibitive amounts of noise that would render the analysis unusable. \cite{venkatesaramani2023enabling}.
Similar to the scenarios under the Bayesian defense, the Bayesian attacker outperforms the LRT attackers under the $\epsilon$-DP.

\subsection{Network Configurations and Hyperparameters}

The \textbf{Defender} neural network is a generative model designed to process membership vectors and produce beacon modification decisions. The input layer feeds into two fully connected layers with batch normalization and activation functions applied after each layer. The first hidden layer uses ReLU activation, while the second hidden layer uses LeakyReLU activation. The output layer applies a scaled sigmoid activation function. The output of the Defender neural network is a real value between -0.5 and 0.5, which is guaranteed by the scaled sigmoid activation function.
All Defender neural networks were trained using the Adam optimizer with a learning rate of 0.001, weight decay of $0.00001$, and an ExponentialLR scheduler with a decay rate of $0.988$.

The \textbf{Attacker} neural network is a generative model designed to process beacons and noise to produce membership vectors. The input layer feeds into two fully connected layers with batch normalization and activation functions. The first hidden layer uses ReLU activation. The output layer applies a sigmoid activation function.
All Attacker models were trained using the Adam optimizer, a learning rate of $0.0001$, weight decay of $0.00001$, and an ExponentialLR scheduler with a decay rate of $0.988$.

The specific configurations for each model are provided in the tables below. Table \ref{Table:defender_scalar} shows the configurations of the neural network Defender under the Bayesian, the fixed-threshold, and the adaptive-threshold attackers when the trade-off parameter $\kappa$ is a vector (i.e., each $\kappa_j=\kappa$ for all $j\in Q$).
Table \ref{Table:defender_vector} shows the configurations of Defender when the trade-off parameter is a vector; i.e.,  $\vec{\kappa}=(\kappa_{j})_{j\in Q}$ where $\kappa_{j}=0$ for the $90\%$ of $5000$ SNVs and $\kappa_{j}=50$ for the remaining $10\%$.
Table \ref{Table:attacker_defender} lists the configurations of the neural network Attacker under the Bayesian, the fixed-threshold LRT, and the adaptive-threshold LRT defenders.
Table \ref{Table:attacker_DP}
lists the configurations of Attacker under the standard $\epsilon$-DP which induces the same $\vec{\kappa}$-weighted expected utility loss for the defender.

\begin{table}[H]
\centering
\caption{Bayesian Defender Configurations}\label{Table:defender}
\begin{minipage}{0.45\textwidth}
\centering
\subcaption{Defender with scalar $\kappa$}\label{Table:defender_scalar}
\begin{tabular}{|l|c|c|}
\hline
Layer            & Input Units & Output Units \\ \hline
Input Layer      & 830         & 1500         \\ \hline
Hidden Layer 1   & 1500        & 1100         \\ \hline
Hidden Layer 2   & 1100        & 500          \\ \hline
Output Layer     & 500         & 5000         \\ \hline
\end{tabular}
\end{minipage}
\hspace{0.05\textwidth}
\begin{minipage}{0.45\textwidth}
\centering
\subcaption{Defender with vector $\vec{\kappa}$}\label{Table:defender_vector}
\begin{tabular}{|l|c|c|}
\hline
Layer            & Input Units & Output Units \\ \hline
Input Layer      & 830         & 1000         \\ \hline
Hidden Layer 1   & 1000        & 3000         \\ \hline
Hidden Layer 2   & 3000        & 4600         \\ \hline
Output Layer     & 4600        & 5000         \\ \hline
\end{tabular}
\end{minipage}
\end{table}

\begin{table}[H]
\centering
\caption{Attacker Configurations}
\begin{subtable}[t]{0.45\textwidth}
\centering
\caption{Attacker vs. Defender}\label{Table:attacker_defender}
\begin{tabular}{|l|c|c|}
\hline
Layer            & Input Units & Output Units \\ \hline
Input Layer      & 5000        & 3400         \\ \hline
Hidden Layer 1   & 3400        & 2000         \\ \hline
Output Layer     & 2000        & 800          \\ \hline
\end{tabular}
\end{subtable}
\hspace{0.05\textwidth}
\begin{subtable}[t]{0.45\textwidth}
\centering
\caption{Bayesian Attacker vs. $\epsilon$-DP}\label{Table:attacker_DP}
\begin{tabular}{|l|c|c|}
\hline
Layer            & Input Units & Output Units \\ \hline
Input Layer      & 5000        & 3000         \\ \hline
Hidden Layer 1   & 3000        & 1000         \\ \hline
Output Layer     & 1000        & 800          \\ \hline
\end{tabular}
\end{subtable}
\end{table}

\subsection{AUC Values of ROC Curves with Standard Deviations}

Tables \ref{Table:AUC_different_attacks}, \ref{Table:AUC_different_defender}, and \ref{Table:implement_dp} show the AUC values of the ROC curves shown in the plots of the experiments.

\clearpage

\begin{table*}[!htbp]
\centering
\caption{AUC Values For Different Attackers Under Varying $\kappa$}
\label{Table:AUC_different_attacks}
\begin{tabular}{@{}lccc@{}}
\toprule
Attacker           & Figure 2a ($\kappa=0$) & Figure 1a and 2b ($\kappa=1.5$) & Figure 2c ($\kappa=50$) \\ \midrule
Bayesian attacker  & $0.5205 \pm 0.0055$    & $0.7253 \pm 0.0069$      & $0.8076 \pm 0.0040$     \\
Fixed-Threshold LRT attacker   & $0.5026 \pm 0.0062$    & $0.6214 \pm 0.0322$      & $0.7284 \pm 0.0089$     \\
Adaptive-Threshold LRT attacker   & $0.1552 \pm 0.0100$    & $0.1716 \pm 0.0144$      & $0.1719 \pm 0.0174$     \\ \bottomrule
\end{tabular}
\end{table*}

\begin{table*}[!htbp]
\centering
\caption{AUC Values of Attackers For Figures 1b to 1h}
\label{Table:AUC_different_defender}
\begin{tabular}{@{}cccc@{}}
\toprule
Figure & Scenarios             & AUC $\pm$ std        & Condition \\ \midrule
\multirow{3}{*}{1b} & Under Bayesian Defender            & $0.7237 \pm 0.0066$ & $\kappa = 1.5$ \\
                    & Under Fixed-threshold LRT Defender & $0.9124 \pm 0.0026$ & $\kappa = 1.5$ \\
                    & Under Adaptive-threshold LRT Defender   & $0.7487 \pm 0.0027$ & $\kappa = 1.5$ \\ 
\midrule
\multirow{2}{*}{1c} & Under Bayesian Defender            & $0.5318 \pm 0.0222$ &   $\vec{\kappa}$\\
                    & Under $\epsilon$-DP Defender       & $0.9153 \pm 0.0025$ &  $\vec{\kappa}$\\
\midrule
\multirow{4}{*}{1d} & Bayesian Attacker            & $0.5600 \pm 0.0040$ & $\kappa = 1.5$ \\
                    & Fix-LRT Attacker             & $0.5287 \pm 0.0052$ & $\kappa = 1.5$\\
                    & Adp-LRT Attacker             & $0.1431 \pm 0.0120$ & $\kappa = 1.5$ \\
                    & Score-Based Attacker & $0.1267 \pm 0.0207$ & $\kappa = 1.5$\\
\midrule
\multirow{4}{*}{1e} & Bayesian Attacker            & $0.6317 \pm 0.0050$ &  \\
                    & Fix-LRT Attacker             & $0.5865 \pm 0.0060$ & \\
                    & Adp-LRT Attacker             & $0.1722 \pm 0.0752$ &  \\
                    & Score-Based Attacker & $0.1223 \pm 0.0170$ & \\
\midrule
\multirow{4}{*}{1f} & Bayesian Attacker            & $0.5868 \pm 0.0035$ &  \\
                    & Fix-LRT Attacker             & $0.5615 \pm 0.0065$ & \\
                    & Adp-LRT Attacker             & $0.2076 \pm 0.0160$ &  \\
                    & Score-Based Attacker & $0.1229 \pm 0.0028$ & \\
\midrule
\multirow{4}{*}{1g} & Bayesian Attacker            & $1 \pm 0$ &  \\
                    & Fix-LRT Attacker             & $0.8618 \pm 0.0019$ & \\
                    & Adp-LRT Attacker             & $0.2221 \pm 0.0106$ &  \\
                    & Score-Based Attacker & $0.1542 \pm 0.0227$ & \\
\midrule
\multirow{3}{*}{1h} & Bayesian Attacker            & $0.7422 \pm 0.0085$ &  $\kappa = 1.5$\\
                    & Decision-Tree Attacker             & $0.6609 \pm 0.0110$ & $\kappa = 1.5$\\
                    & SVM Attacker             & $0.5226 \pm 0.0108$ &  $\kappa = 1.5$\\
\bottomrule
\end{tabular}
\end{table*}

\begin{table*}[!htbp]
\centering
\caption{AUC Values of Figure 1i}
\label{Table:implement_dp}
\begin{tabular}{@{}cc@{}}
\toprule
Scenarios             & AUC $\pm$ std         \\ \midrule
Composition ($\epsilon = 0.05$)            & $0.7387 \pm 0.0050$  \\
One Mechanism ($\epsilon = 0.05$) & $0.7427 \pm 0.0063$  \\
\midrule
Composition ($\epsilon = 0.1$)   & $0.8033 \pm  0.0057$ \\ 
One Mechanism ($\epsilon = 0.1$)  & $0.8241 \pm   0.0035$\\  
\midrule
Composition ($\epsilon = 0.3$)            & $0.8921\pm  0.0037 $  \\
One Mechanism ($\epsilon = 0.3$) & $0.9018 \pm 0.0033$  \\
\midrule
Composition ($\epsilon = 0.6$)   & $0.9108 \pm 0.0032$ \\ 
One Mechanism ($\epsilon = 0.6$)  & $0.9201 \pm 0.0032$\\ 
\midrule
Composition ($\epsilon = 1$)            & $0.9318 \pm 0.0031 $  \\
One Mechanism ($\epsilon = 1$) & $0.9373 \pm 0.0030$ \\
\bottomrule
\end{tabular}
\end{table*}

\section{Existing Frequentist Attack Models}\label{app:existence_lrt_attack}

\paragraph{Likelihood Ratio Test Attacks}

MIAs targeting genomic summary data releases are often framed as hypothesis testing problems \cite{sankararaman2009genomic,shringarpure2015privacy,raisaro2017addressing,venkatesaramani2021defending,venkatesaramani2023enabling},
where for each individual $k\in \mathcal{K}$, the attacker tests $H^{k}_{0}: b_{k}=1$ (i.e., the individual $k$ is in the dataset) versus $H^{k}_{1}: b_{k}=0$ (i.e., the individual $k$ is not).
Additionally, $\bar{p}_{j}$ denotes the frequency of the alternate allele at the $j$-th SNV in a reference population that is not included in the dataset $D$.

First, assume $\delta=0$. The attacker is assumed to have external knowledge of the genomic data for individuals $[K]$, in the form of $\bar{p}=(\bar{p}_{j})_{j\in Q}$ and $d=(d_{kj})_{k\in [K], j\in Q}$. 
The \textit{log-likelihood ratio statistic} (LRS) for each individual $k$ is given by \cite{sankararaman2009genomic}:
\begin{equation*}
    \begin{aligned}
        \mathtt{lrs}(d_{k}, x) = \sum\nolimits_{j\in Q}\left( d_{kj} \log\frac{\bar{p}_{j}}{x_{j}} + (1-d_{kj}) \log\frac{1-\bar{p}_{j}}{1-x_{j}}\right).
    \end{aligned}
\end{equation*}
An \textit{LRT attacker} performs MIA by testing $H^{k}_{0}$ against $H^{k}_{1}$ using $\mathtt{lrs}(d_{k}, x)$ for each $k\in[K]$.
The null hypothesis $H^{k}_{0}$ is rejected in favor of $H^{k}_{1}$ if $\mathtt{lrs}(d_{k}, x)\leq\tau$ for a threshold $\tau$, and $H^{k}_{0}$ is accepted if $\mathtt{lrs}(d_{k}, x)>\tau$.

Let $P^{k}_{0}(\cdot)\equiv\textup{Pr}(\cdot|H^{k}_{0})$ and $P^{k}_{1}(\cdot)\equiv\textup{Pr}(\cdot|H^{k}_{1})$ denote the probability distributions under $H_0$ and $H_1$, respectively.
\begin{definition}[$(\alpha_{\tau},\beta_{\tau})$-LRT Attack]
    An attacker performs \textup{$(\alpha_{\tau},\beta_{\tau})$-LRT Attack} if $P^{k}_{0}(\mathtt{lrs}(d_{k}, x)\leq\tau)= \alpha_{\tau}$ and $1-P^{k}_{1}(\mathtt{lrs}(d_{k}, x)\leq\tau)=\beta_{\tau}$, for all $k\in U$, where $\alpha_{\tau}$ is the \textup{significance level} and $1-\beta_{\tau}$ is the \textup{power} of the test with the \textup{threshold} $\tau$.
\end{definition}

Define the trade-off function \cite{dong2021gaussian}, 
\[
T[P^{k}_{0}, P^{k}_{1}](\alpha) \equiv \inf_{\tau}\{\beta_{\tau}: \alpha_{\tau}\leq \alpha\}.
\]
By Neyman-Pearson lemma \cite{neyman1933ix}, the LRT test is the uniformly most powerful (UMP) test for a given significance level. 
Specifically, for a given $\alpha_{\tau}$, there exists a threshold $\tau^{*}$ such that no other test with $\alpha_{\tau}\leq \alpha_{\tau^{*}}$ can achieve a strictly smaller $\beta_{\tau}<\beta_{\tau^{*}}$.
Hence, $T[P^{k}_{0}, P^{k}_{1}](\alpha_{\tau^{*}})=\beta_{\tau^{*}}$, for all $k\in \mathcal{K}$.
We refer to an $\alpha$-LRT as a UMP $(\alpha, \beta)$-LRT and will interchangeably add or omit the corresponding threshold notation as needed.

From the vNM defender's perspective, the expected privacy losses under an $\alpha$-LRT attack, without and with defense $g$, respectively, are given by
\[
\begin{aligned}
    &L^{o}(\tau^{o}, \alpha)\equiv\mathbb{E}\left[v(\tilde{s}, \tilde{b})\middle|\alpha\right]\\
    &= \sum\nolimits_{k}P^{k}_{1}\left[y_{k}(f(b,z),\tau^{o})=1\right] \theta(b_{k}=1)\\
    &=\sum\nolimits_{k}(1-\beta_{\tau^{o}})\theta(b_{k}=1),
\end{aligned}
\]
and
\[
\begin{aligned}
&L(g,\tau^{o},\alpha)\equiv\mathbb{E}\left[v(\tilde{s}, \tilde{b})\middle|g,\tau^{o},\alpha\right]\\
&= \sum\nolimits_{k}P^{k}_{1}\left[y_{k}(r,\tau^{o})=1\middle|g\right] \theta(b_{k}=1),
\end{aligned}
\]
where $y_{k}(x,\tau^{o})\equiv \mathbf{1}\left\{\mathtt{lrs}(d_{k}, x)\geq \tau^{o}\right\}$ is the indicator function for the likelihood ratio statistic, and 
\[
P^{k}_{1}[y_{k}(r,\tau^{o})=1|g]\equiv \int_{r} \mathbf{1}\left\{y_{k}(r,\tau^{o})=1\right\} \rho_{D}(r|b)dr.
\]
Here, $\tau^{o}$ is the threshold associated with the $\alpha$-LRT.
where $y_{k}(x,\tau^{o})\equiv \mathbf{1}\left\{\mathtt{lrs}(d_{k}, x)\geq \tau^{o}\right\}$ is the indicator function for the likelihood ratio statistic, and 
\[
P^{k}_{1}[y_{k}(r,\tau^{o})=1|g]\equiv \int_{r} \mathbf{1}\left\{y_{k}(r,\tau^{o})=1\right\} \rho_{D}(r|b)dr.
\]
Here, $\tau^{o}$ is the threshold associated with the $\alpha$-LRT.

\paragraph{Fixed-Threshold LRT Attack \cite{sankararaman2009genomic, shringarpure2015privacy, venkatesaramani2021defending, venkatesaramani2023enabling}}
A \textit{fixed(-threhsold)} LRT attacker performs MIA without accounting for any privacy defense strategies. Such an attacker selects a fixed threshold $\tau^{o}$ that balances Type-I and Type-II errors, resulting in a UMP $\alpha$-LRT test in the absence of defense. This approximation can be achieved by simulating Beacons on publicly available datasets or synthesized data using alternate allele frequencies (AAFs) \cite{venkatesaramani2023enabling}.

Given a fixed threshold $\tau^{o}$, let
\[
L^{\alpha}_{\textup{Fixed}}(g) \equiv L(g, \tau^{o},\alpha_{\tau^{o}})
\]

The defender’s optimal strategy against the naive $\alpha_{\tau^{o}}$-LRT attack is given by solving:
\begin{equation}\tag{\texttt{FixedLRT}}\label{eq:nariveLRT}
    \min\nolimits_{g} L^{\alpha}_{\textup{Fixed}}(g)  + \kappa \mathbb{E}\left[\ell_{U}(\|\delta\|_{\mathtt{p}})\|\middle|g, \tau^{o}, \alpha_{\tau^{o}}\right],
\end{equation}
where $\mathbb{E}\left[\ell_{U}(\|\delta\|_{\mathtt{p}})\|\middle|g, \tau^{o}, \alpha_{\tau^{o}}\right]$ is induced expected utility loss.

Let $\beta^{k}(\tau, g, \alpha)\equiv 1-P^{k}_{1}[y_{k}(r,\tau)=1|g]$ denote the actual Type-II error under the defense strategy $g$ for the naive $\alpha$-LRT attack.
The defender can reduce privacy loss by choosing $g$ to increase $\beta^{k}(\tau, g, \alpha)$ for all $k\in U$.
For the defense strategy $g^{\dagger}_{D}$ that solves (\ref{eq:nariveLRT}) to be effective in reducing privacy loss, it must be implemented in a \textit{stealthy} manner.

\paragraph{Adaptive-Threshold LRT Attack \cite{venkatesaramani2021defending,venkatesaramani2023enabling}}  
In an \textit{adaptive-threshold LRT} attack, the attacker is aware of the defense strategy and attempts to distinguish individuals in \( \mathcal{K} \) from those in the reference population \( \bar{\mathcal{K}} \) (individuals not in \( \mathcal{K} \)). Let \( \bar{\mathcal{K}}^{(N)} \subset \bar{\mathcal{K}} \) represent the set of \( N \) individuals in \( \bar{\mathcal{K}} \) with the lowest LRS values. The \textit{adaptive threshold} is defined as \( \tau^{(N)}(x) = \frac{1}{N} \sum_{i \in \bar{\mathcal{K}}^{(N)}} \mathtt{lrs}(d_{i}, x) \). The null hypothesis \( H_{0} \) is rejected if \( \mathtt{lrs}(d_{k}, x) \leq \tau^{(N)}(x) \).

The defender's problem is then:
\[
    \min_{g} L^{\alpha_{\tau^{(N)}(x)}}_{\textup{Adp}}(g) + \mathbb{E}\left[\ell_{U}(\|\delta\|_{\mathtt{p}})\|\middle|g, \tau^{(N)}(x), \alpha_{\tau^{(N)}(x)}\right],
\tag{\texttt{AdaptLRT}} \label{eq:adaptiveLRT}
\]
where \( \alpha_{\tau^{(N)}(x)} \) is the Type-I error associated with the adaptive threshold \( \tau^{(N)}(x) \), and
\[
L^{\alpha_{\tau^{(N)}(x)}}_{\textup{Adp}}(g) \equiv L(g, \tau^{(N)}(x), \alpha_{\tau^{(N)}(x)}).
\]
The WCPL $L^{\alpha_{\tau^{(N)}(x)}}_{\textup{Adp}}(g)$ in Section \ref{sec:case_study} has \( \mathbb{E}[\alpha_{\tau^{(N)}(\tilde{r})}] = \alpha \).

\paragraph{Optimal LRT Attack}  
Let \( P^{k}_{0}(g) = P^{k}_{0}[\cdot | g] \) and \( P^{k}_{1}(g) = P^{k}_{1}[\cdot | g] \) denote the probability distributions under \( H^{k}_{0} \) and \( H^{k}_{1} \), respectively, in the presence of defense \( g \). The \textit{worst-case privacy loss} (WCPL) for the defender occurs when the attacker’s hypothesis test achieves \( \beta^{k}(\tau^{*}, g, \alpha) = T[P^{k}_{0}(g), P^{k}_{1}(g)](\alpha) \) for some threshold \( \tau^{*} \), corresponding to a UMP test under \( g \). We refer to these as \textit{optimal \( \alpha \)-LRT attacks}.

The defender’s optimal strategy against such attacks solves the following problem:
\begin{equation}\tag{\texttt{OptLRT}} \label{eq:optimalLRT}
    \begin{aligned}
    &\min_{g} L^{\alpha}_{\textup{Opt-LRT}}(g) + \kappa \mathbb{E}\left[\ell_{U}(\|\delta\|_{\mathtt{p}})\|\middle|g_D, \tau^{*}, \alpha\right], \quad \\
    &\text{s.t.} \quad \beta^{k}(\tau^{*}, g, \alpha) = T[P^{k}_{0}(g), P^{k}_{1}(g)](\alpha),
\end{aligned}
\end{equation}
where 
\[
L^{\alpha}_{\textup{Opt-LRT}}(g) \equiv L(g, \tau^{*}, \alpha).
\]
By the Neyman-Pearson lemma, the \( \alpha \)-LRT with likelihood ratio statistics 
\[
\mathtt{lrs}(d_{k}, r; g) \equiv \sum_{j \in Q} \frac{\rho_{D}(r | b_{k} = 0, b_{-k})}{\rho_{D}(r | b_{k} = 1, b_{-k})},
\]
for all \( k \in \mathcal{K} \) is optimal, attaining \( \beta^{k}(\tau^{*}, g, \alpha) = T[P^{k}_{0}(g), P^{k}_{1}(g)](\alpha) \).

Furthermore, the defense \( g \) obtained by solving (\ref{eq:optimalLRT}) is robust against adaptive-threshold LRT attacks.

\vfill

\end{document}